\documentclass[12pt]{article}

\usepackage{graphicx}
\usepackage{xcolor}
\usepackage{microtype}
\DeclareGraphicsExtensions{.pdf,.png,.jpg}
\usepackage{caption}
\usepackage{subcaption}
\usepackage{float}
\usepackage{needspace}

\usepackage{amsmath}
\usepackage{amssymb}
\usepackage{units}

\makeatletter
\newcommand{\prism@neq}[2]{\ooalign{$#1\m@th=$\cr\hidewidth$#1\m@th/$\hidewidth\cr}}
\renewcommand{\neq}{\mathrel{\mathpalette\prism@neq\relax}}

\makeatother

\newtheorem{theorem}{Theorem}
\newtheorem{proposition}{Proposition}
\newtheorem{corollary}{Corollary}

\newtheorem{remark}{Remark}

\newtheorem{observation}{Observation}

\usepackage[margin=2.7cm]{geometry}

\usepackage[english]{babel}

\newenvironment{proof}[2]{\noindent {\bf #1 #2.}\par}{\hfill\ensuremath{\square}\par\medskip\normalsize\noindent}

 \author{Constantine Sorokin\thanks{University of Glasgow.}\and Alexander Matros\thanks{University of South Carolina.}
}  
 
 \date{2026}

\title{\textbf{Suppression and Empowerment in Contests}}

\begin{document}

\maketitle

\begin{abstract}

We study a tractable two-player contest built on a truncated cubic contest success function. Its defining feature is a strategic-feedback parameter whose sign determines whether a leading player's effort lowers (suppression) or raises (empowerment) the marginal effectiveness of the trailing player's effort; standard lottery contests impose suppression by construction. The benchmark yields closed-form mixed equilibria under complete information and a unique affine Bayesian Nash equilibrium under IID private information. Expected effort is typically single-peaked in the feedback parameter. Uncertainty lowers effort under suppression but raises it under empowerment, and the same asymmetry governs information disclosure: an effort-maximizing designer withholds information under suppression and discloses fully under empowerment. Several familiar conclusions of contest theory turn out to reflect suppressive benchmarks rather than contests as such.

\smallskip
\noindent \textbf{Keywords:} Contest success function, incomplete information, mixed equilibrium, empowerment, suppression, institutions

\smallskip
\noindent \textbf{JEL:} C72, D72, D82, D83, O31.
\end{abstract}

\newpage
\section{Introduction}

Contest theory began with a universal ambition. When Gordon Tullock introduced his lottery model in 1980, he meant it for a setting that ``had much resemblance to a court trial or, indeed, to any other two-party conflict'' \cite[p.~4]{Tullock1980}. The model has carried that ambition ever since. It has become the workhorse benchmark for rent-seeking, lobbying, R\&D races, electoral competition, and many other strategic environments \cite{HillmanSamet1987,Dixit1987,Nitzan1994}; see also \cite{BayeKovenockDeVries1993,Konrad2009,CHK2008}. A large body of work has refined and extended it in many directions.

The questions contest theory is asked to answer have, however, grown. Many contests of interest --- patent races, electoral competition, lobbying, conflict --- are not designed the way, for example, auctions are. They arise within legal, political, and technological environments that evolve naturally, leaving little scope for direct mechanism design. In settings of this kind, information becomes one of the few levers available, from comparative statics under private information to deliberate disclosure policy. There is some irony in this. Auction theory has both mechanism design at its disposal \cite{Myerson1981} and a developed information theory built on top of it \cite{MilgromWeber1982,BergemannEtAl2022}. Contest theory, where direct mechanism control is rarely the relevant margin and information therefore matters most, has neither.

\begingroup
\emergencystretch=3em
\hbadness=5000
The starting point of this paper is that asymmetry. Important progress has been made on incomplete-information contests, beginning with \cite{HurleyShogren1998}. Related work includes \cite{Warneryd2003}, \cite{Wasser2013}, \cite{MaluegYates2004}, and \cite{Fey2008}. More recent contributions clarify uniqueness and monotonicity \cite{EwerhartQuartieri2020} and \cite{ProkopovychYannelis2023}, while other recent contributions develop tractable contest classes in different ways \cite{Kirkegaard2024} and \cite{Kirkegaard2025}. Mixed equilibrium, which is not a peripheral feature but a regime the model enters whenever strategic feedback is strong enough, has likewise proved difficult to characterize in closed form within the lottery family. The picture that emerges from this work is that information and mixing are tractable in particular corners of the lottery family, but a single tractable benchmark that handles both at once --- without collapsing to an extreme limit such as the all-pay case --- has not been readily available.
\par
\endgroup

This paper proposes one. We work with a cubic contest success function: the lowest-order polynomial that is symmetric across players and admits a non-trivial cross-effect between their efforts, with truncation at $0$ and $1$ to keep probabilities well-defined. The specification is deliberately minimal. It retains the standard features one expects from a contest technology --- symmetry, smoothness, continuous winning probabilities --- and the only feature it relaxes is the particular ratio-form normalization that defines the lottery family. The aim is the same modest one that motivates linear demand or quadratic cost in other areas of economics: to provide a transparent benchmark that gives access to questions that more elaborate specifications place out of reach.

The key feature of the benchmark is that strategic interaction can operate in two directions. A leading player's effort may either reduce the marginal effectiveness of the opponent's effort (\emph{suppression}) or increase it (\emph{empowerment}). Standard lottery contests, including the Tullock family, impose suppression globally, as a structural property of the ratio form rather than a consequence of any parameter choice. The cubic CSF relaxes this restriction.

The paper makes four main contributions. First, it introduces a tractable benchmark that permits both suppression and empowerment. Second, it delivers closed-form mixed equilibria under complete information and a unique affine Bayesian Nash equilibrium under IID private information. Third, it shows that uncertainty and information disclosure have opposite effects under suppression and empowerment. Fourth, it identifies an interior degree of suppression that maximizes equilibrium effort and rent dissipation.

These results make the natural questions of contest theory answerable in closed form, in much the same way they are in symmetric independent private-value auctions, but without the heavy machinery.\footnote{In the lottery family, the corresponding analysis of mixed equilibrium has been notoriously difficult; see \cite{BayeKovenockDeVries1994,Ewerhart2015}.} How does effort respond to the strength of strategic interaction? How does it respond to uncertainty? How should an effort-maximizing designer disclose information? The comparative statics flow from a single object: the cross-derivative of the winning probability changes sign with the ranking of efforts, and the parameter that controls it organizes essentially every comparative-static result the model produces. 

Expected effort is single-peaked, rather than monotone, in the strategic-feedback parameter
$a$ under complete information, and typically so under incomplete information; strong suppression with substantial type dispersion can produce a second, dropout-driven peak. Under incomplete information, private uncertainty discourages effort in the suppression range but encourages it in the empowerment range, with continuous transition through neutrality. The same pattern carries over to information design: an effort-maximizing designer prefers no disclosure under suppression and full disclosure under empowerment. These results are organized by whether strategic feedback is suppressive or empowering, a comparison that single-sign benchmarks cannot make.

Two classical applications\footnote{We discuss other potential applications in the Appendix.} make the distinction concrete. In each, the relevant approach is the same: fix a profile at which one player is investing more than the other. Ask whether a marginal increase in the higher-investment player's effort raises or lowers the marginal effectiveness of the lower-investment player's effort. By symmetry of the cross-partial, the mirror question --- whether a marginal increase in the lower-investment player's effort raises or lowers the higher player's marginal effectiveness --- gives the same answer, so the two formulations describe the same property of the contest technology. The answer is a property of the institutional environment.

\paragraph{R\&D and patent races.} Consider two firms investing in research toward a patent, and suppose that the firms are identical except that one is currently investing more than the other. The relevant institutional variable is the appropriability regime: the strength of patent protection, the patent premium \cite{AghionEtAl2005,AroraCeccagnoliCohen2008}, the ease of reverse engineering, and the extent of knowledge spillovers \cite{BloomSchankermanVanReenen2013,FershtmanMarkovich2010}. Under strong appropriability, additional R\&D by the firm that is already ahead translates into more exclusive control over the leading research line and limited spillovers, so the trailing firm's marginal R\&D becomes less effective. That is suppression. Under weak appropriability with substantial spillovers, the same additional R\&D by the leading firm generates knowledge that diffuses through imitation, reverse engineering, licensing, and labor mobility, so the trailing firm's marginal R\&D becomes more, rather than less, effective. That is empowerment. The same model, with the same firms and the same prize, fits both regimes; what changes is the institutional channel through which a lead translates into marginal returns for the other side.

\paragraph{Rent-seeking and political competition.} Now consider two politicians competing for office and spending to influence voters and the intermediaries --- media outlets, donors, brokers --- that shape political competition \cite{Stromberg2004,BesleyPrat2006}; see also \cite{PastinePastine2012,AvisEtAl2022}. Suppose one is currently outspending the other, and ask how a marginal increase in the leading politician's spending changes the marginal effectiveness of the trailing politician's effort. The answer turns on what the intermediaries themselves value. If they value future transfers from whoever ends up in office --- contracts, regulatory favors, protection against competitors --- they have a reason to coordinate around the politician who looks ahead, and additional spending by that side tightens the alignment. The trailing politician's marginal effort becomes less effective. That is suppression. If the same intermediaries instead value current standing with their audience --- viewership, advertising revenue, the credibility of a competitive story --- the logic reverses: a visible lead makes contrast and scrutiny more valuable, and the trailing politician's marginal effort becomes more effective. That is empowerment. Again, the same contest model fits both cases; what differs is the objective that drives the intermediaries' response.

Two features of these examples are worth noting. First, the empowerment and suppression labels are tied to the current ranking of investments rather than to player identities. This distinguishes them from the familiar language of complements and substitutes: in a contest, winning probabilities sum to one, so what looks like complementarity from one side is automatically substitution from the other, and the distinction does not cleanly track institutional environments. Empowerment and suppression do. Second, very different institutional environments --- appropriability regimes and political intermediaries among them --- reach the contest through the same channel: how a local lead translates into the trailing side's marginal return. The cubic benchmark is the place where that common channel becomes a single parameter, with empowerment and suppression on opposite sides of zero.

The contribution is therefore not primarily a new polynomial specification. The contribution is a tractable benchmark in which the direction of strategic interaction itself becomes an economic object.

Our work touches several strands of the existing literature. On incomplete-information contests in the lottery family, early work by Hurley and Shogren characterizes how uncertainty affects effort \cite{HurleyShogren1998}. Related work includes \cite{Warneryd2003} and \cite{Wasser2013}, with \cite{Wasser2013} in particular emphasizing that no general ranking of complete versus private information is available. Later contributions clarify existence, uniqueness, and monotone equilibria. Malueg and Yates study private values \cite{MaluegYates2004}; Fey focuses on equilibrium existence \cite{Fey2008}; Ewerhart proves uniqueness with continuous independent type distributions \cite{Ewerhart2014}; and Ewerhart and Quartieri provide a broader uniqueness analysis \cite{EwerhartQuartieri2020}. Monotone equilibria are studied in \cite{ProkopovychYannelis2023}. The cubic benchmark sits alongside this work: it yields a clean effort-discouraging comparative static under suppression and a symmetric effort-encouraging one under empowerment.

On mixed equilibrium in symmetric contests, \cite{BayeKovenockDeVries1994}, \cite{BayeKovenockDeVries1996}, and \cite{Ewerhart2015} establish the structure that mixing takes in the lottery family, including its all-pay limit; the cubic benchmark, which is not singular at zero effort, complements that picture with closed-form moments and minimal two-point supports.

\begingroup
\emergencystretch=3em
\hbadness=5000
On information design in contests, the closest methodological comparison is with the symmetric IPV persuasion framework of \cite{BergemannEtAl2022} for auctions. The closest contest-side comparisons are work on voluntary disclosure \cite{EwerhartLareida2024}, Bayesian persuasion in contests \cite{ZhangZhou2016}, optimal type disclosure \cite{Serena2022}, disclosure with stochastic prize values \cite{AntsyginaTeteryatnikova2023}, and exogenous information changes in Tullock contests \cite{AicheEtAl2019}. The contribution of the cubic benchmark in this setting is that expected effort depends on a signal only through the variance of posterior-mean costs, which makes the disclosure problem all-or-nothing --- either no disclosure or full disclosure --- and ties its sign to the suppression--empowerment distinction.
\par
\endgroup

Finally, our approach is complementary to a parallel strand that derives contest technologies from primitive axioms or studies broad CSF classes defined by restrictions such as ratio form, difference form, or log-supermodularity.
This includes \cite{Dixit1987}, \cite{Skaperdas1996}, and \cite{CheGale2000}. Recent examples include \cite{Kirkegaard2024,Kirkegaard2025}. Difference-form extensions include \cite{CubelSanchezPages2016,CubelSanchezPages2025}. Rather than characterize the class of CSFs that satisfy a list of properties, we work with one simple specification that delivers a closed-form benchmark for naturally evolving contests.

Three caveats are worth stating before the analysis begins. First, the model is static. The parameter that distinguishes suppression from empowerment is a property of the contest technology and the institutional environment, not of any temporal sequence of moves, and the comparative statics compare environments rather than phases of a dynamic race. Second, the parameter is a one-dimensional reduction of a high-dimensional institutional setting, and it inherits the standard tradeoff of any reduced-form benchmark: it makes very different applications comparable within one model, at the price of not resolving the channels through which institutions reach the contest. Third, the cubic specification leaves the unit interval at sufficiently asymmetric profiles, which is why we truncate the CSF at $0$ and $1$. This is in the same spirit as the difference-form literature \cite{Hirshleifer1989,CheGale2000} and \cite{CubelSanchezPages2016,CubelSanchezPages2025}, and we treat the truncation as a consistency requirement rather than a modeling choice. The analysis identifies the parameter region in which truncation is inactive on the equilibrium path; this is the region in which the benchmark results below apply directly.

The remainder of the paper is organized as follows. Section~2 sets out the cubic benchmark, the truncated CSF, and the hierarchy of equilibrium concepts that the truncation makes natural. Section~3.1 develops the complete-information analysis: pure and mixed equilibria, single-peaked expected effort, and explicit primitive conditions under which the benchmark survives truncation. The main characterizations throughout are stated for the unrestricted polynomial benchmark, where the raw cubic formula is taken at face value: this is the object on which the comparative statics are transparent, and the economically meaningful one wherever truncation is inactive. Survival in the literal truncated contest is then a separate question, which we address result by result rather than through a single blanket condition, because survival is regime-dependent and no single condition governs it. 

Section~3.2 turns to incomplete information: the affine Bayesian Nash equilibrium, the effect of uncertainty, the variance-driven location of the effort-maximizing parameter, and the disclosure result, together with the conditions under which each survives in the literal contest. The structure is asymmetric in an economically meaningful way: under empowerment and at neutrality the benchmark carries over subject only to a support check, while under suppression it survives moderate feedback and gives way only when suppression is strong enough to trigger dropout. Under complete information the effort-maximizing degree of suppression is interior; under incomplete information its location depends on type dispersion and, under strong suppression, on dropout. Section~4 discusses interpretation, the place of empowerment and suppression in applied contest analysis, and broader implications. Proofs and additional applications are collected in the Appendix.

\section{Model}

Our benchmark contest success function starts from the raw cubic expression
\begin{equation}\label{1}
P(x,y) = \frac{1}{2} + (x - y)\bigl(c - b(x + y) + a x y\bigr),
\end{equation}
where $x$ and $y$ are the players' costly investments, $P(x,y)$ is the probability\footnote{Any polynomial CSF must be clipped outside its admissible range. We treat this clipping as a consistency device rather than a modeling choice.} that player $X$ wins, and $1-P(x,y)$ is the probability that player $Y$ wins. We refer to this specification as the \emph{cubic} CSF.

The three terms have distinct economic roles. The constant $c>0$ captures the baseline marginal return to effort near zero. The parameter $b>0$ governs the local decline in the marginal effectiveness of own effort. The interaction parameter $a$ is the channel of interest: it generates strategic feedback between the players' investments, and its sign and magnitude determine the direction and strength of that feedback.

The key cross-derivative property is
\[
\frac{\partial^{2} P}{\partial x\,\partial y} = 2a(x - y),
\]
so the cross-derivative is linear in the effort gap. The sign of $P_{xy}$ varies with the ranking of efforts, so strategic interaction is determined locally by which player is currently ahead. If $a > 0$, extra effort by the leading\footnote{Throughout the paper we use ``leading player'' and ``trailing player'' as labels for whichever player has the higher and lower investment at a given profile. They are not names for player 1 and player 2, and they do not describe a temporal order of moves.} player lowers the marginal effectiveness of the trailing player's effort; if $a < 0$, the same marginal move raises it. We call these cases \emph{suppression} and \emph{empowerment}, respectively. Equivalently, since $1-P$ is $Y$'s winning probability, the marginal effectiveness of $Y$'s effort is $-P_y$, and a marginal increase in $x$ changes it by $-P_{xy}$; under $a>0$ this is negative whenever $x>y$, giving suppression.

The players have standard quasi-linear expected utilities
\begin{equation}
U_1(x,y) = P(x,y) - \theta_{x}\, x, \qquad U_2(x,y) = 1 - P(x,y) - \theta_{y}\, y,
\end{equation}
where the prize is normalized to $1$. Because $P$ is a polynomial, it eventually leaves $[0,1]$; we use $\bar P(x,y):=\min\{1,\max\{0,P(x,y)\}\}$ as the actual winning probability and define the \emph{admissible domain}
\[
\mathcal D:=\{(x,y)\in \mathbb{R}_{+}^{2}:0 < P(x,y) < 1\},
\]
on which $\bar P=P$.

Under complete information, costs are symmetric, $\theta_{x}=\theta_{y}=\theta$, and publicly known. Under incomplete information, we adopt the standard symmetric IID specification familiar from auction theory; see, e.g., \cite{Myerson1981,BergemannEtAl2022}. Specifically, player $i$'s type\footnote{Because payoffs are quasi-linear, modeling private information as cost uncertainty or as value uncertainty is equivalent up to a change of variables. We use cost uncertainty in line with the contest literature.} $\theta_i$ is an independent draw from a common distribution $F$ with support $[\alpha,\beta]\subset(0,c)$. The lower bound keeps costs strictly positive, and the upper bound $\beta<c$ ensures that even the highest-cost type has a positive marginal return at zero effort.

We do not characterize CSFs from a list of properties\footnote{For example, the cubic CSF need not be globally increasing in own effort: at strongly asymmetric profiles, additional effort can lower a player’s win probability. This does not affect the equilibrium analysis. Conditional on any opponent effort, an action located on a locally declining part of the raw success function cannot be optimal: reducing effort slightly both raises the player’s winning probability and lowers the effort cost. In any case, such non-monotonicity is not always implausible as a matter of substance: in some applications, for instance political advertising or sport, effort past a point can be counterproductive, as when overexposure provokes backlash or overtraining erodes performance.}; we work with a single tractable specification. The reason for using a cubic polynomial is simple: cubic is the minimal polynomial benchmark capable of generating non-trivial strategic interaction.

\begin{observation}[{Minimal strategic CSF}]\label{obs:quadratic}
If a polynomial CSF satisfies $P(x,y)=1-P(y,x)$ and has degree at most two, then $\partial^{2} P/\partial x\,\partial y \equiv 0$. Cubic is therefore the minimal polynomial degree at which symmetry and a non-trivial cross-effect are compatible.
\end{observation}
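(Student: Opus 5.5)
Write the general polynomial of degree at most two,
\[
P(x,y)=p_0+p_1x+p_2y+p_{11}x^2+p_{12}xy+p_{22}y^2,
\]
and impose the symmetry identity $P(x,y)+P(y,x)=1$ for all $(x,y)$. Since both sides are polynomials, the identity holds if and only if coefficients of corresponding monomials agree. The cross-derivative is the constant $p_{12}$, so the first claim reduces to showing $p_{12}=0$.

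Computing $P(y,x)=p_0+p_1y+p_2x+p_{11}y^2+p_{12}xy+p_{22}x^2$ and adding, the $xy$ coefficient of $P(x,y)+P(y,x)$ is $2p_{12}$. It must vanish, so $p_{12}=0$ and $\partial^2P/\partial x\,\partial y\equiv 0$. As a by-product, the remaining coefficients give $p_0=\tfrac12$, $p_2=-p_1$ and $p_{22}=-p_{11}$, so $P=\tfrac12+(x-y)\bigl(p_1+p_{11}(x+y)\bigr)$. This is the degree-two part of (\ref{1}) with $c=p_1$ and $b=-p_{11}$.

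For minimality, I would show that degree three does admit a non-zero cross-effect. The cubic (\ref{1}) satisfies $P(x,y)=1-P(y,x)$: the factor $(x-y)$ is antisymmetric, and the factor $c-b(x+y)+axy$ is symmetric. Its cross-derivative is $2a(x-y)$, which is not identically zero whenever $a\neq0$.

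The main care point is the symmetry step itself. One must treat the identity as a polynomial identity on an open set, such as $\mathbb{R}_+^2$, so that coefficient matching is legitimate. The truncation is irrelevant here, because the statement concerns the raw polynomial.
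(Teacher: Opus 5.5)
Your argument is correct, but it uses a different decomposition from the paper's. You expand $P$ in monomials and read off that the $xy$ coefficient of $P(x,y)+P(y,x)$ is $2p_{12}$. The paper instead sets $R:=P-\tfrac12$ and uses symmetry to make $R$ antisymmetric, hence divisible by $x-y$ with a symmetric cofactor $Q$. It then notes that a symmetric $Q$ of degree at most one is $q_0+q_1(x+y)$, which leaves no room for an $xy$ term.

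Your coefficient matching is the more elementary route at degree two. It needs neither the divisibility fact nor the cancellation step showing $Q$ is symmetric. You also verify explicitly that (\ref{1}) is symmetric with cross-partial $2a(x-y)\not\equiv 0$, which the paper's proof leaves implicit when it asserts minimality.

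What the factorization buys is structure that scales with degree. Any symmetric polynomial CSF has the form $\tfrac12+(x-y)Q$ with $Q$ symmetric of one lower degree. At degree three, $Q$ is spanned by $1$, $x+y$, $x^2+y^2$ and $xy$. This shows at once why the cross-effect first appears there, and why (\ref{1}) is the general symmetric cubic CSF up to the omitted own-effort term $x^3-y^3$.

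One small correction: $\mathbb{R}_+^2$ is not open. Coefficient matching only needs the identity to hold on a set with nonempty interior, which $\mathbb{R}_+^2$ has.
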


A fully general antisymmetric cubic polynomial could also include pure own-effort terms such as $x^{3}-y^{3}$. These terms do not contribute to strategic interaction in a direct way --- they affect the curvature of own-effort returns without changing the cross-derivative --- but they greatly complicate the algebra without affecting any of the substantive results below. Omitting them sharpens the role of $a$ at no substantive cost.

It is instructive to contrast the cubic CSF with the two standard alternatives in the literature. The generalized Tullock contest is suppressive by construction.

\begin{observation}[Tullock contest is suppressive]\label{obs:tullock-suppressive}
For the generalized Tullock CSF
\[
P(x,y)=\frac{x^{r}}{x^{r}+y^{r}}, \qquad r>0,
\]
we have
\[
\frac{\partial^{2} P}{\partial x\,\partial y} = \frac{r^{2}\, x^{r-1} y^{r-1}\,(x^{r} - y^{r})}{(x^{r} + y^{r})^{3}},
\]
which is positive exactly when $x > y$. Hence the generalized Tullock contest is suppressive as a functional-form property independent of any parameter choice.
\end{observation}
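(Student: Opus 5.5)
The statement to prove is Observation~\ref{obs:tullock-suppressive}. The plan is a direct computation of the cross-partial, followed by a sign analysis. For $x,y>0$, write $S=x^{r}+y^{r}$.

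\emph{First derivative.} By the quotient rule,
\[
\frac{\partial P}{\partial x}=\frac{r x^{r-1}S - x^{r}\cdot r x^{r-1}}{S^{2}}=\frac{r x^{r-1}y^{r}}{S^{2}}.
\]
The numerator simplifies because the $x^{2r-1}$ terms cancel.

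\emph{Cross-partial.} Differentiate $r x^{r-1}\,y^{r}S^{-2}$ with respect to $y$, treating $r x^{r-1}$ as a constant. This gives
\[
r x^{r-1}\Bigl(r y^{r-1}S^{-2} - 2y^{r}S^{-3}\,r y^{r-1}\Bigr)=\frac{r^{2}x^{r-1}y^{r-1}\bigl(S-2y^{r}\bigr)}{S^{3}}.
\]
Since $S-2y^{r}=x^{r}-y^{r}$, this is exactly the displayed formula. As a check, the expression is antisymmetric in $(x,y)$, as it must be because $P(x,y)=1-P(y,x)$ forces $P_{xy}(x,y)=-P_{xy}(y,x)$.

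\emph{Sign.} On the interior $x,y>0$, the factors $r^{2}$, $x^{r-1}y^{r-1}$ and $S^{3}$ are all strictly positive. The sign of $P_{xy}$ is therefore the sign of $x^{r}-y^{r}$. Because $t\mapsto t^{r}$ is strictly increasing on $(0,\infty)$ for $r>0$, this sign equals the sign of $x-y$. So $P_{xy}>0$ exactly when $x>y$, and $P_{xy}<0$ exactly when $x<y$, for every $r>0$.

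By the paper's definition, $P_{xy}>0$ when $x>y$ is precisely suppression, matching the case $a>0$ of the cubic CSF, for which $P_{xy}=2a(x-y)$. No parameter value reverses this sign, which is the sense in which suppression is a functional-form property of the Tullock CSF.

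The only delicate point is the boundary. At $x=0$ or $y=0$ with $r<1$, the factor $x^{r-1}y^{r-1}$ is singular, and at the origin $P$ is undefined. I would therefore state the claim on the open quadrant, where "exactly when $x>y$" is unambiguous. The rest is routine algebra.
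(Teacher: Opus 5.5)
Your proof is correct and follows the paper's argument essentially line by line. Both compute $P_x = r x^{r-1}y^r/(x^r+y^r)^2$, differentiate in $y$, combine over the common denominator $(x^r+y^r)^3$, and read the sign off $x^r-y^r$ on the open quadrant, which is also where the paper works. Your restriction to $x,y>0$ is the right one, and it is needed for $r>1$ as well, since then $P_{xy}=0$ on the axes; the antisymmetry check is a useful consistency test that the paper omits.
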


The all-pay auction \cite{BayeKovenockDeVries1996}, obtained in the limit as $r \to \infty$, is the limiting case of extreme suppression in this family.

The comparison with Hirshleifer's difference-form contest \cite{Hirshleifer1989} is more subtle. If $P(x,y) = G(x-y)$ for a symmetric CDF $G$ with density $g$, then $P_{xy} = -g'(x-y)$. Symmetry forces $g'(0) = 0$, so the relevant local object is $-g''(0)/2$, which (locally) plays the role of $a$. For standard smooth unimodal noise --- logistic, normal, Student-$t$ --- one has $g''(0) < 0$, so the contest is locally suppressive at the symmetric profile, and integrability forces reversion to suppression at large effort gaps; global empowerment is therefore unavailable in the difference-form family, whereas the cubic CSF is empowering throughout its admissible domain $\mathcal{D}$ whenever $a < 0$.

\begin{observation}[Geometry of the admissible domain]\label{obs:domain-geometry}
The admissible domain $\mathcal D$ is a non-convex region containing the diagonal $x=y$, symmetric across it, unbounded along it, and bounded along every non-diagonal ray in $\mathbb R_+^2$.
\end{observation}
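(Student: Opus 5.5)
The plan is to reduce every claim to the factorization $P(x,y)-\tfrac12=(x-y)\,g(x,y)$ with $g(x,y):=c-b(x+y)+axy$, which is read off directly from (\ref{1}). I will then deal with the five properties in turn: diagonal, symmetry, behaviour along rays, unboundedness along the diagonal, and non-convexity.

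\emph{Diagonal and symmetry.} On $x=y$ the factor $x-y$ vanishes, so $P=\tfrac12\in(0,1)$. Hence the whole diagonal lies in $\mathcal D$, and in particular $\mathcal D$ is unbounded along it. For symmetry, $g$ is symmetric in its arguments and $x-y$ changes sign under swapping them, so $P(y,x)=1-P(x,y)$. The condition $0<P<1$ is invariant under $p\mapsto 1-p$, so $(x,y)\in\mathcal D$ if and only if $(y,x)\in\mathcal D$.

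\emph{Boundedness along non-diagonal rays.} Parametrize a ray as $(x,y)=r(\cos\varphi,\sin\varphi)$ with $\varphi\in[0,\pi/2]$, $\varphi\neq\pi/4$, and $r\ge0$. Along it,
\[
P-\tfrac12=r(\cos\varphi-\sin\varphi)\bigl(c-br(\cos\varphi+\sin\varphi)+ar^{2}\cos\varphi\sin\varphi\bigr).
\]
This is a polynomial in $r$ with a nonzero leading coefficient. On the axes, or when $a=0$, the leading term is $-b r^{2}(\cos\varphi-\sin\varphi)(\cos\varphi+\sin\varphi)\neq0$. When $a\neq0$ and $\varphi$ is interior, the leading term is $a r^{3}\cos\varphi\sin\varphi(\cos\varphi-\sin\varphi)\neq0$. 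In either case $|P-\tfrac12|\to\infty$ as $r\to\infty$. So there is an $R$ such that $|P-\tfrac12|\ge\tfrac12$ for all $r\ge R$, and the intersection of the ray with $\mathcal D$ lies inside $\{r<R\}$, which is bounded.

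\emph{Non-convexity.} This is the step I expect to need the most care, because the obvious choice of points fails. Two mirror-image points on the level curve $g=0$ have their midpoint on the diagonal, which lies in $\mathcal D$, so that pair gives no contradiction. Instead I will use lines parallel to the diagonal. Fix $d>0$ and set $(x,y)=(u+d,u)$. Then
\[
P-\tfrac12=d\bigl(c-b(2u+d)+au(u+d)\bigr).
\]
As $u\to\infty$ this tends to $+\infty$ if $a>0$ and to $-\infty$ if $a\le0$. So $\mathcal D$ meets each such line in a bounded set, and by the symmetry above the same holds for $d<0$.

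Now pick any off-diagonal point of $\mathcal D$, for instance $p=(\varepsilon,0)$ with $\varepsilon>0$ small. Since $P(\varepsilon,0)=\tfrac12+c\varepsilon-b\varepsilon^{2}\in(0,1)$ for small $\varepsilon$, this point lies in $\mathcal D$. Suppose $\mathcal D$ were convex. It would then contain every segment from $p$ to a diagonal point $(T,T)$, and in particular the point $\tfrac12p+\tfrac12(T,T)$ for every $T\ge0$. These points all lie on the single line through $\tfrac12 p$ parallel to the diagonal, which is off-diagonal with $d=\varepsilon/2$. As $T$ ranges over $[0,\infty)$ they form an unbounded subset of that line, contradicting the bound just established. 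Hence $\mathcal D$ is not convex, which completes the proposed argument.
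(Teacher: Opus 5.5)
Your proof is correct and follows essentially the same route as the paper. Both arguments factor out $P-\tfrac12=(x-y)\bigl(c-b(x+y)+axy\bigr)$, use leading-term dominance along non-diagonal rays, and obtain non-convexity from midpoints of a fixed off-diagonal point and far-out diagonal points, which travel along a line parallel to the diagonal and eventually leave $[0,1]$. Your explicit choice $p=(\varepsilon,0)$ is a small tidy-up of the paper's existence argument for an off-diagonal point of $\mathcal D$.
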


Figure \ref{DomainFig} shows representative contour plots of the raw cubic probability on a fixed window of the effort space; the Appendix gives the analytic description of $\mathcal D$ in diagonal coordinates, the ray-boundedness argument, and the non-convexity argument. Intuitively, probabilities remain well-defined only when neither player's effort dominates too strongly. This restricts the admissible region to profiles where efforts are relatively close, generating a band around the diagonal in which the contest behaves smoothly.

\begin{figure}[t]
\centering
\begin{subfigure}[b]{0.31\textwidth}
\includegraphics[width=\textwidth]{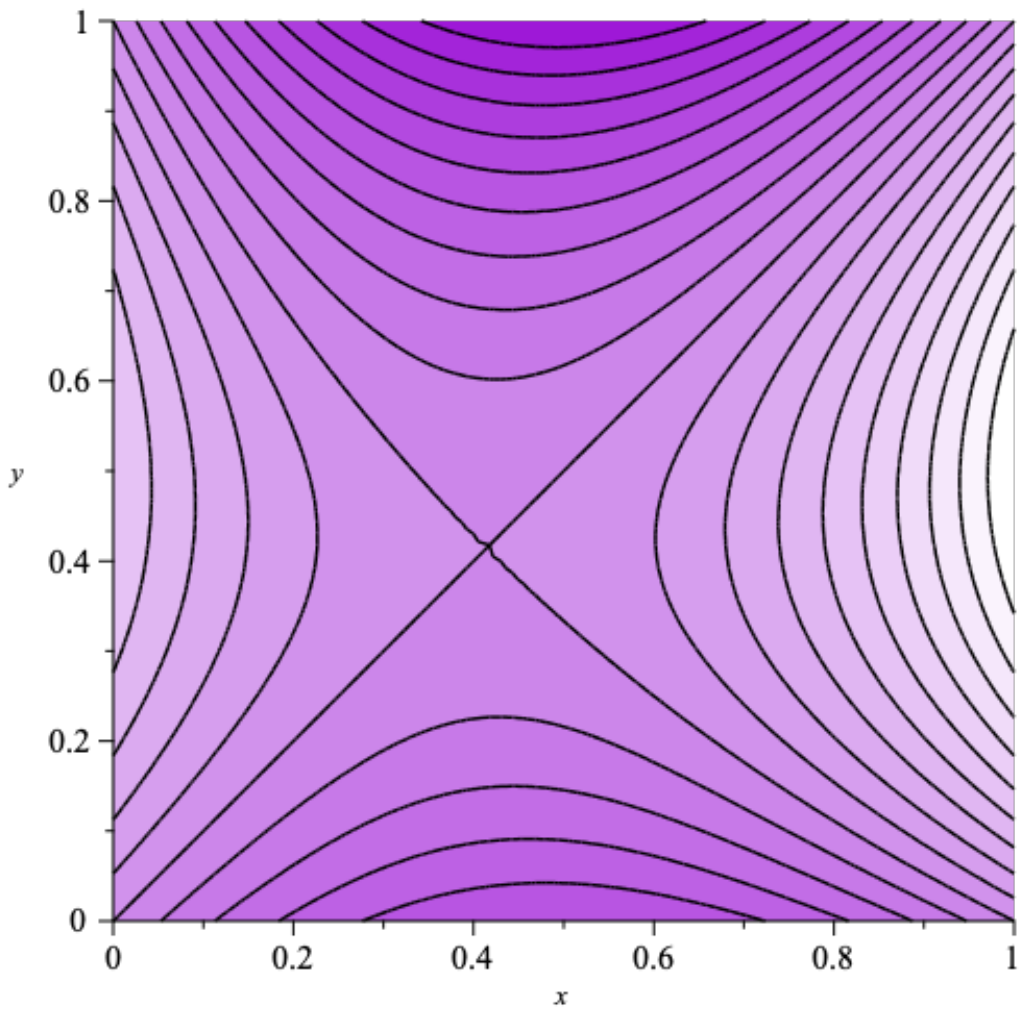}
\caption{$a=-1$}
\end{subfigure}
\hfill
\begin{subfigure}[b]{0.31\textwidth}
\includegraphics[width=\textwidth]{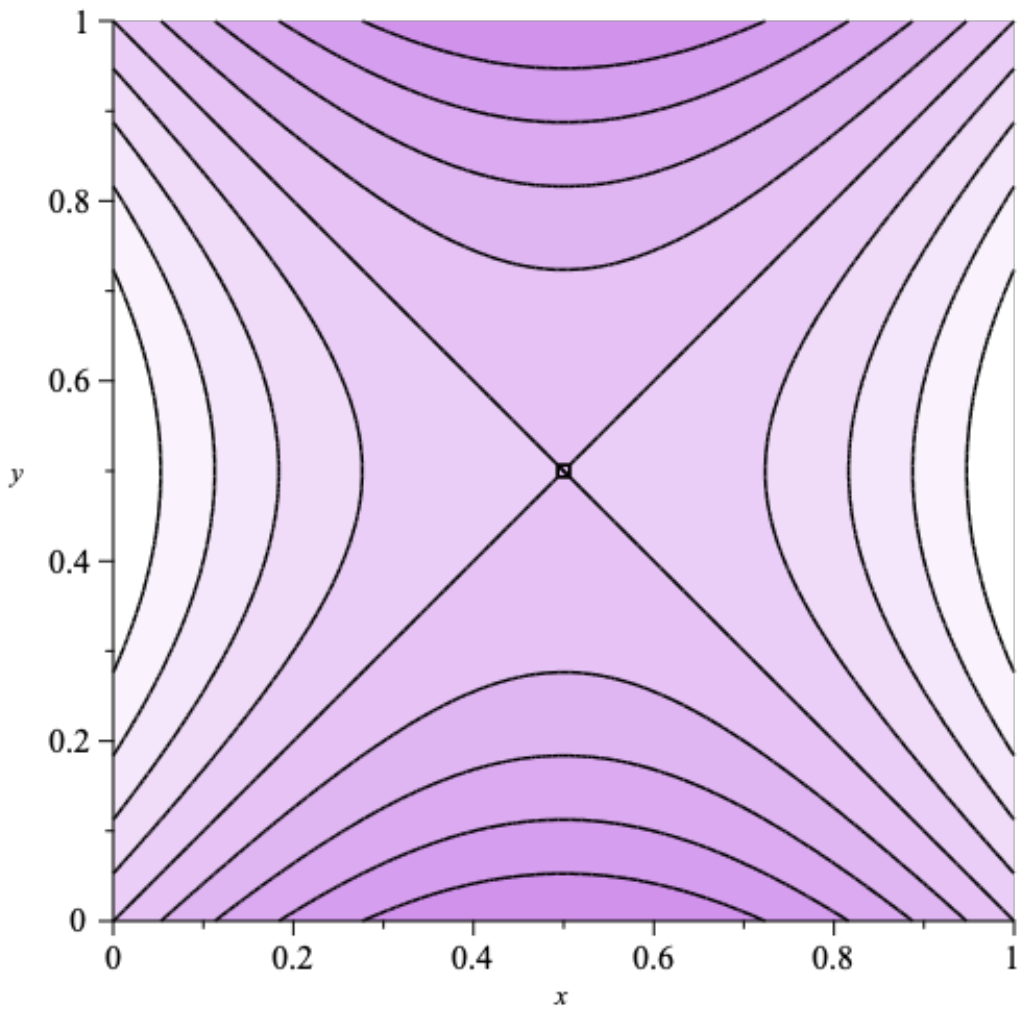}
\caption{$a=0$}
\end{subfigure}
\hfill
\begin{subfigure}[b]{0.31\textwidth}
\includegraphics[width=\textwidth]{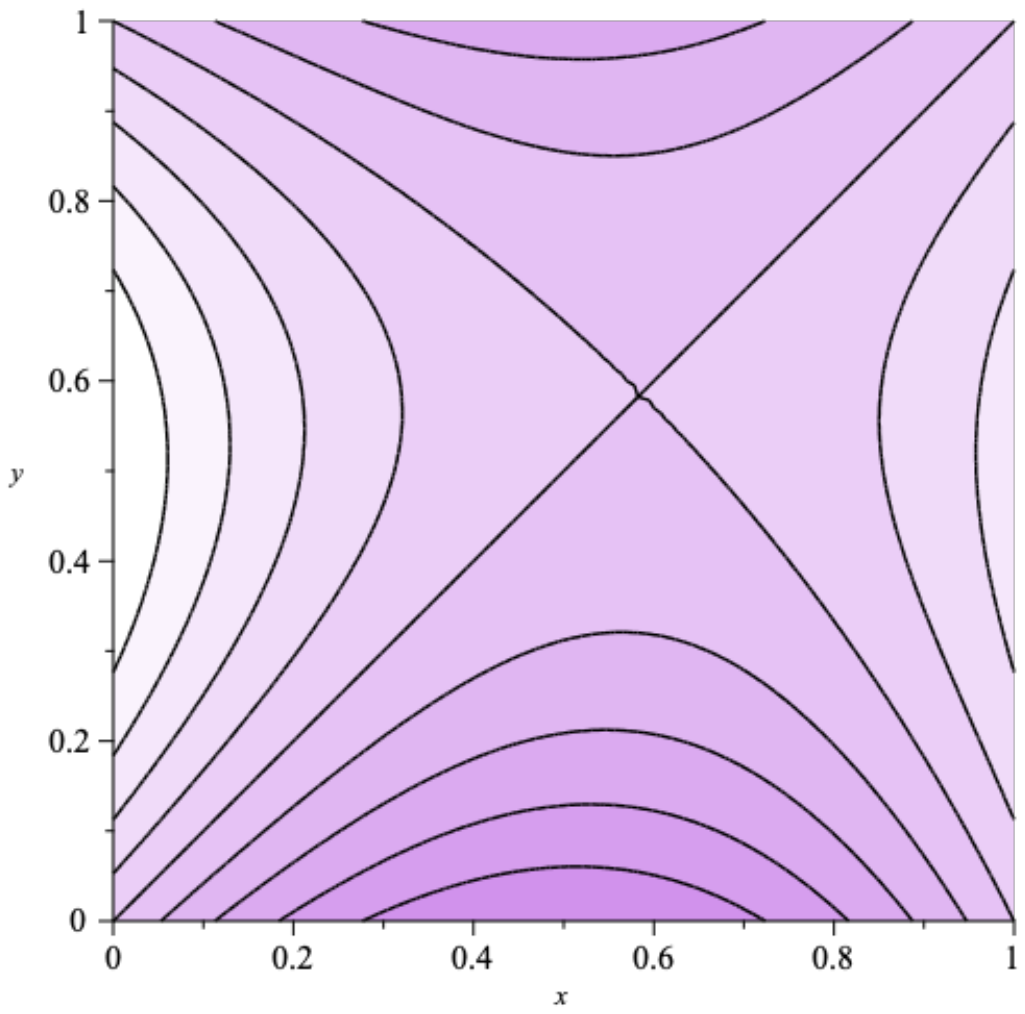}
\caption{$a=0.5$}
\end{subfigure}
\caption{Representative contour plots of the raw cubic probability on a fixed window of the effort space. The admissible domain is the region between the level curves $P=0$ and $P=1$; the diagonal $x=y$ is the $P=\frac{1}{2}$ locus and remains inside the domain for all parameter values.}
\label{DomainFig}
\end{figure}

A final modeling question concerns which equilibrium concept the truncated contest calls for. Because the cubic formula is the economically meaningful object but truncation is unavoidable, we organize the analysis into three increasingly demanding equilibrium concepts. Let the \emph{unrestricted polynomial game} denote the auxiliary game obtained by dropping only the $[0,1]$ truncation of $P$, while retaining the effort space $\mathbb R_+$. In the Bayesian analysis below, it will also be useful to consider the \emph{affine relaxation}, in which the action space is enlarged from $\mathbb R_+$ to $\mathbb R$; this auxiliary object is used only to derive the closed-form affine rule in Theorem \ref{thm:bn-unrestricted}, and is distinct from the unrestricted polynomial game.
\begin{enumerate}
\item \emph{Unrestricted benchmark.} A Nash equilibrium of the unrestricted polynomial game. This is the natural benchmark if one treats the cubic formula itself as the economically relevant object.

\item \emph{Local benchmark survival.} An unrestricted candidate survives locally if the candidate actions --- and, in mixed or Bayesian settings, the relevant support --- lie in $\mathcal D$ and the candidate is a local Nash equilibrium of the truncated contest. Equivalently, no sufficiently small feasible deviation is profitable. Under this interpretation the cubic CSF is read as a local approximation, so truncation is inactive in the payoff comparisons that matter.

\item \emph{Restricted-game survival.} An unrestricted candidate survives globally if it is a Nash equilibrium of the literal truncated contest based on $\bar P$. This is the strongest reading, because one must also rule out profitable distant deviations that may leave $\mathcal D$ and trigger clipping at $0$ or $1$.
\end{enumerate}
Each headline theorem below first characterizes the relevant algebraic benchmark; subsequent propositions and remarks identify the conditions under which that benchmark survives in the local or global sense. We do not attempt to classify equilibria whose existence is created purely by truncation; our focus is on when the interior cubic benchmark remains valid.

\section{Results}

\subsection{Complete Information}

Complete information is the natural benchmark for the model and the environment that permits the cleanest comparison with the existing contest literature. It also contains the basic geometry that drives the incomplete-information analysis below.

We start with the unrestricted polynomial game and extract the comparative statics of interest. All equilibrium interaction analyzed below takes place on the admissible domain $\mathcal D$; we return to the truncated contest only to ask when the same conclusions survive. For $a\neq 0$, define
\[
\kappa =\frac{b}{a},
\qquad
\zeta=\kappa^2-\frac{c-\theta}{a}=\frac{b^2-a(c-\theta)}{a^2}.
\]
These are the canonical variables of the complete-information problem. At a symmetric profile the first-order condition reduces to
\[
ax^2-2bx+c-\theta=0,
\]
whose discriminant is $4a^2\zeta$. The sign of $\zeta$ therefore determines whether the symmetric fixed point exists, while the sign of $a$ determines whether the contest is suppressive or empowering.

The first result gives the unrestricted benchmark and unifies the pure and mixed cases. The equilibrium structure depends on whether the symmetric first-order condition admits an interior solution: if it does, equilibrium is unique, pure, and symmetric; if it does not, players must randomize, and equilibrium is pinned down by the first two moments alone.

\begin{theorem}[Complete-information benchmark: pure and mixed equilibria]\label{thm:cinfo-unrestricted}
\leavevmode\par\noindent
Let $a\neq 0$.
\begin{enumerate}
\item \emph{Pure-equilibrium region.} If $\zeta\geq 0$, the unrestricted contest has a unique equilibrium, which is symmetric:
\begin{equation}\label{pure}
x^*=y^*=\kappa-\mathrm{sgn}(a)\sqrt{\zeta}.
\end{equation}

\item \emph{Mixed-equilibrium region.} If $\zeta<0$, no pure equilibrium exists. Every unrestricted mixed equilibrium strategy has mean $\kappa$ and variance $-\zeta$:
\begin{equation}\label{mixed}
E[x]=E[y]=\kappa,
\qquad
\mathrm{Var}(x)=\mathrm{Var}(y)=-\zeta,
\end{equation}
and, conversely, every pair of distributions on $[0,\infty)$ with these moments is an unrestricted mixed equilibrium.
\end{enumerate}
The two regions meet at $\zeta=0$, i.e.\ when $b^2=a(c-\theta)$.
\end{theorem}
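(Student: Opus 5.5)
The plan rests on one algebraic fact: a player's expected payoff against any opponent strategy is a quadratic in own effort that depends on that strategy only through its first two moments. Expanding \eqref{1} gives
\[
P(x,y)=\tfrac12 + c x - c y - b x^2 + b y^2 + a x^2 y - a x y^2 .
\]
So if $Y$ plays a distribution with $m=E[y]$ and $s=E[y^2]$, player $X$'s expected payoff is
\[
(a m - b)\,x^2 + (c-\theta - a s)\,x + \text{const}.
\]
Player $Y$'s problem is the mirror image, since $1-P(x,y)=P(y,x)$. Every step below reads off the sign of the leading coefficient $am-b$ and of the linear coefficient $c-\theta-as$. Because the unrestricted game has action space $\mathbb R_+$ with no truncation, an unbounded payoff means there is no best response.

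\textbf{Mixed part.} Suppose a player randomizes with at least two support points. Each support point must maximize the quadratic above, and a quadratic is maximized at two or more points of $\mathbb R_+$ only if it is constant. So both coefficients must vanish: $am=b$ and $as=c-\theta$. This gives $m=\kappa$ and
\[
\mathrm{Var}=s-m^2=\frac{c-\theta}{a}-\kappa^2=-\zeta .
\]
This argument applies to the moments of the \emph{opponent's} strategy. To get the moment restrictions for both players, I must first show that both players mix whenever one does.
\begin{itemize}
\item If $X$ mixes, then $Y$'s moments are $(\kappa,\,(c-\theta)/a)$.
\item If $Y$ were pure at $y_0$, we would need $s=m^2$, that is $\zeta=0$. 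This is excluded when $\zeta<0$.
\item At $\zeta=0$, $Y$ pure at $y_0=\kappa$ makes $X$ indifferent over all $x$. But $Y$'s best response to a nondegenerate $X$ pins down only $X$'s moments, so $X$'s variance must equal $-\zeta=0$, a contradiction.
\end{itemize}
Hence in any equilibrium with a nondegenerate strategy, both players mix and both have the stated moments, which requires $\zeta<0$.

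Conversely, if both distributions have these moments, each player's payoff is constant in own effort, so every strategy is a best response. Such distributions exist on $[0,\infty)$ because $\zeta<0$ forces $a(c-\theta)>b^2>0$. Hence $a>0$ and $\kappa>0$, so, for example, a two-point distribution with mean $\kappa$ and variance $-\zeta$ can be placed on $[0,\infty)$.

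\textbf{Pure part.} Against a pure $y$, player $X$'s objective is $(ay-b)x^2+(c-\theta-ay^2)x$. A best response exists only if $ay\le b$, and when $ay<b$ it is given by the first-order condition or the corner $x=0$. For an interior pure profile, subtracting the two first-order conditions
\[
2(ay-b)x=ay^2-(c-\theta),\qquad 2(ax-b)y=ax^2-(c-\theta)
\]
gives $(x-y)\bigl(a(x+y)-2b\bigr)=0$.
\begin{itemize}
\item If $a(x+y)=2b$, then $ay-b=-(ax-b)$. One leading coefficient is then positive, so that player has no best response, unless both coefficients are zero. That forces $x=y=\kappa$, which the symmetric analysis already covers.
\item Otherwise $x=y$, and the symmetric first-order condition $ax^2-2bx+c-\theta=0$ has roots $\kappa\pm\sqrt{\zeta}$. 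Concavity requires $ax<b$. This selects $\kappa-\sgn(a)\sqrt\zeta$, and at $\zeta=0$ the common root $\kappa$ gives a constant payoff.
\end{itemize}
Nonnegativity of the selected root also needs checking. For $a>0$, $\zeta<\kappa^2$ and $\kappa>0$, so $\kappa-\sqrt\zeta>0$. For $a<0$, $\zeta>\kappa^2$, so $\sqrt\zeta>|\kappa|$ and $\kappa+\sqrt\zeta>0$. If $\zeta<0$, there is no real root, so no pure equilibrium exists.

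\textbf{Main obstacle: corners.} I expect the corner and boundary cases to take the most care. The case to exclude is a profile with $x=0<y$ (or both zero). Against $x=0$, $Y$'s payoff is $-by^2+(c-\theta)y$, so $Y$'s best response is $y=(c-\theta)/(2b)>0$; in particular both-zero is not an equilibrium. For $x=0$ to be optimal for $X$ at that $y$, we need $ay\le b$ and $c-\theta-ay^2\le 0$. These two conditions together imply $c-\theta\le b\,y=(c-\theta)/2$, which is impossible since $c>\theta$. This closes uniqueness in the pure region and completes the proof, with the two regions meeting at $\zeta=0$.
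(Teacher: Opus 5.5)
Your proposal is correct and follows essentially the paper's route: payoffs are quadratic in own effort and depend on the opponent only through two moments, indifference pins down both players' moments in the mixed case, and subtracting the first-order conditions reduces pure equilibria to the symmetric quadratic. Your explicit corner check at $x=0$ fills a step the paper's proof passes over, since it works only with first-order conditions. One step needs tightening, and the paper takes the same shortcut there: at $\zeta=0$ with $Y$ pure at $\kappa$, $Y$ is not indifferent, so $X$'s moments must instead come from $Y$'s first-order condition at the interior point $\kappa>0$, which gives $E[x^2]=2\kappa E[x]-\kappa^2$ and hence $\mathrm{Var}(x)=-(E[x]-\kappa)^2\le 0$, the desired contradiction.
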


Two features of this result deserve emphasis. First, the equilibrium itself is symmetric, so neither player is leading at the equilibrium profile. Empowerment and suppression are properties of the contest technology rather than of the equilibrium outcome: they describe how the cross-partial behaves at the asymmetric profiles a player contemplates when evaluating deviations, and it is this off-equilibrium geometry that decides whether equilibrium is pure or mixed. Under empowerment ($a<0$) one always has $\zeta>0$, so equilibrium is pure. Under suppression ($a>0$), increasing $a$ eventually drives $\zeta$ below zero, the symmetric fixed point disappears, and mixing becomes unavoidable.

Second, mixed equilibrium is sharp in moments but not in distributions: expected utility against a mixed opponent is quadratic in own effort, so indifference pins down only the first two moments. The mixed-equilibrium region is therefore informationally robust --- only mean and variance matter, and any pair of distributions with the required moments forms an equilibrium. This contrasts with mixed equilibrium in the lottery family, where closed-form characterization typically requires the full distribution rather than just its moments \cite{BayeKovenockDeVries1994,Ewerhart2015}. The interpretation of $\kappa$ in \eqref{pure}--\eqref{mixed} is now natural: it is the target effort level around which equilibrium organizes, played as a pure action in the pure-equilibrium region and as the common mean in the mixed-equilibrium region.

\begin{remark}[Moment uniqueness versus distributional multiplicity]\label{rem:cinfo-moment-unique}
When $\zeta<0$, every pair of distributions on $[0,\infty)$ with mean $\kappa$ and variance $-\zeta$ is an unrestricted mixed equilibrium. This multiplicity matters for the truncation analysis below, where admissibility depends on the particular support. The incomplete-information analysis resolves it: in the fully active Bayesian region, the unique affine equilibrium has moments converging to \eqref{mixed} as $\sigma_\theta^2\to 0$.
\end{remark}

\begin{remark}[Degenerate case $a=0$]\label{rem:cinfo-zero}
If $a=0$, each player has the dominant strategy $x^*=y^*=(c-\theta)/(2b)$, which is strictly positive whenever $c>\theta$. Moreover,
\[
\frac{b-\sqrt{b^2-a(c-\theta)}}{a}\longrightarrow \frac{c-\theta}{2b}\qquad\text{as }a\to 0,
\]
so the pure equilibrium extends continuously through the degenerate case, and the unrestricted equilibrium is pure and unique in a neighborhood of $a=0$.
\end{remark}

The curvature of best responses makes the role of $a$ transparent.

\begin{corollary}[Best-response curvature]\label{cor:cinfo-br-curvature}
Fix an opponent effort $y$ such that the unrestricted best response is interior, i.e.\ $ay<b$. Under empowerment ($a<0$) this condition holds for every $y\geq 0$ and best responses are convex throughout. Under suppression ($a>0$) the condition restricts $y$ to the interval $[0,\kappa)$; on this interior branch best responses are concave on the pure-equilibrium side ($\zeta>0$) and convex on the mixed-equilibrium side ($\zeta<0$). The change in curvature occurs exactly at $\zeta=0$, which is also the pure-to-mixed boundary in the unrestricted benchmark.
\end{corollary}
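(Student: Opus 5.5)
The plan is to write down the unrestricted best response explicitly and differentiate it twice. A change of variables will make the sign of the second derivative read off directly as $-\mathrm{sgn}(a)\,\mathrm{sgn}(\zeta)$.

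\textbf{Step 1: the best response and the interiority condition.} Fix $y\ge 0$ and expand $U_1=\tfrac12+(x-y)\bigl(c-b(x+y)+axy\bigr)-\theta x$. Differentiating in $x$ gives
\[
\partial_x U_1 = c-\theta-2bx+2axy-ay^2 ,
\qquad
\partial_{xx}U_1=-2(b-ay).
\]
So $U_1$ is a quadratic in $x$. It is strictly concave exactly when $ay<b$, and then its unique critical point is
\[
BR(y)=\frac{c-\theta-ay^2}{2(b-ay)} .
\]
This identifies the interiority condition in the statement. If $ay\ge b$, the objective is weakly convex (or linear) in $x$ and has no interior maximizer. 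Under $a<0$ we have $b-ay\ge b>0$ for every $y\ge 0$, so the condition holds everywhere. Under $a>0$ the condition is equivalent to $y<b/a=\kappa$, which gives the interval $[0,\kappa)$. I would also remark that $BR(y)$ stays nonnegative on the relevant range, using $c>\theta$; this is a side check and not needed for curvature.

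\textbf{Step 2: change of variables.} Substitute $u:=b-ay$, which is strictly positive on the interior branch, so that $y=(b-u)/a$. A short computation gives $c-\theta-ay^2=\bigl(a(c-\theta)-b^2+2bu-u^2\bigr)/a$. Since $a(c-\theta)-b^2=-a^2\zeta$, this yields
\[
BR=\frac{1}{2a}\Bigl(-\frac{a^2\zeta}{u}+2b-u\Bigr).
\]
This is the key simplification: the only nonlinear term is $-a\zeta/(2u)$.

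\textbf{Step 3: the second derivative and its sign.} Since $du/dy=-a$, we have $d^2/dy^2=a^2\,d^2/du^2$. The linear terms in $u$ drop out, leaving
\[
BR''(y)=-\frac{a^3\zeta}{(b-ay)^3}.
\]
On the interior branch the denominator is positive, so $\mathrm{sgn}\,BR''=-\mathrm{sgn}(a)\,\mathrm{sgn}(\zeta)$. The two cases then follow.
\begin{itemize}
\item For $a<0$: $\zeta=\bigl(b^2-a(c-\theta)\bigr)/a^2>0$ because $c>\theta$ (as $\theta\in[\alpha,\beta]\subset(0,c)$). Hence $BR''>0$ and best responses are convex for all $y\ge 0$.
\item For $a>0$: best responses are concave when $\zeta>0$ and convex when $\zeta<0$. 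At $\zeta=0$ the best response is affine, $BR(y)=(b+ay)/(2a)$, so curvature changes sign exactly there.
\end{itemize}
Finally, $\zeta=0$ is precisely the boundary between parts 1 and 2 of Theorem~\ref{thm:cinfo-unrestricted}, which gives the last sentence of the statement.

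I expect no deep obstacle. The only step requiring care is Step 2, where the constant term of the numerator must be recognized as $-a^2\zeta$. Once that is done, differentiating twice in $y$ directly would be messy, but the substitution $u=b-ay$ makes the whole argument a one-line computation.
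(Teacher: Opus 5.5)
Your argument is correct and is essentially the paper's proof: the substitution $u=b-ay$ gives exactly the paper's decomposition $x^{BR}(y)=\frac{y+\kappa}{2}+\frac{\zeta}{2(y-\kappa)}$, and both proofs read the sign of $a^3\zeta/(ay-b)^3$ off the second-order condition $ay<b$. One correction to your side remark: $BR(y)\ge 0$ fails on part of the interior branch when $a>0$ and $\zeta>0$, because $c-\theta-ay^2<0$ for $\sqrt{(c-\theta)/a}<y<\kappa$; as you note, this plays no role in the curvature claim.
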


Figure \ref{BR} plots the unrestricted interior branches as solid curves and adds dotted segments only where the true best-response correspondences in the truncated contest differ from them. The curvature signs translate into a clear economic statement about how deviations from symmetric play are punished. Under empowerment, a player who is behind responds by investing \emph{more}, so off-equilibrium play is corrected from below; under suppression in the pure-equilibrium region, the player who is behind responds by investing \emph{less}, so off-equilibrium play is corrected from above. In both cases the correction is sufficient to return play to the symmetric fixed point. In the mixed-equilibrium region, by contrast, both directions of correction become locally amplifying rather than locally damping: the symmetric fixed point ceases to be reachable by best-response adjustments, which is why pure-strategy equilibrium fails and mixing becomes necessary.

Figure \ref{T-vs-P} shows the geometric contrast with Tullock directly. In the lottery family, best responses inherit a singularity at zero effort that has long been a source of technical difficulty in the closed-form analysis of mixed equilibrium. The cubic benchmark is smooth at zero effort, which is why both pure and mixed equilibria admit clean characterizations through Theorem \ref{thm:cinfo-unrestricted}.

\begin{figure}[ht]
\centering
\includegraphics[width=\textwidth]{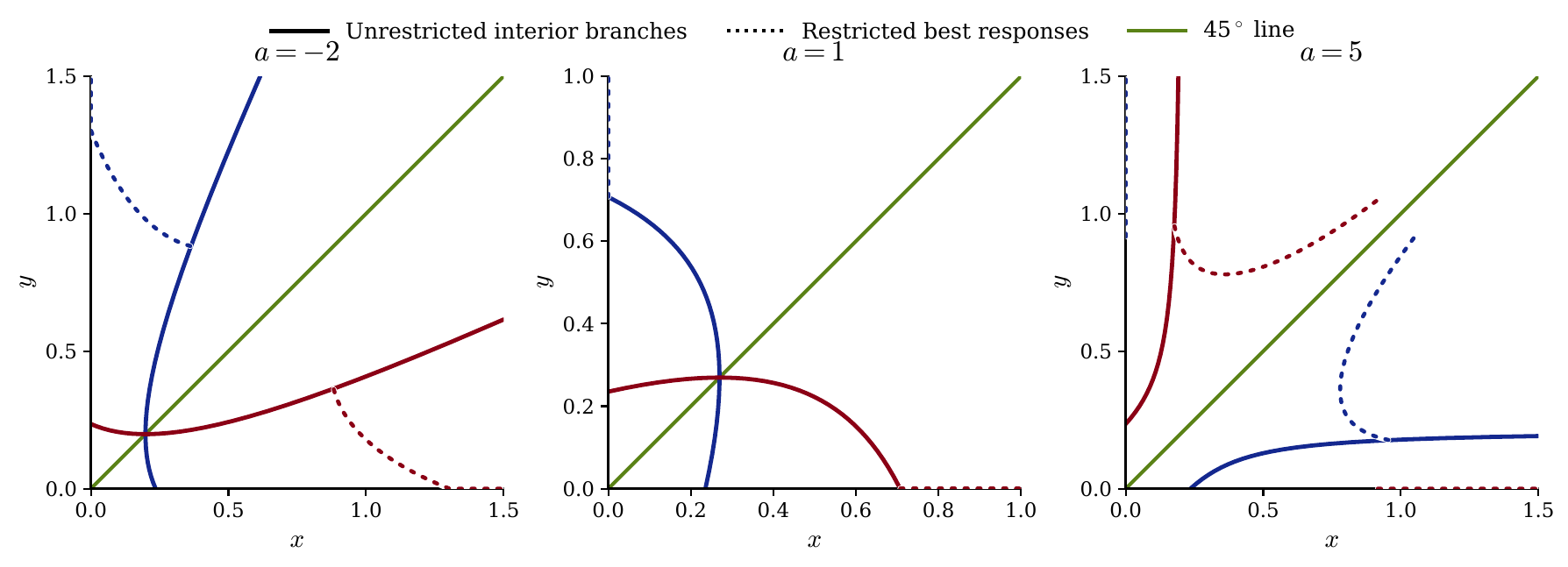}
\caption{Best-response correspondences for $a\in\{-2,1,5\}$ with $b=\nicefrac{17}{16}$, $c=1$, and $\theta=\nicefrac{1}{2}$. Solid curves show the unrestricted interior branches analyzed in Corollary \ref{cor:cinfo-br-curvature}; dotted segments show only the parts of the true best responses in the truncated contest that differ from those unrestricted branches; and the green line is the $45^{\circ}$ line.}\label{BR}
\end{figure}

\begin{figure}[ht]
\centering
\includegraphics[width=8.5cm]{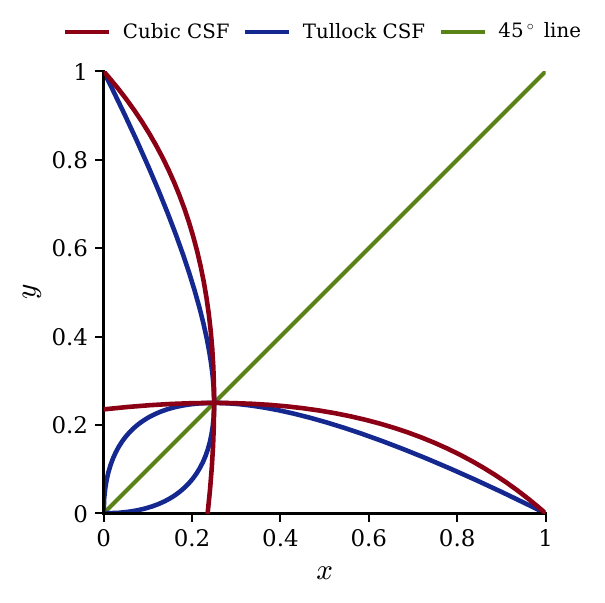}
\caption{Best responses for the cubic CSF and the Tullock CSF with comparable parameters: $a=\nicefrac{1}{2}$, $b=\nicefrac{17}{16}$, $c=1$, $\theta=\nicefrac{1}{2}$. The cubic best response is smooth at zero effort, whereas the Tullock best response has a singularity there.}\label{T-vs-P}
\end{figure}

We now turn to comparative statics. The next result is the centerpiece of the complete-information analysis:  Stronger suppression does not monotonically increase effort. Beyond a threshold, it destabilizes deterministic competition itself.

\begin{theorem}[Single-peaked equilibrium effort]\label{thm:cinfo-effort}
Equilibrium effort, viewed as a function of $a$ with $b$, $c$, and $\theta$ held fixed, is increasing on the pure-equilibrium region $\{a:\zeta(a)\geq 0\}$ and decreasing on the mixed-equilibrium region $\{a:\zeta(a)<0\}$, where on the latter every unrestricted mixed equilibrium satisfies $E[x]=b/a$. Hence equilibrium effort is maximized at the boundary $\zeta=0$, equivalently $b^2=a(c-\theta)$.
\end{theorem}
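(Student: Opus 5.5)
We rely on Theorem~\ref{thm:cinfo-unrestricted} and Remark~\ref{rem:cinfo-zero}, which give equilibrium effort as an explicit function of $a$ on each region. Write $k:=c-\theta>0$, so $\zeta(a)=(b^2-ak)/a^2$. The pure-equilibrium region $\{a:\zeta(a)\ge 0\}$ is then $\{a\le b^2/k\}$, with $a=0$ included through the continuous extension of Remark~\ref{rem:cinfo-zero}. The mixed-equilibrium region is $\{a>b^2/k\}$. This reduces the theorem to two one-variable monotonicity statements, joined by a continuity check at $a^\ast:=b^2/k$.

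On the pure region I will use the rationalized form of \eqref{pure}. For either sign of $a$ we have $\kappa-\mathrm{sgn}(a)\sqrt{\zeta}=(b-\sqrt{b^2-ak})/a$. Multiplying numerator and denominator by $b+\sqrt{b^2-ak}$ gives
\[
x^*(a)=\frac{k}{b+\sqrt{b^2-ak}},
\]
which is valid for all $a\le a^\ast$, including $a=0$. The denominator is strictly decreasing in $a$ on $(-\infty,a^\ast]$ because $\sqrt{b^2-ak}$ decreases in $a$ when $k>0$. Hence $x^*$ is strictly increasing there, positive, and continuous up to $a^\ast$, where $x^*(a^\ast)=k/b=b/a^\ast$.

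On the mixed region, Theorem~\ref{thm:cinfo-unrestricted}(2) gives $E[x]=\kappa=b/a$ for every unrestricted mixed equilibrium. This is strictly decreasing in $a>0$, and it converges to $b/a^\ast=k/b$ as $a\downarrow a^\ast$. The two branches therefore glue continuously at $\zeta=0$. Effort rises up to $a^\ast$ and falls after it, so the maximum is attained at $\zeta=0$, i.e.\ at $b^2=a(c-\theta)$.

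The only delicate point is making the statement for the pure branch uniform across the sign change of $a$, where the expressions $\kappa$ and $\mathrm{sgn}(a)$ in \eqref{pure} are singular. The rationalization above handles this: it removes the apparent singularity at $a=0$ and shows directly that the $\mathrm{sgn}(a)$ choice picks the root that stays bounded. A secondary point is interpretive: on the mixed region, ``equilibrium effort'' means the common expected effort. By Theorem~\ref{thm:cinfo-unrestricted}, this is the same across all mixed equilibria, so the statement is unambiguous despite the distributional multiplicity noted in Remark~\ref{rem:cinfo-moment-unique}.
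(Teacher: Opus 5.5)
Your proof is correct and has the same structure as the paper's: monotonicity on the pure branch, the moment restriction $E[x]=b/a$ on the mixed branch, and continuity at $\zeta=0$. The one difference is on the pure branch, where you get strict monotonicity from the rationalized closed form $x^*=(c-\theta)/\bigl(b+\sqrt{b^2-a(c-\theta)}\bigr)$; the paper instead implicitly differentiates $ax^2-2bx+c-\theta=0$ to get $dx^*/da=x^{*2}/\bigl(2(b-ax^*)\bigr)>0$, which lets your version cover $a=0$ and the boundary $\zeta=0$ without the separate continuity arguments the paper uses there.
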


The mechanism is straightforward. In the pure-equilibrium region, stronger suppression makes it more valuable to cement a lead, so each player invests more aggressively. Once suppression is strong enough to push the benchmark into the mixed-equilibrium region, the relevant equilibrium object is no longer a fixed point but the moment restriction $E[x]=b/a$: further suppression makes dispersion increasingly costly and expected effort falls. The two regions meet at the equilibrium boundary, producing a non-differentiable peak.

The substantive interpretation is that the route to lower rent dissipation through stronger suppression is also the route into mixed play, with the unpredictability that mixing entails. In the pure region, contest behavior is deterministic and rent dissipation is monotone in $a$. Once mixing takes over, players randomize, expected effort declines, but the contest outcome --- and the effort level that produces it --- becomes a matter of which draws happen to be realized. Strong suppression therefore reduces dissipation at the cost of replacing deterministic play with randomized play. This is the symmetric two-player version of the breakdown that Tullock himself flagged in the lottery family and addressed only by departing from the symmetric benchmark, through bias or restrictions on entry.\footnote{The breakdown corresponds to Zones II and III of Tullock's Tables 6.1 and 6.2, where the lottery equilibrium implies that contestants invest more in aggregate than the prize is worth. Tullock characterizes these zones as ``intellectually fascinating'' but practically unhelpful, and redirects attention to Zone I via bias and entry restrictions; see \cite{Tullock1980}, pp.~13--16. The cubic benchmark replaces that breakdown with a tractable mixed-equilibrium characterization inside the symmetric model.}

\begin{corollary}[Rent dissipation]\label{cor:cinfo-max}
Total unrestricted equilibrium effort equals $2x^*$ in the pure-equilibrium region and $2b/a$ in the mixed-equilibrium region. Hence rent dissipation is maximized at $\zeta=0$ rather than at the extreme of suppression: it rises under moderate suppression and falls once mixing takes hold.
\end{corollary}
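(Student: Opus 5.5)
The corollary follows almost directly from Theorem \ref{thm:cinfo-unrestricted} and Theorem \ref{thm:cinfo-effort}; my plan is to read off total effort from the former and transfer the monotonicity of the latter.

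\emph{Total effort.} In the pure-equilibrium region $\zeta\ge 0$, Theorem \ref{thm:cinfo-unrestricted} gives the unique equilibrium $x^*=y^*=\kappa-\mathrm{sgn}(a)\sqrt{\zeta}$, so total effort is $2x^*$. At $a=0$, Remark \ref{rem:cinfo-zero} gives $x^*=(c-\theta)/(2b)$ as the continuous extension. In the mixed-equilibrium region $\zeta<0$, every unrestricted mixed equilibrium has $E[x]=E[y]=\kappa=b/a$. By linearity of expectation, expected total effort is therefore $2b/a$ for every equilibrium, whatever the distribution. Since the prize is normalized to $1$, rent dissipation is this total effort.

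\emph{Where the maximum lies.} Theorem \ref{thm:cinfo-effort} says per-player equilibrium effort is increasing in $a$ on $\{\zeta\ge0\}$ and decreasing on $\{\zeta<0\}$; doubling preserves both. Two facts remain to be checked so that the peak is at $\zeta=0$ itself rather than a supremum approached from one side.

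\begin{itemize}
\item \emph{The two regions are intervals in $a$.} Write $\zeta(a)=(b^2-a(c-\theta))/a^2$ with $c>\theta$. For $a<0$ the numerator is positive, so $\zeta>0$. For $a>0$ the sign of $\zeta$ is that of $b^2-a(c-\theta)$, which is positive exactly when $a<a^\ast:=b^2/(c-\theta)$. Hence the pure region is $(-\infty,a^\ast]$ (with $a=0$ included via the remark) and the mixed region is $(a^\ast,\infty)$.
\item \emph{The two branches meet continuously at $a^\ast$.} At $\zeta=0$ we have $x^*=\kappa=b/a^\ast$, so $2x^*$ and $2b/a$ coincide there.
\end{itemize}

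Together these show total effort is maximized at $a=a^\ast$, i.e.\ $\zeta=0$, equivalently $b^2=a(c-\theta)$.

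\emph{Interpretation and the main point of care.} On $a\in(0,a^\ast)$, i.e.\ moderate suppression, dissipation strictly rises. For $a>a^\ast$ it equals $2b/a$, which strictly falls, so extreme suppression does not maximize dissipation.

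The only genuine subtlety is that in the mixed region equilibrium is not unique distributionally. The claim must therefore be stated in expected terms, and this is valid precisely because expectation depends only on the common first moment $\kappa$.
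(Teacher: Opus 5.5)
Your proof is correct and follows essentially the same route as the paper: read off total effort ($2x^*$ on the pure branch, $2b/a$ on the mixed branch) from Theorem~\ref{thm:cinfo-unrestricted}, transfer the monotonicity from Theorem~\ref{thm:cinfo-effort}, and match the two formulas at $\zeta=0$ where $x^*=\kappa$. The only addition is your explicit check that the pure and mixed regions are the contiguous intervals $(-\infty,b^2/(c-\theta)]$ and $(b^2/(c-\theta),\infty)$, which the paper leaves implicit and which tidily confirms that the peak is attained at $\zeta=0$.
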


\begin{center}
\includegraphics[trim=0mm 0mm 0mm 0mm, clip, width=0.82\textwidth]{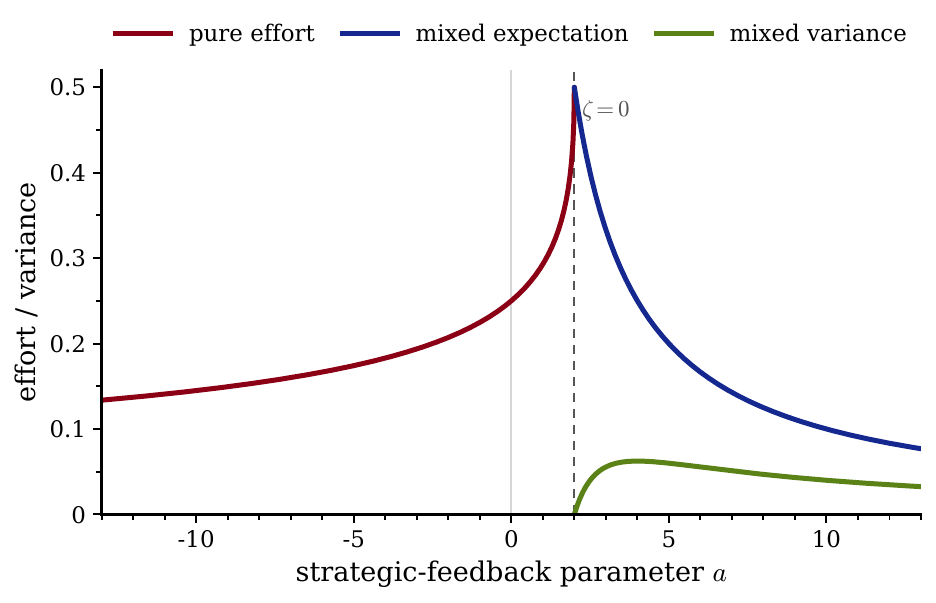}
\captionof{figure}{Equilibrium effort and variance as functions of $a$ for $b=1$, $c=1$, and $\theta=\nicefrac{1}{2}$. The effort curve has a sharp, non-differentiable peak at the pure/mixed equilibrium boundary $\zeta=0$.}\label{EffortFig}
\end{center}

A few comments are in order. First, this sharpens the usual intuition that stronger competitive pressure raises rent dissipation: in the cubic benchmark, the relationship is non-monotone in the strategic-feedback parameter, with the peak at a finite interior degree of suppression rather than at the all-pay limit. Second, we do not push the analysis all the way to extreme suppression because domain restrictions eventually become binding there, as the truncation conditions below make precise; in that limit, the all-pay auction \cite{BayeKovenockDeVries1996} is in any case the more appropriate benchmark. The point of the cubic analysis is that, well before that limit is reached, rent dissipation is already maximized at an interior $a$.

We now turn to the restrictions, following the order announced in the Model section: unrestricted benchmark, participation, local Nash equilibrium, and Nash equilibrium in the restricted game. For pure equilibrium, truncation matters only through participation, in the sense that the prize must be worth competing for given the equilibrium effort cost --- that is, equilibrium expected utility must be nonnegative, so that a player would not strictly prefer deviating to zero effort.

\begin{proposition}[Truncation and participation]\label{prop:cinfo-truncation}
\begin{enumerate}
\item \emph{Pure-equilibrium region.} If $a=0$, or if $a\neq 0$ and $\zeta\geq 0$, the unrestricted pure candidate is the unique on-domain equilibrium of the truncated contest if and only if its expected utility is nonnegative,
\[
\tfrac{1}{2}\geq \theta x^*.
\]
\item \emph{Mixed-equilibrium region.} If $a\neq 0$ and $\zeta<0$, any on-domain mixed equilibrium of the truncated contest must still satisfy the unrestricted moment condition \eqref{mixed} and the participation condition
\[
\tfrac{1}{2}\geq \theta\kappa.
\]
\end{enumerate}
Sufficient conditions for existence in the mixed-equilibrium region are given in Propositions \ref{prop:cinfo-two-point} and \ref{prop:cinfo-mixed-simple}.
\end{proposition}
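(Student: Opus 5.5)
The argument rests on one property of the truncated contest: $\bar P = P$ on $\mathcal D$, and $\bar P$ is a continuous clipping of $P$. I will first work out the payoff a player gets from a unilateral deviation against the symmetric candidate. In the pure region, fix $y=x^*$. Then $P(\cdot,x^*)$ is a cubic in $x$, and its derivative at $x^*$ equals $\theta$ by the first-order condition $ax^2-2bx+c-\theta=0$. The case analysis behind Theorem \ref{thm:cinfo-unrestricted} already tells us that $x^*$ is the global maximizer of the unrestricted payoff $P(x,x^*)-\theta x$ on $\mathbb R_+$. The payoff there is $\tfrac12-\theta x^*$.

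Next I will compare $\bar P(x,x^*)-\theta x$ with this value for every $x\ge 0$.
\begin{itemize}
\item If $P(x,x^*)\in[0,1]$, then $\bar P=P$ and the unrestricted global optimality applies directly.
\item If $P(x,x^*)<0$, clipping raises $\bar P$ to $0$. The deviation payoff is then $-\theta x\le 0$, which is maximized at $x=0$. Note that $(0,x^*)$ may itself lie outside $\mathcal D$, so the relevant benchmark is zero effort with $\bar P\ge 0$.
\item If $P(x,x^*)>1$, clipping lowers $\bar P$, so these deviations are dominated by the unrestricted payoff.
\end{itemize}
So the only deviation that clipping can make newly attractive is a collapse to zero payoff, reached at zero effort or in the region where the unclipped $P$ is negative. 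This deviation is unprofitable exactly when $\tfrac12-\theta x^*\ge 0$, which gives sufficiency. Necessity is immediate: if $\tfrac12<\theta x^*$, then deviating to $x=0$ yields $\bar P(0,x^*)-0\ge 0>\tfrac12-\theta x^*$.

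For uniqueness among on-domain equilibria, any on-domain equilibrium of the truncated game coincides with $P$ locally. Its first-order and curvature conditions, together with global optimality against deviations inside $\mathcal D$, make it an unrestricted equilibrium. Uniqueness in Theorem \ref{thm:cinfo-unrestricted}, together with Remark \ref{rem:cinfo-zero} when $a=0$, then pins it down. The main obstacle is this uniqueness step. A truncated on-domain equilibrium need only be optimal against truncated deviations, so a deviation that helps under the unclipped $P$ may be killed by clipping above $1$. To close the gap I will argue that on-domain equilibria satisfy the interior first-order conditions, since deviations are local and stay in $\mathcal D$ by openness. In the pure case the first-order conditions, which are symmetric in their structure, force $x=y$ and hence the symmetric quadratic. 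I will select the root by the second-order condition, as in Theorem \ref{thm:cinfo-unrestricted}.

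For the mixed region, take an on-domain mixed equilibrium, with the support contained in $\mathcal D$ against the opponent's support. Expected payoff on the support is then quadratic in own effort, exactly as in the unrestricted case. Indifference on the support pins down the coefficients, giving mean $\kappa$ and variance $-\zeta$, which is \eqref{mixed}. The expected payoff equals $\tfrac12-\theta\kappa$, by symmetry of $E[P]$ under identical distributions together with linearity of the cost. As before, the deviation to zero effort guarantees a payoff of at least $0$, which yields the participation condition.
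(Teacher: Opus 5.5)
Your proposal follows the paper's proof essentially step for step. It uses the three-way comparison of the clipped deviation payoff (clipping at $1$ only hurts, clipping at $0$ leaves $-\theta x\le 0$), the zero-effort deviation for both necessity and participation, local first- and second-order conditions on $\mathcal D$ for uniqueness, and the support-localized quadratic argument for the moments.

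A few small slips are worth fixing:
\begin{itemize}
\item For fixed $y$, $P(\cdot,y)$ is quadratic in own effort, not cubic, with leading coefficient $ay-b$. That is exactly why $x^*$ is a global unrestricted maximizer.
\item In the mixed part, what forces the quadratic to be constant is that every support point is a local maximizer of it. Mere equality of payoffs at two support points would not suffice.
\item The common payoff $\tfrac12-\theta\kappa$ does not follow from the two mixed strategies being identical, since they need not be. It holds because $\mathrm{E}[P(X,Y)]$ depends on each player only through the first two moments. Equivalently, the constant quadratic equals its value at zero, $\tfrac12-cE_1+bE_2=\tfrac12-\theta\kappa$.
\end{itemize}
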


In the pure-equilibrium region, participation is the whole story for benchmark survival: once it holds, the unrestricted candidate remains an equilibrium of the truncated contest. In the mixed-equilibrium region, the proposition gives a necessary restriction --- the unrestricted moments plus participation --- but turning it into an existence statement requires specifying a representation. The natural one is the smallest support that matches the moments.

\begin{proposition}[Canonical two-point family]\label{prop:cinfo-two-point}
\leavevmode\par\noindent
Assume the mixed-equilibrium region, i.e.\ $a>0$ and $\zeta<0$. Let
\[
s:=\sqrt{-\zeta},
\qquad
z:=\frac{c-\theta}{b},
\qquad
p:=\frac{b^2}{a(c-\theta)}.
\]
The moment profile \eqref{mixed} is generated by the following piecewise two-point family, which is an unrestricted mixed equilibrium by Theorem \ref{thm:cinfo-unrestricted}. If
\[
a\leq \frac{2b^2}{c-\theta},
\]
each player assigns probability $\nicefrac{1}{2}$ to each of $\kappa-s$ and $\kappa+s$. If
\[
a\geq \frac{2b^2}{c-\theta},
\]
each player assigns probability $1-p$ to $0$ and probability $p$ to $z$. At the switch point $a=2b^2/(c-\theta)$, the two branches coincide.
\end{proposition}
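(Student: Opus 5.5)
The plan is to use the converse half of Theorem \ref{thm:cinfo-unrestricted}, part 2: any pair of distributions on $[0,\infty)$ with mean $\kappa$ and variance $-\zeta=s^2$ is an unrestricted mixed equilibrium. So I only need to show that each branch of the family is a distribution on $[0,\infty)$ with these two moments, and that the branches agree at the switch point. No payoff or best-response computation should be required beyond invoking the theorem.

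\emph{Symmetric branch.} The distribution putting mass $\tfrac12$ on each of $\kappa\pm s$ has mean $\kappa$ and variance $s^2=-\zeta$ automatically. What must be checked is nonnegativity, i.e.\ $\kappa-s\ge 0$, or equivalently $\kappa^2\ge -\zeta$. I will write
\[
-\zeta=\frac{c-\theta}{a}-\kappa^2,
\]
so the condition becomes $2\kappa^2\ge (c-\theta)/a$. Since $\kappa=b/a$, this is $2b^2/a^2\ge (c-\theta)/a$, which (as $a>0$) is exactly $a\le 2b^2/(c-\theta)$.

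\emph{Atom-at-zero branch.} Here I first check that $p\in(0,1]$. Positivity is clear, and $p<1$ is equivalent to $b^2<a(c-\theta)$, which is exactly $\zeta<0$. So this is a valid distribution on $\{0,z\}\subset[0,\infty)$, using $z>0$ (from $c>\theta$ and $b>0$). Next I compute its moments:
\[
pz=\frac{b^2}{a(c-\theta)}\cdot\frac{c-\theta}{b}=\frac{b}{a}=\kappa,
\qquad
pz^2=\kappa z=\frac{c-\theta}{a}.
\]
Hence the variance is
\[
pz^2-\kappa^2=\frac{c-\theta}{a}-\kappa^2=-\zeta,
\]
as required. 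This branch is nonnegative for every parameter value in the mixed region. The stated threshold $a\ge 2b^2/(c-\theta)$ is where I will need to be careful: I should verify that it marks where the symmetric branch stops being admissible, so the piecewise choice covers the whole mixed region with a valid distribution on each side.

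\emph{Switch point.} At $a=2b^2/(c-\theta)$, the computation above gives $\kappa-s=0$. Then $\kappa+s=2\kappa=2b/a=(c-\theta)/b=z$, and $p=\tfrac12$. So the two branches are the same distribution. I expect the main (mild) obstacle to be the bookkeeping in the nonnegativity equivalence and confirming that the switch threshold lies inside the mixed region. The latter holds because $2b^2/(c-\theta)>b^2/(c-\theta)$, which is the $\zeta=0$ boundary.
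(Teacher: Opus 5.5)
Your proposal is correct and follows the paper's proof. Both reduce the claim to the converse half of Theorem~\ref{thm:cinfo-unrestricted}, show the symmetric support is nonnegative exactly when $a\le 2b^2/(c-\theta)$, match the endpoint branch's moments, and check $s=\kappa$, $\kappa+s=z$, $p=\tfrac12$ at the switch point. The paper solves $pz=\kappa$, $pz^2=(c-\theta)/a$ for $(z,p)$ rather than verifying given values, which is the same computation, and its detailed proof also includes your check that $p\in(0,1)$ follows from $\zeta<0$.
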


Truncation conditions, however, are representation-dependent: different distributions with the same moments can interact differently with the boundary of the admissible region. We now give conservative primitive conditions under which the canonical two-point family is a Nash equilibrium of the restricted game. These conditions are deliberately strong; they give up sharpness in order to deliver clean closed-form expressions, and the Appendix sharpens them substantially.

\begin{proposition}[Restricted-game Nash equilibrium]\label{prop:cinfo-mixed-simple}
Assume the mixed-equilibrium region, and in addition assume that
\[
\theta<c\leq 2\theta,
\qquad
b\geq \frac{c^{2}}{2}.
\]
Then there exists an explicit cutoff $\bar a_{M}\geq 2b^{2}/(c-\theta)$, derived in the Appendix, such that for every $a\leq \bar a_{M}$ within the mixed-equilibrium region, the truncated contest admits an on-domain mixed equilibrium, with the canonical two-point family of Proposition \ref{prop:cinfo-two-point} as one admissible representative. In particular, the on-domain mixed-equilibrium region is nonempty.
\end{proposition}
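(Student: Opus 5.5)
The approach is constructive: fix the canonical two-point family of Proposition~\ref{prop:cinfo-two-point} and show that, under the stated primitive conditions and for $a$ up to an explicit cutoff, two things hold. First, its support pairs lie in $\mathcal D$. Second, no deviation, near or distant, is profitable in the truncated contest.

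By Theorem~\ref{thm:cinfo-unrestricted}, the family is already an unrestricted equilibrium. Against it, the unrestricted payoff $E_y[P(x,y)]-\theta x$ is a quadratic in $x$. It is constant on the relevant range (indifference), and its value equals $\tfrac12-\theta\kappa$. The participation bound $\theta\kappa\le\tfrac12$ follows from $a>b^2/(c-\theta)$ and $c\le 2\theta$: we get $\theta\kappa=\theta b/a<\theta(c-\theta)/b$, and the hypothesis $b\ge c^2/2$ keeps this below $\tfrac12$. So the whole task reduces to controlling where truncation can bind.

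\textbf{Step 1: admissibility of the support.} In the symmetric branch ($a\le 2b^2/(c-\theta)$), the support is $\{\kappa-s,\kappa+s\}$. Diagonal pairs give $P=\tfrac12$. The off-diagonal pair gives
\[
P-\tfrac12=2s\bigl(c-2b\kappa+a(\kappa^2-s^2)\bigr).
\]
Substituting $\kappa=b/a$ and $s^2=(c-\theta)/a-b^2/a^2$ turns this into an explicit expression in $a,b,c,\theta$. I would bound its absolute value by $\tfrac12$ using $b\ge c^2/2$ and $c\le 2\theta$. For the two-atom branch with support $\{0,z\}$, one computes $P(z,0)-\tfrac12=z(c-bz)=z\theta$, so admissibility reduces to $z\theta<\tfrac12$, i.e.\ $\theta(c-\theta)<b/2$. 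This follows from $\theta(c-\theta)\le c^2/4\le b/2$.

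\textbf{Step 2: ruling out deviations, which is the main obstacle.} For a deviation $x\ge0$, the truncated payoff is
\[
\tfrac12\bigl[\bar P(x,y_1)+\bar P(x,y_2)\bigr]-\theta x.
\]
The key inequality is that this is at most the unrestricted payoff whenever clipping at $1$ is what binds, because $\bar P\le P$ there. Hence upward clipping can only hurt the deviator. The danger is clipping at $0$, where $\bar P>P$. I would characterize the set of $x$ for which $P(x,y_i)<0$ for some support point. Since $\mathcal D$ is bounded along non-diagonal rays (Observation~\ref{obs:domain-geometry}), this happens only for large $x$ or for $x$ near $0$ against a large opponent. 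On that set I would bound the truncated payoff crudely by $\tfrac12\cdot 1+\tfrac12 P^+ -\theta x$, and compare it with $\tfrac12-\theta\kappa$.

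For large $x$, the cost $\theta x$ dominates once $x$ exceeds roughly $1/(2\theta)+\kappa$. Requiring $P(x,y_i)\ge0$ for all $x$ below that threshold yields an inequality of the form $a\le \bar a_M$. Solving this inequality explicitly, and checking that it is compatible with $a\ge 2b^2/(c-\theta)$ (so that the cutoff $\bar a_M$ reaches into the two-atom branch), is the step I expect to be hardest. I would first attempt it using the $a$-monotonicity of the support points and the worst case $c=2\theta$.

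\textbf{Step 3: conclusion.} Combining the three ingredients gives the result: the support is admissible by Step~1, participation holds, and the deviation bound of Step~2 applies for every $a\le\bar a_M$. Together these yield an on-domain equilibrium of the truncated contest. Nonemptiness follows because $\bar a_M\ge 2b^2/(c-\theta)>b^2/(c-\theta)$, so the interval of admissible $a$ within the mixed region is nondegenerate.
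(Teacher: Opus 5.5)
The gap is in Step 2. You dispose of large deviations only once $x\gtrsim\kappa+1/(2\theta)$. You then require $P(x,y_i)\geq 0$ for every support point and every smaller $x$.

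Under the stated hypotheses this requirement is false, and the failure does not depend on $a$. Take $c=1$, $\theta=3/5$, $b=10$, which satisfy $\theta<c\leq 2\theta$ and $b\geq c^2/2$. Then $P(x,0)=\tfrac12+x-10x^2<0$ for all $x>(1+\sqrt{21})/20\approx 0.28$, well below $1/(2\theta)=5/6$. The point $0$ is in the support for every $a\geq 2b^2/(c-\theta)$, and $\kappa-s$ is close to $0$ near the switch point. So your condition fails on the whole endpoint branch and on part of the symmetric branch. It therefore cannot produce any cutoff $\bar a_M\geq 2b^2/(c-\theta)$, even though these deviations are in fact unprofitable.

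The missing observation is that when one support point's probability is clipped at $0$, the other term contributes at most its weight. On the symmetric branch this caps the payoff at $\tfrac12-\theta x\leq\tfrac12-\theta\kappa$ for every $x\geq\kappa$. On the endpoint branch, $P(\cdot,0)<0$ only beyond its root $x_0$, and there the payoff is at most $p-\theta x$ with $p\leq\tfrac12$. Also $x_0>\kappa$, because $c-b\kappa>\theta$ gives $P(\kappa,0)>\tfrac12$. The paper reaches the same conclusion through the identity $P(x,\kappa-s)+P(x,\kappa+s)=1+2\theta(x-\kappa)$ and the bound $P(x_0,z)\geq 1$.

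After this, two checks remain:
on the symmetric branch, both raw probabilities must be nonnegative on $[0,\kappa]$;
on the endpoint branch, $P(\cdot,z)$ must be nonnegative on $[0,x_0]$.

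Those checks are the real content of the proposition, and your proposal defers them as ``the hardest step.''

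On the symmetric branch no cutoff in $a$ is needed. $P(\cdot,\kappa-s)$ is increasing on $[0,\kappa]$. The convex quadratic $P(\cdot,\kappa+s)$ has vertex $\kappa-\theta/(2as)\leq 0$. This holds because $2as\kappa=2bs\leq c-\theta\leq\theta$, using $s\leq(c-\theta)/(2b)$ on this branch together with $c\leq 2\theta$. This is where $c\leq 2\theta$ is actually used, not in participation. Both checks then reduce to $P(0,y)=\tfrac12-cy+by^2\geq 0$, which is automatic from $c^2\leq 2b$.

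On the endpoint branch, the condition $\min_{[0,x_0]}P(x,z)\geq 0$ becomes a tangency condition in $a$, whose solution is the explicit $\bar a_M$. You must then show
\[
\bar a_M-\frac{2b^2}{c-\theta}=\frac{b^2\bigl(b-c^2+c\theta+\sqrt{b(b-2c\theta+2\theta^2)}\bigr)}{(c-\theta)^3}\geq 0,
\]
which follows from $b-c^2+c\theta\geq c(\theta-c/2)\geq 0$ and $b-2c\theta+2\theta^2\geq\tfrac12(c-2\theta)^2$. Your Step 3 assumes $\bar a_M\geq 2b^2/(c-\theta)$ rather than deriving it. As a result, neither the endpoint range nor its nonemptiness is established.
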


Economically, the two additional inequalities jointly bound the aggressiveness of the contest technology: $\theta<c\leq 2\theta$ keeps the prize-to-cost ratio from making low-effort competition too aggressive, and $b\geq c^2/2$ ensures that own-effort returns decay fast enough that large investments do not become too potent. Together they carve out a nonempty safe region for suppression in which the unrestricted comparative statics survive in the literal truncated contest. The Appendix gives the tight branch-specific conditions and shows that admissibility genuinely fails once suppression becomes sufficiently strong.

\begin{remark}[Local versus global Nash equilibrium]\label{rem:cinfo-local-mixed}
For the canonical two-point families in Proposition~\ref{prop:cinfo-two-point}, a simple sufficient condition for local Nash equilibrium is that the support lie strictly inside $\mathcal D$, so that truncation is inactive for sufficiently small deviations. This holds on the symmetric branch if and only if $s<1/(4\theta)$ (equivalently $P(\kappa-s,\kappa+s)>0$) and on the endpoint branch if and only if $b>2\theta(c-\theta)$ (equivalently $P(0,z)>0$). These local support checks are weaker than the global deviation checks behind Proposition \ref{prop:cinfo-mixed-simple}; under the strict primitive version $c<2\theta$ and $b>c^2/2$, they are automatic. At boundary equalities the support can touch the truncation boundary, so the strict local-interiority criterion no longer applies.
\end{remark}
\subsection{Incomplete Information}

We now allow each player's cost to be private, with types drawn IID from $F$. The contest technology is unchanged; what changes is that each player chooses effort against a distribution of opponent efforts rather than a known action. As under complete information, the cubic CSF admits a closed-form benchmark.\footnote{Unlike the standard auction-theoretic treatment, our analysis imposes no regularity on $F$ beyond a bounded support in $(0,c)$; in particular, $F$ may have atoms. The closed-form structure of Theorem \ref{thm:bn-unrestricted} survives intact.}

As in the complete-information analysis, we proceed sequentially. In the Bayesian case it is useful to insert one algebraic step before the hierarchy from Section~2: first solve the affine relaxation that ignores $x\geq0$, then impose nonnegativity in the unrestricted polynomial game, and finally check truncation. Let
\[
M_1=\mathrm{E}[\theta],
\qquad
M_2=\mathrm{E}[\theta^2],
\qquad
\Delta:=c-M_1,
\qquad
\sigma_\theta^2:=M_2-M_1^2,
\]
and, for $a\neq 0$, define
\[
\kappa=\frac{b}{a},
\qquad
\omega=\frac{\sigma_\theta^2}{a^2},
\qquad
\zeta=\kappa^2-\frac{\Delta}{a}=\frac{b^2-a\Delta}{a^2}.
\]
Throughout this subsection, assume nondegenerate type uncertainty, so $\sigma_\theta^2>0$. The variable $\zeta$ is the complete-information expression with the realized cost $\theta$ replaced by the mean type $M_1$; $\omega$ is the scaled variance of types and is the channel through which uncertainty enters equilibrium effort.

Given a symmetric opponent strategy $x(\theta)$, if player $X$ of type $\theta_x$ deviates to $\tilde x$, polynomial interim utility is
\[
U(\theta_x,\tilde x)=\int_{\alpha}^{\beta} \left[P(\tilde x,x(\theta_y))-\theta_x\tilde x\right] \, dF(\theta_y),
\]
and the expression for player $Y$ is analogous. On the admissible domain, where $\bar P=P$, this also coincides with the truncated contest studied later.

\begin{theorem}[Affine Bayesian Nash benchmark]\label{thm:bn-unrestricted} 
Assume $a\neq 0$. Consider the affine relaxation, in which actions may be any real number. There is a unique symmetric Bayesian Nash equilibrium. It is affine and strictly decreasing, and expected effort is
\[
\mathrm{E}[x]=\kappa-\frac{\mathrm{sgn}(a)}{\sqrt{2}} \sqrt{ \zeta + \sqrt{\zeta^2+ \omega}}.
\]
\end{theorem}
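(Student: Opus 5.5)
The plan is to exploit the fact that the raw cubic CSF makes interim utility a quadratic in own effort whose coefficients depend on the opponent's strategy only through its first two moments. This reduces the symmetric Bayesian Nash problem to a two-equation fixed point in $(m_1,m_2)$, the same way indifference pins down only two moments in Theorem \ref{thm:cinfo-unrestricted}.

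\textbf{Step 1: interim utility.} Expanding \eqref{1} gives
\[
P(x,y)=\tfrac12+cx-cy-bx^2+by^2+ax^2y-axy^2 .
\]
Fix a symmetric opponent strategy $x(\cdot)$ with finite second moment (needed for $U$ to be defined), and write $m_1=\mathrm{E}[x(\theta_y)]$ and $m_2=\mathrm{E}[x(\theta_y)^2]$. Integrating over $\theta_y$, the interim utility of type $\theta$ deviating to $\tilde x\in\mathbb R$ is, up to terms not involving $\tilde x$,
\[
U(\theta,\tilde x)=-(b-am_1)\tilde x^2+(c-\theta-am_2)\tilde x .
\]
Set $D:=b-am_1$.

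\textbf{Step 2: necessity of $D>0$, hence of an affine rule.} Because actions range over all of $\mathbb R$, any $D<0$ makes $U$ unbounded above for every type, so no best response exists. If $D=0$, $U$ is linear with slope $c-\theta-am_2$. This slope is nonzero for all but at most one type, and $\sigma_\theta^2>0$ guarantees such types carry positive mass, so again best responses fail to exist for a positive mass of types. Hence every symmetric equilibrium has $D>0$. The best response is then the unique maximizer
\[
x(\theta)=\frac{c-\theta-am_2}{2D},
\]
which is affine and strictly decreasing in $\theta$. This establishes the "affine and strictly decreasing" claim, and reduces uniqueness to uniqueness of the moment fixed point.

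\textbf{Step 3: the moment fixed point.} Taking expectations of the affine rule and using $m_1=(b-D)/a$ gives
\[
2Dm_1=\Delta-am_2 .
\]
Squaring the rule gives
\[
m_2=m_1^2+\frac{\sigma_\theta^2}{4D^2}.
\]
Substituting $m_1=(b-D)/a$, the combination $2D(b-D)+(b-D)^2$ collapses to $b^2-D^2$, yielding
\[
D^2=b^2-a\Delta+\frac{a^2\sigma_\theta^2}{4D^2}.
\]
With $u:=D^2/a^2$ this becomes the quadratic
\[
u^2-\zeta u-\tfrac{\omega}{4}=0 .
\]
Since $\omega>0$, this has exactly one positive root, $u=\tfrac12\bigl(\zeta+\sqrt{\zeta^2+\omega}\bigr)$. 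Together with $D>0$, this determines $D=|a|\sqrt u$ uniquely. Then
\[
\mathrm{E}[x]=m_1=\kappa-\frac{D}{a}=\kappa-\mathrm{sgn}(a)\sqrt u,
\]
which is the stated formula. Conversely, $D$, $m_1$ and $m_2$ built this way are mutually consistent, and the resulting affine rule is a best response to itself because $D>0$; this gives existence.

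\textbf{Main obstacle.} The delicate part is Step 2, the uniqueness argument. One must rule out non-affine or pathological symmetric equilibria, including those with $D\le 0$ or infinite second moments. The plan handles this by noting that finite $m_2$ is needed for $U$ to be defined, and that any $D\le0$ destroys existence of best responses on the relaxed action space $\mathbb R$; $\sigma_\theta^2>0$ is used exactly to dispose of the knife-edge $D=0$. Once $D>0$ is forced, everything else is algebra on the two moment equations.
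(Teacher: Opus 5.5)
Your proposal is correct and follows essentially the same route as the paper: interim utility is quadratic in own effort and depends on the opponent only through its first two moments, and you force strict concavity ($D>0$, which is the paper's $aE_1<b$) by ruling out $D<0$ and the linear knife-edge $D=0$ on the relaxed action space $\mathbb R$. You then solve the moment fixed point as the quadratic $u^2-\zeta u-\omega/4=0$ in $u=D^2/a^2=(E_1-\kappa)^2$, and the sign is selected by concavity. The only difference is notation: your variable $D$ stands in for the paper's $z=E_1-\kappa$, with $D=-az$.
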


For later use, write the equilibrium as $x(\theta)=k\theta + d$. Then the slope, intercept, and effort variance are
\[
k=-\frac{1}{\sqrt{2}|a| \sqrt{ \zeta + \sqrt{\zeta^2+ \omega}}},
\qquad
d=\mathrm{E}[x]-kM_1,
\qquad
\mathrm{Var}(x)=\frac{\sqrt{\zeta^{2}+\omega}-\zeta}{2}.
\]

Two features deserve emphasis. First, the equilibrium is in closed form and affine in type for any IID prior --- a tractability that the lottery family does not offer at this generality. Second, expected effort depends on the type distribution only through its mean and variance, exactly as in the complete-information mixed-equilibrium region: only the first two moments matter. The equilibrium is therefore robust to any informational refinement that preserves the mean and variance: changes in higher moments, in the shape of $F$, or in finer distributional features do not affect the equilibrium characterization.

\begin{remark}[Degenerate case $a=0$]\label{rem:bn-zero}
At $a=0$, raw strategic interaction disappears. In the affine relaxation and the unrestricted polynomial game, each type plays the dominant strategy $x(\theta)=(c-\theta)/(2b)>0$, where positivity follows from $\beta<c$. Expected effort is
\[
\mathrm{E}[x]=\frac{c-M_1}{2b},
\]
which depends only on the mean type. Uncertainty therefore has no effect at $a=0$, and the formulas from Theorem \ref{thm:bn-unrestricted} extend continuously to this case, with $k\to -1/(2b)$, $d\to c/(2b)$, and $\mathrm{Var}(x)\to (M_2-M_1^2)/(4b^2)$ as $a\to 0$. In particular, full activity for the nonnegative-effort benchmark holds in a neighborhood of $a=0$. The literal truncated contest is a separate check: at $a=0$, the affine support lies strictly inside $\mathcal D$ if and only if $\beta^2-\alpha^2<2b$, and global restricted-game survival must still rule out deviations created by clipping.
\end{remark}

\begin{remark}[Vanishing uncertainty as equilibrium selection]\label{rem:bn-vanishing}
\leavevmode\par\noindent
Within the affine benchmark, the variance-only dependence noted above has a useful implication in the complete-information limit. As $\sigma_\theta^2\to 0$, the moments of the Bayesian effort distribution converge to the complete-information benchmarks: $\mathrm{E}[x]\to x^*$ and $\mathrm{Var}(x)\to 0$ if $\zeta\geq 0$, while $\mathrm{E}[x]\to \kappa$ and $\mathrm{Var}(x)\to -\zeta$ if $\zeta<0$. Thus, in the complete-information mixed-equilibrium region, vanishing private uncertainty selects the mixed-equilibrium moment profile pinned down by the affine Bayesian limit; for any shrinking prior family whose standardized types converge, the affine push-forward also selects a corresponding distribution with those limiting moments.
\end{remark}

We now turn to the comparative statics. We start with the effect of uncertainty itself, holding the strategic-feedback parameter fixed. Within the fully-active region of the affine benchmark, expected effort depends on uncertainty only through $\omega=\sigma_\theta^2/a^2$, so the sign of the comparative static follows immediately from the sign of $a$.

\begin{theorem}[Uncertainty and effort]\label{thm:uncertainty}
Fix $a$, $b$, $c$, and the mean type $M_1$, and restrict attention to variance changes for which the affine benchmark is fully active, so the affine rule is nonnegative on the whole support. Expected equilibrium effort is decreasing in the variance of types $\sigma_\theta^2$ under suppression ($a>0$) and increasing in $\sigma_\theta^2$ under empowerment ($a<0$). At $a=0$, expected effort is independent of $\sigma_\theta^2$.
\end{theorem}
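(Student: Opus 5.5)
The plan is to take the closed form of Theorem~\ref{thm:bn-unrestricted} at face value and differentiate it in $\sigma_\theta^2$, treating $a$, $b$, $c$ and $M_1$ as fixed. Two points must be checked first.

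\emph{Validity of the formula.} In the fully active region the affine rule is nonnegative on $[\alpha,\beta]$. The constraint $x\ge 0$ therefore never binds, and the affine relaxation coincides with the unrestricted polynomial game. Hence the expression for $\mathrm{E}[x]$ in Theorem~\ref{thm:bn-unrestricted} is the equilibrium expected effort for every variance in the admissible range.

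\emph{Dependence on the variance.} Since $M_1$ is fixed, $\kappa=b/a$ and $\zeta=(b^2-a\Delta)/a^2$ do not depend on $\sigma_\theta^2$. Only $\omega=\sigma_\theta^2/a^2$ moves, and it moves strictly increasingly in $\sigma_\theta^2$.

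With these in place, write
\[
g(\omega):=\zeta+\sqrt{\zeta^2+\omega}.
\]
Since $\omega>0$, we have $\sqrt{\zeta^2+\omega}>|\zeta|$, so $g(\omega)>0$. This holds for both signs of $\zeta$, so the outer square root is well defined and differentiable. Moreover
\[
g'(\omega)=\frac{1}{2\sqrt{\zeta^2+\omega}}>0.
\]
It follows that $h(\omega):=\sqrt{g(\omega)}$ is strictly increasing in $\omega$, with $h'(\omega)=g'(\omega)/(2h(\omega))>0$. Since
\[
\mathrm{E}[x]=\kappa-\frac{\mathrm{sgn}(a)}{\sqrt 2}\,h(\omega),
\]
we obtain
\[
\frac{\partial \mathrm{E}[x]}{\partial \sigma_\theta^2}=-\frac{\mathrm{sgn}(a)}{\sqrt2\,a^2}\,h'(\omega).
\]
This is negative for $a>0$ and positive for $a<0$, which gives the two directional claims.

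For $a=0$, invoke Remark~\ref{rem:bn-zero}. There each type plays the dominant strategy $(c-\theta)/(2b)$, so $\mathrm{E}[x]=(c-M_1)/(2b)$. This is independent of $\sigma_\theta^2$.

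The one point needing care is the claim that the fully active restriction makes the closed form applicable along the whole variance path being compared. If the variance change is along a path, that path must stay inside the fully active region; this is exactly the theorem's hypothesis, so I would state it explicitly and otherwise compare the formula pointwise. The sign computation itself is routine. The only subtlety there is positivity of $g$ when $\zeta<0$, which follows from $\omega>0$, using the standing assumption $\sigma_\theta^2>0$.
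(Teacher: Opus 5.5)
Your proposal is correct and follows the paper's proof essentially verbatim. The paper likewise holds $\kappa$ and $\zeta$ fixed and differentiates the closed form of Theorem~\ref{thm:bn-unrestricted} in $\omega=\sigma_\theta^2/a^2$, obtaining a derivative of sign $-\mathrm{sgn}(a)$, and it handles $a=0$ through Remark~\ref{rem:bn-zero}. Your explicit check that $\zeta+\sqrt{\zeta^2+\omega}>0$ when $\zeta<0$, and your remark that the variance path must stay inside the fully active region, are small points the paper leaves implicit.
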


The mechanism mirrors the complete-information curvature analysis. Under suppression, a random setback is especially costly because falling behind lowers the payoff from further effort, and types react to that prospect by investing less; greater dispersion amplifies the expected cost and depresses average effort. Under empowerment, the same logic runs in reverse: being behind is less damaging because the leading side's effort raises the trailing side's marginal return, so dispersion in types raises expected effort. This is the same asymmetry that produced the curvature flip in Figure \ref{BR}; it now governs how the contest responds to private information.

The result clarifies how our comparative static differs from the existing incomplete-information literature. Hurley and Shogren identify two forces in asymmetric-information lottery contests: effort becomes a risky input, which tends to reduce effort, while perceptions of value asymmetries can raise or lower effort and make efficiency rankings unsystematic; Wasser shows that no information yields lower expected aggregate effort than both private and complete information under equal mean costs, but finds no general effort ranking between complete and private information; and Kirkegaard obtains tractability in mixture contests through summary measures of opponents' behavior, with uncertainty sometimes raising expected action even as performance falls \cite{HurleyShogren1998,Wasser2013,Kirkegaard2025}. Theorem~\ref{thm:uncertainty} gives a cleaner statement within a single specification: holding the rest of the model fixed, the sign of $a$ alone determines whether greater type dispersion discourages expected effort under suppression or encourages it under empowerment.

The variance-only dependence in Theorem~\ref{thm:bn-unrestricted} also has direct consequences for information design. We use the symmetric IPV persuasion environment of \cite{BergemannEtAl2022}. Fix a prior $F$ and let $\mathcal S$ be the set of symmetric Bayes-plausible information structures in which each player observes a private signal about their own type only: the same signal rule is applied to each player, and signals are independent across players conditional on the type profile. For each $\pi\in\mathcal S$, let $\hat\theta_\pi$ denote the symmetric distribution of posterior-mean costs $\mathrm{E}[\theta\mid s_\pi]$ induced by that signal rule, so each player's realized posterior mean is an IID draw from $\hat\theta_\pi$. The affine benchmark induced by $\pi$ is obtained by replacing the original type distribution with $\hat\theta_\pi$; when $\hat\theta_\pi$ is degenerate, the benchmark is interpreted by continuity. Away from neutrality, any signal that changes the variance of posterior-mean costs changes expected effort, and any signal that does not leaves expected effort unchanged. The next theorem turns this observation into an all-or-nothing disclosure result whose sign is again governed by suppression versus empowerment.

\begin{theorem}[Effort-maximizing disclosure in symmetric IPV persuasion]\label{thm:bn-disclosure}
In the affine benchmark induced by the persuasion model above, ex ante expected effort depends on a signal $\pi$ only through $\mathrm{Var}(\hat\theta_\pi)$. Consequently, an effort-maximizing designer chooses
\begin{enumerate}
\item no disclosure under suppression ($a>0$),
\item full disclosure under empowerment ($a<0$), and
\item any feasible signal at $a=0$ (the designer is indifferent).
\end{enumerate}
Equivalently, more Blackwell-informative signal structures weakly lower expected effort under suppression and weakly raise it under empowerment.
\end{theorem}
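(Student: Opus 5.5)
The plan is to reduce the disclosure problem to the comparative static of Theorem~\ref{thm:uncertainty}, using the fact that Theorem~\ref{thm:bn-unrestricted} holds for an arbitrary IID prior, atoms allowed.

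\textbf{Step 1: the game under a signal.} Fix $\pi\in\mathcal S$. Since payoffs are quasi-linear in $\theta$ and signals are private and conditionally independent across players, a player with signal $s$ has interim utility against any opponent strategy equal to
\[
E[P(\tilde x, y)\mid s]-E[\theta\mid s]\,\tilde x .
\]
Hence the induced game is exactly the IID Bayesian game of Theorem~\ref{thm:bn-unrestricted} with types replaced by posterior means $\hat\theta_\pi$. This requires $\hat\theta_\pi$ to be supported in $[\alpha,\beta]\subset(0,c)$, which holds because posterior means are convex combinations of types; the footnote to the incomplete-information subsection allows atoms. Bayes plausibility (the law of iterated expectations) gives $E[\hat\theta_\pi]=M_1$, so $\Delta$ and $\zeta$ are unchanged. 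The variance $v_\pi:=\mathrm{Var}(\hat\theta_\pi)$ enters only through $\omega=v_\pi/a^2$.

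\textbf{Step 2: effort as a function of $v$.} By the expected-effort formula of Theorem~\ref{thm:bn-unrestricted}, ex ante expected effort is
\[
E_\pi[x]=\kappa-\tfrac{\mathrm{sgn}(a)}{\sqrt2}\sqrt{\zeta+\sqrt{\zeta^2+v_\pi/a^2}},
\]
which depends on $\pi$ only through $v_\pi$. The degenerate case $v_\pi=0$ is read by continuity, matching the pure or moment-pinned complete-information benchmark as in Remark~\ref{rem:bn-vanishing}. The map $v\mapsto\sqrt{\zeta+\sqrt{\zeta^2+v/a^2}}$ is strictly increasing, so expected effort is strictly decreasing in $v$ for $a>0$ and strictly increasing for $a<0$; this is Theorem~\ref{thm:uncertainty}. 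For $a=0$, Remark~\ref{rem:bn-zero} gives expected effort $(c-M_1)/(2b)$, which is independent of $\pi$.

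\textbf{Step 3: the feasible range of $v$.} The law of total variance gives
\[
\sigma_\theta^2=E[\mathrm{Var}(\theta\mid s)]+v_\pi ,
\]
so $v_\pi\in[0,\sigma_\theta^2]$. No disclosure attains $v=0$ and full disclosure attains $v=\sigma_\theta^2$. The designer's optimum is therefore no disclosure for $a>0$ and full disclosure for $a<0$.

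\textbf{Step 4: the Blackwell statement.} If $\pi$ is Blackwell more informative than $\pi'$, then $\hat\theta_{\pi'}$ is a mean-preserving contraction of $\hat\theta_\pi$. This gives $v_{\pi'}\le v_\pi$, and the weak monotonicity follows from Step 2.

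\textbf{Main obstacle.} The delicate point is the claim in Step 1 that the induced game is exactly the benchmark. Each player knows only his own signal, and the opponent's strategy is a function of the opponent's signal. The interim payoff is linear in the private cost, so only $E[\theta\mid s]$ matters. Conditional independence makes the opponent's posterior mean an independent draw from $\hat\theta_\pi$, which justifies applying Theorem~\ref{thm:bn-unrestricted} to $\hat\theta_\pi$. One must also note that the statement is made within the affine benchmark, so full-activity and truncation caveats are not needed.
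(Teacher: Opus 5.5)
Your proposal is correct and follows the paper's proof step for step. You reduce the signal to IID posterior-mean types and use Bayes plausibility to keep $M_1$ (hence $\kappa$ and $\zeta$) fixed. You then read off from Theorem~\ref{thm:bn-unrestricted} that expected effort depends on $\pi$ only through $\omega=\mathrm{Var}(\hat\theta_\pi)/a^2$, and bound that variance in $[0,\sigma_\theta^2]$ by the law of total variance, with no and full disclosure at the endpoints. The Blackwell ordering comes from the garbling identity $\mathrm{E}[\theta\mid s_{\pi'}]=\mathrm{E}\bigl[\mathrm{E}[\theta\mid s_{\pi}]\mid s_{\pi'}\bigr]$.

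The only difference from the paper is cosmetic: you get the sign by noting that the nested radical is strictly increasing in $v$, whereas the paper differentiates in $\omega$. Your version has the small advantage that it covers the $v=0$ endpoint directly, without a separate continuity step for monotonicity.
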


The contrast with nearby disclosure results is instructive. \cite{BergemannEtAl2022} studies symmetric IPV persuasion in classical auctions; with two bidders and revenue maximizing designer, their result specializes to full pooling --- no disclosure --- and by revenue equivalence the same conclusion holds for the two-player all-pay auction, which corresponds to the case of extreme suppression. Our suppressive endpoint reproduces exactly that no-disclosure benchmark; the empowerment endpoint has no analogue in the suppressive auction or lottery families. In contests, \cite{ZhangZhou2016}, \cite{Serena2022}, and \cite{AntsyginaTeteryatnikova2023} also obtain genuinely interior disclosure policies in richer environments. The all-or-nothing structure here is not a contradiction of those results but a consequence of the cubic benchmark: with a one-dimensional sufficient statistic (the variance of the posterior mean), the designer's problem collapses to a sign question, and the sign is governed by the suppression--empowerment distinction. Conditions under which the same all-or-nothing conclusion holds in the literal truncated contest are given in Proposition~\ref{prop:bn-disclosure-sufficient} below, once the truncation machinery is in place.

Having characterized how the contest responds to type dispersion and to the design of private signals, we now turn from these informational margins to the technological margin: the role of the strategic-feedback parameter $a$ itself, holding the type distribution fixed. Expected effort remains single-peaked in $a$, as in the complete-information benchmark of Theorem~\ref{thm:cinfo-effort}. Under incomplete information, however, the peak is smooth rather than non-differentiable, and its location depends on the dispersion of types. Unpacking the expected-effort expression from Theorem~\ref{thm:bn-unrestricted} gives
\begin{equation}\label{eq:bn-e1-a}
E_1(a)=
\begin{cases}
\displaystyle \frac{b-\sqrt{\frac{b^2-\Delta a+\sqrt{(b^2-\Delta a)^2+\sigma_\theta^2 a^2}}{2}}}{a}, & a\neq 0,\\[2.2ex]
\displaystyle \frac{\Delta}{2b}, & a=0,
\end{cases}
\end{equation}
so the $a$-comparative static reduces to the shape of a scalar function. Define
\[
\rho:=\frac{\sigma_\theta^2}{\Delta^2},
\qquad
\bar a:=\frac{4b^2\Delta}{\sigma_\theta^2}.
\]
\begin{theorem}[Single-peaked expected effort]\label{thm:bn-a-peak}
On the region $E_1(a)>0$, equivalently $a<\bar a$, the expected effort in \eqref{eq:bn-e1-a} has a unique maximizer $a^\dagger$. Moreover,
\[
a^\dagger>0 \iff \sigma_\theta^2<\Delta^2,
\qquad
a^\dagger=0 \iff \sigma_\theta^2=\Delta^2,
\qquad
a^\dagger<0 \iff \sigma_\theta^2>\Delta^2.
\]
Within the fully-active region of the affine benchmark, the effort-maximizing contest is therefore suppressive when type variance is small relative to the baseline return $\Delta^2$, and empowering when type variance is large. Moreover, the proof yields the sharper bound $a^\dagger<2b^2\Delta/(\Delta^2+\sigma_\theta^2)<\bar a$.
\end{theorem}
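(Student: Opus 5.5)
The plan is to replace the nested radicals in \eqref{eq:bn-e1-a} by a polynomial implicit equation and then do all the analysis on that equation. Write $u:=b^2-\Delta a$ and $Q:=\tfrac12\bigl(u+\sqrt{u^2+\sigma_\theta^2a^2}\bigr)$. Then $Q$ is the positive root of $Q^2-uQ-\sigma_\theta^2a^2/4=0$, and $E:=E_1(a)$ satisfies $\sqrt Q=b-aE=:w$.

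Substitute $Q=w^2$ and use $w^2-b^2+\Delta a=a(\Delta-2bE+aE^2)$. After dividing by $a$, this gives
\[
G(a,E):=(b-aE)^2\bigl(\Delta-2bE+aE^2\bigr)-\tfrac{\sigma_\theta^2}{4}\,a=0 .
\]
This equation also holds at $a=0$, where it returns $E=\Delta/(2b)$, so it covers both branches of \eqref{eq:bn-e1-a}. I would first check that $E_1$ is the branch with $w>0$ and $\Delta-2bE+aE^2>0$, which is the branch selected by Theorem \ref{thm:bn-unrestricted}. Along that branch $E_1$ is smooth, and $\partial_EG\neq0$ there, so the implicit function theorem applies.

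First I settle the domain claim. Putting $E=0$ into $G$ gives $b^2\Delta=\sigma_\theta^2a/4$, that is, $a=\bar a$. Since $E_1$ is continuous and $E_1(0)>0$, and $E_1$ vanishes only at $\bar a$, it follows that $E_1>0$ exactly on $a<\bar a$. I would confirm the sign for $a$ below $\bar a$, including $a\to-\infty$, directly from the closed form.

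Next I locate critical points. By implicit differentiation, $E_1'(a)=-\partial_aG/\partial_EG$. I expect $\partial_EG<0$ on the relevant branch; the check is that it equals $-2b^3<0$ at $a=0$ and does not vanish on the branch. Given that, critical points are exactly the solutions of $\partial_aG=0$ together with $G=0$. Using $G=0$ to eliminate $\Delta-2bE+aE^2=\sigma_\theta^2a/(4w^2)$ reduces the critical-point condition to
\[
w^3E^2-\tfrac{\sigma_\theta^2}{2}aE-\tfrac{\sigma_\theta^2}{4}w=0,\qquad w=b-aE.
\]
The sign statements then follow from a local computation at $a=0$. There $w=b$ and $E=\Delta/(2b)$, so $\partial_aG(0,\Delta/2b)=(\Delta^2-\sigma_\theta^2)/4$, and hence $\operatorname{sgn}E_1'(0)=\operatorname{sgn}(\Delta^2-\sigma_\theta^2)$. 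Once uniqueness of the maximizer is known, this is equivalent to the trichotomy $a^\dagger\gtrless0\iff\sigma_\theta^2\lessgtr\Delta^2$.

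The main obstacle is uniqueness, meaning that $E_1'$ changes sign only once on $(-\infty,\bar a)$. My first approach is to parametrize the critical-point curve together with $G=0$ by the ratio $t=aE/b\in(-\infty,1)$. In that parametrization both equations become polynomial in $t$, and I would show that the resulting one-variable equation has a single admissible root, for instance by monotonicity of the implied $a(t)$. As a fallback, I would show that $E_1'$ has the sign of a function that is strictly decreasing in $a$; the candidate is $-\partial_aG$ evaluated along the branch. The limiting behaviour pins down the endpoints: $E_1'>0$ as $a\to-\infty$, and $E_1\to0$ at $\bar a$ with $E_1$ positive just below it.

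For the sharper bound $a^\dagger<2b^2\Delta/(\Delta^2+\sigma_\theta^2)$, I would evaluate $G$ and $\partial_aG$ at that value of $a$ and show that $\partial_aG<0$ there, so that $E_1$ is already decreasing. Finally, $2b^2\Delta/(\Delta^2+\sigma_\theta^2)<4b^2\Delta/\sigma_\theta^2=\bar a$ holds trivially.
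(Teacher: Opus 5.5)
Your architecture is sound. $G(a,E)=0$, the value $\partial_aG(0,\Delta/2b)=(\Delta^2-\sigma_\theta^2)/4$ and your critical-point equation are all correct. The gap is the step that carries the theorem, uniqueness of the critical point: it is only announced, and the device you suggest for it fails. Write $t=aE/b$, so $w=b(1-t)$. Every critical point with $aE\neq0$ then satisfies
\[
\xi:=\frac{\Delta a}{b^2}=\frac{2t(1+t-t^2)}{1+t},
\qquad
\rho:=\frac{\sigma_\theta^2}{\Delta^2}=\frac{(1-t)^3(1+t)}{(1+t-t^2)^2}.
\]
The implied $a(t)$ vanishes at both $t=0$ and $t=1-\varphi$, where $\varphi=(1+\sqrt5)/2$. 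So $a(t)$ is not monotone and cannot deliver uniqueness.

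What closes the argument is the second relation, once admissibility is pinned down:
\begin{itemize}
\item the critical equation $\sigma_\theta^2a^2(1+t)=4b^4t^2(1-t)^3$ forces $1+t>0$;
\item $E>0$ forces $\operatorname{sgn}\xi=\operatorname{sgn}t$, i.e.\ $1+t-t^2>0$.
\end{itemize}
On the resulting interval $(1-\varphi,1)$, the right-hand side decreases strictly from $+\infty$ to $0$ and equals $1$ at $t=0$. This gives at most one critical point on $a<\bar a$. Since $\operatorname{sgn}a^\dagger=\operatorname{sgn}t^\dagger$, it also gives the trichotomy.

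The restriction is essential. On your range $t\in(-\infty,1)$ there is a second root in $(-1,1-\varphi)$, namely the minimum of $E_1$ on $(\bar a,\infty)$. With $y=1-t$, this function is exactly the paper's $g(y)=y^3(2-y)/(y^2-y-1)^2$. So the lemma you are missing is precisely the paper's monotonicity of $g$.

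Two further slips need fixing.
\begin{itemize}
\item The branch is selected by $w>0$ alone. $G=0$ says $\Delta-2bE+aE^2=\sigma_\theta^2a/(4w^2)$, which is negative under empowerment, so the positivity you impose is false there. The non-vanishing you need is $\partial_EG=-2w\bigl(w^2+\sigma_\theta^2a^2/(4w^2)\bigr)<0$, valid for every $a$.
\item Since $\partial_EG<0$, $E_1'$ has the sign of $+\partial_aG$, not $-\partial_aG$ as in your fallback.
\end{itemize}
For the clause ``equivalently $a<\bar a$'' you also need $E_1<0$ on $(\bar a,\infty)$. This follows from the single zero at $\bar a$ together with $E_1\to0^-$ as $a\to\infty$.

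With these repairs, your route is a cleaner packaging than the paper's. The paper splits $\xi\mapsto y$ into branches $\xi_\pm$. It must then show separately that $e'<0$ on the plus branch, and that $g(y_0)<\rho$ to place the root on the minus branch. Your implicit-function formulation instead treats $E_1$ as one smooth function through $a=0$ and through the turning point of $y$, and it gains two further simplifications.
\begin{itemize}
\item The trichotomy comes directly from $E_1'(0)=(\Delta^2-\sigma_\theta^2)/(8b^3)$.
\item The sharper bound becomes a sign check. On the branch, $\operatorname{sgn}E_1'=\operatorname{sgn}\bigl[2t(1+t-t^2)-(1+t)\xi\bigr]$ for $a\neq0$. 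At $a_0=2b^2\Delta/(\Delta^2+\sigma_\theta^2)$ the inner radical equals $b^2$ and $w=by_0$ with $y_0=\sqrt{\rho/(1+\rho)}$, so the bracket equals $-2(1-y_0)<0$.
\end{itemize}
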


The mechanism combines the two effects already on the table. When $a$ is small, the strategic gain from pulling ahead is limited, so expected effort is low. Raising $a$ initially strengthens those stakes and pushes effort up. Once $a$ becomes large, however, the suppressive comparative static from Theorem \ref{thm:uncertainty} takes over: heterogeneity makes falling behind increasingly costly, and that cost depresses effort. The peak is the value of $a$ at which these two forces balance, and its location depends on the size of $\sigma_\theta^2$ relative to the baseline marginal return $\Delta$: small uncertainty leaves the peak in suppression, large uncertainty pushes it into empowerment. The knife-edge $\sigma_\theta^2=\Delta^2$ puts the peak exactly at $a=0$. Single-peakedness here is a property of the fully active benchmark; once dropout sets in, the constrained schedule need not be single-peaked, as Remark~\ref{rem:bn-dropout-peak} shows.

This contrasts sharply with the complete-information result in Theorem \ref{thm:cinfo-effort}, where the peak sits at the pure/mixed equilibrium boundary $\zeta=0$ and produces the non-differentiable apex in Figure \ref{EffortFig}. Under incomplete information, the same peak is now a smooth function of the variance ratio $\rho=\sigma_\theta^2/\Delta^2$, and Bayesian heterogeneity bends the schedule before the complete-information pure-to-mixed switch is reached. Figure \ref{BNPeakFig} isolates this.

\begin{remark}[Positive mean versus full activity]\label{rem:bn-positive}
The threshold $\bar a=4b^2\Delta/\sigma_\theta^2$ is an algebraic positivity boundary for the affine-benchmark average effort: the closed-form expression in \eqref{eq:bn-e1-a} has $E_1(a)>0$ exactly when $a<\bar a$, with $E_1(\bar a)=0$. This is not the admissibility boundary for the affine Bayesian strategy. Full activity requires the entire affine rule to be nonnegative on the type support. In the suppressive region this support condition is governed by the prior-dependent dropout threshold $a_D(F)$ of Proposition \ref{prop:bn-dropout} below; once $a>a_D(F)$, high-cost types choose zero and the cutoff-affine equilibrium of Proposition~\ref{prop:bn-nonnegative} becomes the relevant object. Under empowerment, dropout does not arise, but the literal truncated contest still requires the separate support checks discussed below.
\end{remark}

\begin{center}
\begin{minipage}{\textwidth}
\centering
\includegraphics[width=\textwidth]{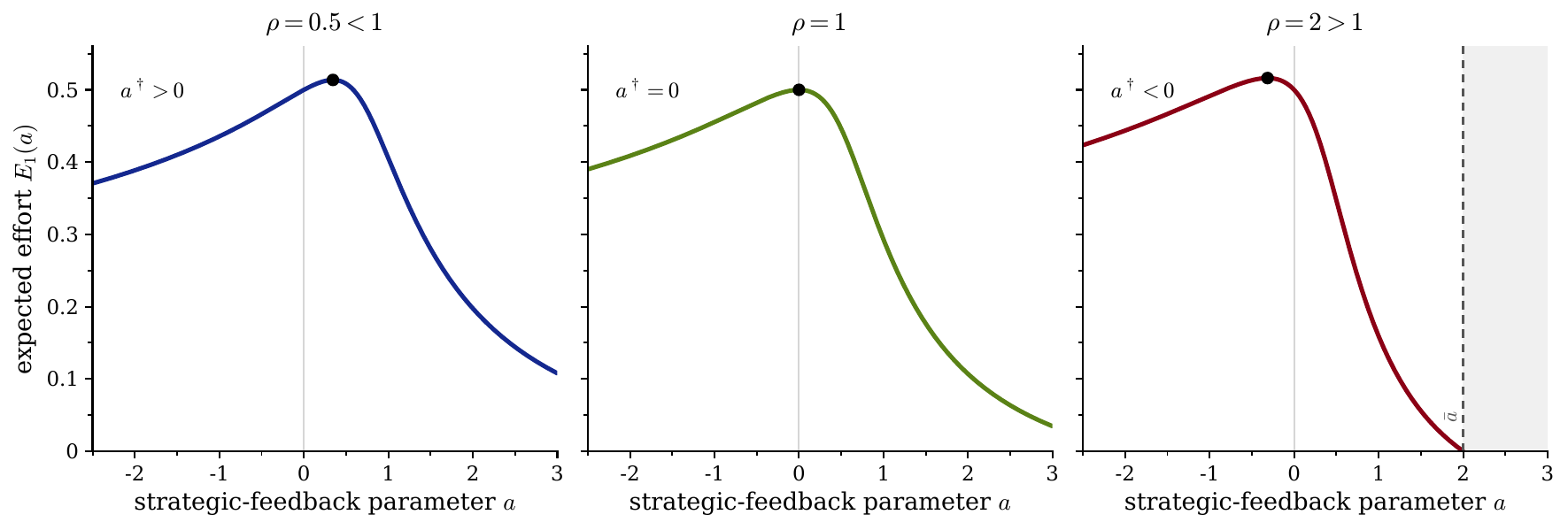}
\captionof{figure}{Expected effort in \eqref{eq:bn-e1-a} for three values of $\rho=\sigma_\theta^2/\Delta^2$, under the normalization $b=\Delta=1$. The left panel has $\rho<1$ and peak at $a^\dagger>0$; the middle has $\rho=1$ and peak at $a^\dagger=0$; the right has $\rho>1$ and peak at $a^\dagger<0$.}\label{BNPeakFig}
\end{minipage}
\end{center}

Figure~\ref{MainFig} summarizes the complete- and incomplete-information regimes as a function of $a$.

\begin{figure}[H]
\centering
\includegraphics[width=\textwidth]{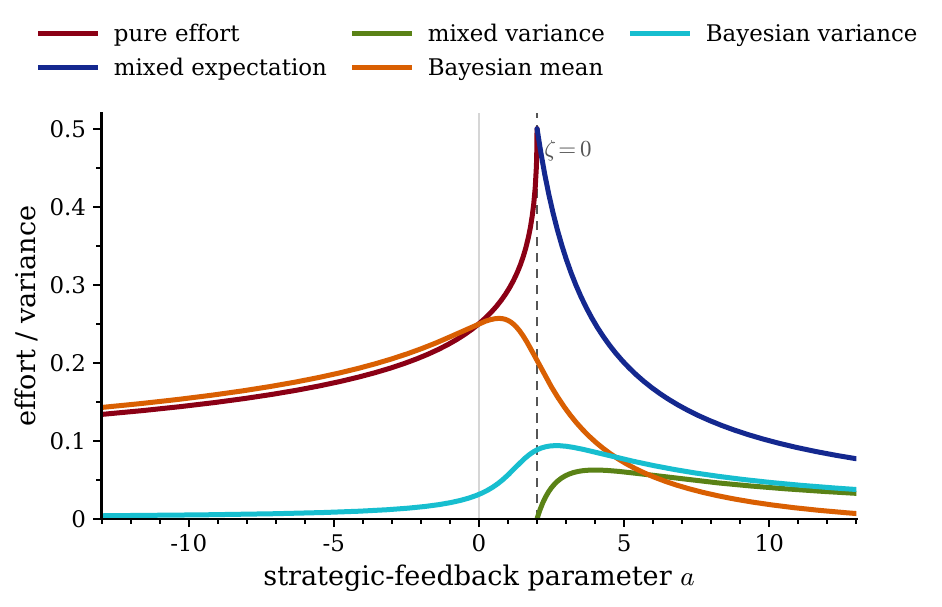}
\caption{Equilibrium as a function of $a$ for $b=1$, $c=1$, $\mathrm{E}\theta=\nicefrac{1}{2}$, and $\mathrm{E}\theta^2=\nicefrac{3}{8}$.}\label{MainFig}
\end{figure}

The next result studies how the peak itself moves with the dispersion of types.

\begin{theorem}[Type variance and the peak-effort parameter]\label{thm:bn-peak-variance}
Holding $b$ and $\Delta=c-M_1$ fixed, there exist unique thresholds
\[
0<\sigma_1^2<\Delta^2<\sigma_2^2
\]
such that $a^\dagger$ is increasing in $\sigma_\theta^2$ on $(0,\sigma_1^2)$, decreasing on $(\sigma_1^2,\sigma_2^2)$, and increasing on $(\sigma_2^2,\infty)$. Moreover,
\[
\lim_{\sigma_\theta^2\downarrow 0}a^\dagger=\frac{b^2}{\Delta},
\qquad
\lim_{\sigma_\theta^2\to\infty}a^\dagger=0^-.
\]
\end{theorem}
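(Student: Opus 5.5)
The plan is to reduce the statement to a one-parameter curve that can be parametrized explicitly, and then read off all the claims from two rational functions of one variable.

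\emph{Step 1: normalize.} Writing $a=(b^2/\Delta)\,t$ and $\rho=\sigma_\theta^2/\Delta^2$ in \eqref{eq:bn-e1-a} gives
\[
E_1(a)=\frac{\Delta}{b}\,\frac{1-q}{t},
\qquad
q:=\sqrt{\tfrac{1-t+\sqrt{(1-t)^2+\rho t^2}}{2}}.
\]
Hence $a^\dagger=(b^2/\Delta)\,t^\dagger(\rho)$, where $t^\dagger(\rho)$ is the unique maximizer from Theorem~\ref{thm:bn-a-peak}. The whole statement is then a statement about the function $\rho\mapsto t^\dagger(\rho)$. In particular, the claimed limits become $t^\dagger\to1$ as $\rho\downarrow0$ and $t^\dagger\to0^-$ as $\rho\to\infty$.

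\emph{Step 2: avoid the nested radicals.} The defining relation for $q$ is the polynomial identity
\[
G(q,t):=q^4-(1-t)q^2-\tfrac{\rho}{4}t^2=0 .
\]
The first-order condition $\frac{d}{dt}\frac{1-q}{t}=0$ reads $-t\,q'(t)=1-q$. Implicit differentiation gives $q'=-G_t/G_q$, so the condition becomes
\[
t\bigl(q^2-\tfrac{\rho}{2}t\bigr)=(1-q)\bigl(4q^3-2(1-t)q\bigr).
\]
Next, use $G=0$ to eliminate $\rho$ through $\rho t^2=4q^4-4(1-t)q^2$. The resulting condition is \emph{linear in $t$}, and solving it gives
\[
t(q)=\frac{2q-2q^3-(1-q)(4q^2-2)}{2-q}.
\]
Substituting back gives
\[
\rho(q)=\frac{4q^2\bigl(q^2-1+t(q)\bigr)}{t(q)^2}.
\]
This yields a rational parametrization $q\mapsto(\rho(q),t(q))$ of the peak locus. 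Two checks anchor it:
\begin{itemize}
\item $q=0$ gives $t=1$ and $\rho=0$, which matches $a^\dagger\to b^2/\Delta$.
\item $t(q)=0$ occurs at $q=1/\sqrt2$, where $\rho=1$; this is consistent with Theorem~\ref{thm:bn-a-peak}.
\end{itemize}
The case $\rho\to\infty$ corresponds to $t\to0^-$ along the branch past $q=1/\sqrt2$, which gives the second limit. The root $t=0$ itself needs the continuous extension at $a=0$, and I would treat it by a limit argument.

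\emph{Step 3: monotonicity.} I would show that $\rho(q)$ is strictly increasing on the relevant $q$-range; this range should be $q\in(0,q_\infty)$, with $q_\infty$ the point where $\rho\to\infty$. Given that, the derivative $dt^\dagger/d\rho=t'(q)/\rho'(q)$ has the sign of $t'(q)$. The numerator of $t(q)$ is the cubic $2q^3-4q^2+2q+2$ up to rearrangement, so $t'(q)$ has the sign of an explicit low-degree polynomial $N(q)$. The thresholds $\sigma_1^2,\sigma_2^2$ are then $\Delta^2\rho(q_1)$ and $\Delta^2\rho(q_2)$, where $q_1<1/\sqrt2<q_2$ are the two sign changes of $N$ in the range. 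Establishing exactly two such roots, placed on opposite sides of $1/\sqrt2$, gives $\sigma_1^2<\Delta^2<\sigma_2^2$ and the increasing–decreasing–increasing pattern.

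\emph{Main obstacle.} The main obstacle is this sign analysis together with the bookkeeping of the admissible $q$-range. Three things must be verified on that range:
\begin{itemize}
\item $\rho'(q)>0$;
\item the FOC root is the maximizer and not a spurious root of the squared relations, which uniqueness in Theorem~\ref{thm:bn-a-peak} should settle;
\item $N$ has exactly two roots there.
\end{itemize}
For these I would use Descartes' rule or Sturm sequences on $N$ and on the numerator of $\rho'$, and evaluate signs at the endpoints $q=0$, $q=1/\sqrt2$ and $q_\infty$.
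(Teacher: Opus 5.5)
Your route is essentially the paper's. Your $q$ is the paper's $y^\dagger$, your $t(q)$ equals the paper's $A=2(1-y)(1+y-y^2)/(2-y)$, and your $\rho(q)$ reduces to the paper's $g$. Deriving the parametrization by implicit differentiation of $G(q,t)=0$, rather than from the branch inversion in Theorem~\ref{thm:bn-a-peak}, is a harmless variation.

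\textbf{The gap: the anchor point is wrong.} The numerator of $t(q)$ is $2q^3-4q^2+2=2(1-q)(1+q-q^2)$, not $2q^3-4q^2+2q+2$. So $t(q)=0$ at $q=1$ and at $q=\varphi=(1+\sqrt5)/2$, not at $q=1/\sqrt2$. Indeed, $q=1$ is forced by the definition of $q$ at $t=0$. At $q=1/\sqrt2$ you actually have $t\approx0.55$ and $\rho\approx0.31$. Placing the roots as $q_1<1/\sqrt2<q_2$ would therefore only give $\rho(q_2)>0.31$, not $\rho(q_2)>1$, and the claim $\Delta^2<\sigma_2^2$ fails as argued. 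The separating point must be $q=1$.

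Your formula for $\rho(q)$ is $0/0$ at $q=1$, so simplify it first. Using $q^2-1+t(q)=q(1-q)^2/(2-q)$,
\[
\rho(q)=\frac{q^3(2-q)}{(1+q-q^2)^2},\qquad \rho'(q)=\frac{2q^2(q-3)}{(q^2-q-1)^3}>0\ \text{ on }(0,\varphi).
\]
This gives $\rho(1)=1$ and $q_\infty=\varphi$. It also gives $t\to0^-$ as $q\uparrow\varphi$, because $1-q<0<1+q-q^2$ on $(1,\varphi)$.

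The misquoted cubic is not harmless either. Taken literally, it makes $t'$ proportional to $-(2q^3-8q^2+8q-3)$, which has no root in $(0,\varphi)$, so the pattern you want would disappear.

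\textbf{The fix.} With the correct numerator, $t'(q)=-2p(q)/(2-q)^2$ with $p(q)=2q^3-8q^2+8q-1$. The signs $p(0)=-1$, $p(1)=1$, $p(\varphi)=4\varphi-7<0$ and $p(+\infty)=+\infty$ put exactly one root in each of $(0,1)$, $(1,\varphi)$ and $(\varphi,\infty)$. This gives exactly two sign changes in the admissible range, and monotonicity of $\rho$ then gives $\rho(q_1)<1<\rho(q_2)$. No Sturm sequences are needed.

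On the spurious roots: first-order-condition solutions with $q\in(\varphi,2)$ have $t>0$ and $q>1$, hence negative average effort, so they lie beyond $\bar a$. Solutions with $q>2$ have $\rho<0$. So on $\{a<\bar a\}$ every critical point has $q\in(0,\varphi)$, where $\rho$ is injective. The interior maximizer guaranteed by Theorem~\ref{thm:bn-a-peak} must therefore be that point. The two limits then follow from $q\downarrow0$ (giving $t\to1$) and $q\uparrow\varphi$.
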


Numerically the thresholds are highly asymmetric: $\sigma_1^2\approx 0.0045\Delta^2$ and $\sigma_2^2\approx 8.73\Delta^2$. Moreover, scaled by $\Delta^2$, they are properties of the cubic benchmark itself: they depend on neither $b$, nor $c$, nor any feature of the type distribution beyond its mean. The middle decreasing region is by far the dominant one, so in most plausible applications the observable comparative static is the simple one --- the effort-maximizing parameter moves left as variance rises, first toward neutrality and then into empowerment.

Two economically relevant restrictions sharpen this further. Bounded support bites only from the right: if types are supported on $[\alpha,\beta]$, then feasible variance satisfies $\sigma_\theta^2\leq (M_1-\alpha)(\beta-M_1)$, so the high-variance reversal is absent whenever this upper bound is below $\sigma_2^2$. Full activity bites from the left in a prior-dependent way: for each prior $F$, Proposition \ref{prop:bn-dropout} gives a threshold $a_D(F)$. Along a variance experiment $F_\sigma$ --- a family of priors indexed by $\sigma_\theta^2$ --- the fully active peak is observable only when $a^\dagger(\sigma_\theta^2)\leq a_D(F_\sigma)$. Whether the initial increasing region (where $a^\dagger$ moves up with variance) is fully observable depends on the prior family; Appendix Remark \ref{rem:dropout-calibration} works through the uniform-prior case.

We now turn to the economically natural restriction $x\geq 0$. If the affine rule from Theorem \ref{thm:bn-unrestricted} satisfies $k\beta+d\geq 0$, the nonnegative-effort game coincides with the affine benchmark. If $k\beta+d<0$, sufficiently high-cost types optimally drop out to zero, and the equilibrium becomes cutoff-affine: active types follow an affine schedule up to an endogenous cutoff, while higher-cost types choose zero. This connects to the activity--inactivity problem in private-information contests \cite{SchoonbeekWinkel2006}, but the mechanism here is specific: under empowerment ($a<0$) this complication does not arise, as Proposition \ref{prop:bn-nonnegative} shows; dropout is a suppression phenomenon.

\begin{proposition}[Nonnegative-effort equilibrium]\label{prop:bn-nonnegative}
The unrestricted polynomial game has a symmetric Bayesian Nash equilibrium with $x\geq 0$. Apart from a possible exceptional boundary equilibrium, every symmetric equilibrium is cutoff-affine:
\[
x(\theta)=\lambda (t-\theta)_+
\]
for some $\lambda>0$ and some cutoff $t\geq \alpha$. If $t\geq \beta$, the equilibrium coincides with the affine benchmark from Theorem \ref{thm:bn-unrestricted}. If $t<\beta$, the types $\theta\geq t$ choose zero. An exceptional boundary equilibrium can also arise when $m:=F(\{\alpha\})>0$ and $am(c-\alpha)\geq b^2$: every type $\theta>\alpha$ chooses zero while type $\alpha$ is the only active type and randomizes on $\{0,(c-\alpha)/b\}$, degenerating to the pure action $(c-\alpha)/b$ at equality. The Appendix gives the full characterization, including this exceptional atom-at-$\alpha$ case.
\end{proposition}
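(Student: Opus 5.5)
The proof rests on the fact that, against any fixed opponent strategy, interim payoff is a quadratic in own effort whose coefficients depend on the opponent only through the first two moments of effort. Let $y=x(\theta_y)$, with $\mu_1:=\mathrm{E}[y]$ and $\mu_2:=\mathrm{E}[y^2]$. Expanding $(\tilde x-y)(c-b(\tilde x+y)+a\tilde x y)$ and collecting the terms in $\tilde x$ gives
\[
U(\theta,\tilde x)=\text{const}+\tilde x\,(c-\theta-a\mu_2)-\tilde x^{2}\,(b-a\mu_1),
\]
where the constant does not depend on $\tilde x$. The argument then proceeds in three steps: classify best responses by the sign of $b-a\mu_1$, derive the cutoff-affine form, and reduce existence to a scalar fixed-point problem.

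\emph{Classifying best responses.} First, $b-a\mu_1<0$ is impossible in equilibrium. In that case $U$ is convex with a positive leading coefficient and unbounded above on $\mathbb R_+$, so no type has a best response.

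If $b-a\mu_1>0$, each type's problem on $\tilde x\geq0$ is strictly concave. Its unique solution is $\tilde x=\lambda(t-\theta)_+$ with
\[
\lambda=\frac{1}{2(b-a\mu_1)},\qquad t=c-a\mu_2 .
\]
This is the cutoff-affine form. Since $\theta\geq\alpha$, one may take $t\geq\alpha$ without loss; if $t<\alpha$, every type plays zero, so $\mu_1=\mu_2=0$, $t=c>\alpha$, a contradiction. When $t\geq\beta$ every type is active, and the rule is affine. It must then coincide with the unique symmetric affine equilibrium of Theorem~\ref{thm:bn-unrestricted}, because that equilibrium is also an equilibrium of the relaxation.

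If $b-a\mu_1=0$, then $a>0$ and payoffs are linear in $\tilde x$. Types with $\theta>c-a\mu_2$ choose $0$, types with $\theta<c-a\mu_2$ have no best response, and only the type $\theta=c-a\mu_2$ is indifferent over all efforts. Two consequences follow. No positive mass of types may lie below $c-a\mu_2$, and some positive mass must be active, since otherwise $\mu_1=0\neq b/a$. Hence the indifferent type must be $\alpha$ and must carry an atom $m>0$.

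\emph{The exceptional case.} Suppose type $\alpha$ mixes with mass $q$ on an effort $z$ and the rest on $0$. The conditions $\mu_1=mqz=b/a$ and $\mu_2=mqz^2=(c-\alpha)/a$ give $z=(c-\alpha)/b$ and $q=b^2/(am(c-\alpha))$. The requirement $q\leq 1$ is exactly $am(c-\alpha)\geq b^2$, with the pure action at equality. More general mixtures for type $\alpha$ would be handled the same way, since only the two moments matter.

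\emph{Existence.} It remains to show that a cutoff-affine fixed point exists when the exceptional case does not deliver one. For $a\leq0$, $t=c-a\mu_2\geq c>\beta$, so no type drops out and Theorem~\ref{thm:bn-unrestricted} already provides the equilibrium. This also shows that dropout is a suppression phenomenon.

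For $a>0$, set $A(t)=\mathrm{E}[(t-\theta)_+]$ and $B(t)=\mathrm{E}[(t-\theta)_+^2]$, both continuous and nondecreasing in $t$. Consistency with $\mu_1=\lambda A(t)$ and $\mu_2=\lambda^2B(t)$ requires two equations:
\[
2b\lambda-2aA(t)\lambda^{2}=1,\qquad t=c-a\lambda^{2}B(t).
\]
The first equation is a quadratic in $\lambda$. I take its smaller root $\lambda(t)$, which is the branch continuous with $1/(2b)$ at $A=0$ and the one satisfying $b-a\mu_1>0$. That root is real exactly when $2aA(t)\leq b^2/2$, roughly. The existence problem then becomes finding a zero of
\[
g(t):=t-c+a\lambda(t)^{2}B(t)
\]
on the interval of $t\in[\alpha,\max(\beta,\cdot)]$ where $\lambda(t)$ exists.

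\emph{Main obstacle.} The hard part is this intermediate-value argument at its endpoints. At the upper end, the search is capped at $\beta$, and if $g(\beta)\leq0$ the fully active affine benchmark applies. At the lower end, $t\downarrow\alpha$ gives $A,B\to m\cdot 0=0$, so $\lambda\to1/(2b)$ and $g(\alpha)=\alpha-c<0$. The difficulty is that $\lambda(t)$ may cease to exist in between, where the discriminant hits zero. At that point I expect $b-a\mu_1$ to fall to the level where the linear regime takes over. I would first try to show that $g$ is strictly positive at the discriminant boundary, or that reaching the boundary forces the atom condition $am(c-\alpha)\geq b^2$, in which case the exceptional equilibrium fills the gap. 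Monotonicity of $\lambda(t)$ and $B(t)$ in $t$ should also give uniqueness of the cutoff, which supports the claim that every non-exceptional symmetric equilibrium is cutoff-affine.
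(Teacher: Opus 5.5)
Your classification step matches the paper's argument: you rule out $b-a\mu_1<0$, derive the cutoff-affine best response when $b-a\mu_1>0$, and analyze the atom at $\alpha$ when $b-a\mu_1=0$. For arbitrary mixtures by type $\alpha$, necessity of $am(c-\alpha)\ge b^2$ is just Jensen.

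\textbf{The gap is in existence for $a>0$.} The obstacle you flag is created by your branch selection.

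\emph{Both roots are admissible.} On the solution set of $2aA(t)\lambda^2-2b\lambda+1=0$ one has $b-a\mu_1=b-a\lambda A(t)=1/(2\lambda)>0$ identically. So both positive roots satisfy strict concavity, and nothing justifies discarding the larger one. Also, the reality condition is $2aA(t)\le b^2$, not $b^2/2$.

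\emph{Under strong suppression your $g$ has no zero.} Take an atomless prior and large $a$. The discriminant point $t_0$, where $2aA(t_0)=b^2$, then lies near $\alpha$. On $[\alpha,t_0]$ the smaller root satisfies $\lambda\le 1/b$, while $B(t)\le (t-\alpha)A(t)\le (t-\alpha)b^2/(2a)$. Hence $g(t)\le t-c+(t-\alpha)/2<0$ throughout the smaller branch, even though an equilibrium exists; it sits on the larger root.

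\emph{Neither fallback works.} $g$ need not be positive at $t_0$, and with $m=0$ the atom condition can never hold. Your cap at $\beta$ is also unsupported: nothing shows that $g(\beta)\le 0$ makes the affine rule nonnegative at $\beta$.

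\textbf{How the paper avoids this.} It eliminates $\lambda$ through the \emph{second} fixed-point equation, $\lambda(t)=\sqrt{(c-t)/(aB(t))}$. This is defined on all of $(\alpha,c)$ and sweeps through both roots. Substituting into the first equation gives
$H(t):=2(c-t)A(t)+B(t)-2b\sqrt{(c-t)B(t)/a}=0$ on $[\alpha,c]$.
\begin{itemize}
\item At the top, $H(c)=B(c)>0$. A root in $[\beta,c)$ is simply the fully active case, so no cap is needed.
\item Near $\alpha$ with $m=0$: Cauchy--Schwarz gives $A(t)^2\le F(t)B(t)$, so $H/\sqrt{B}\to-2b\sqrt{(c-\alpha)/a}<0$.
\item Near $\alpha$ with $m>0$: $A(\alpha+s)=ms+o(s)$ and $B(\alpha+s)=ms^2+o(s^2)$, so $H(\alpha+s)/s\to 2m(c-\alpha)-2b\sqrt{m(c-\alpha)/a}$. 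This is negative exactly when $am(c-\alpha)<b^2$.
\end{itemize}
That last limit is where the non-boundary hypothesis enters. Your argument never used it, which is a symptom of the missing branch.

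Any root $t^*$ then yields $\lambda^*>0$ with $a\lambda^*A(t^*)=b-1/(2\lambda^*)<b$, so your converse verification applies.

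Equivalently, you could keep your parametrization but continue past $t_0$ along the larger root back toward $\alpha$. There $a\lambda_+^2B(t)\to b^2/(am)$, or $+\infty$ if $m=0$. So $g_+>0$ near $\alpha$ exactly in the non-boundary case, and the intermediate value theorem applies along the combined path.
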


When suppression becomes strong enough, high-cost types find that the strategic gain from competing no longer offsets their cost, and they withdraw to zero effort. The equilibrium support contracts endogenously: the contest is still well-defined, but it is now played by a subset of types.

\begin{proposition}[Monotone dropout under an atomless prior]\label{prop:bn-dropout}
Assume the type distribution $F$ is atomless and admits a continuously differentiable density $f$ on $[\alpha,\beta]$. Then there exists a prior-dependent threshold $a_D(F)>0$ such that the equilibrium from Proposition \ref{prop:bn-nonnegative} is fully active for $a\leq a_D(F)$ and partially active for $a>a_D(F)$. For $a>a_D(F)$, the cutoff $t=t(a)\in(\alpha,\beta)$ is uniquely determined and strictly decreasing in $a$. Consequently, the dropout rate
\[
\delta(a):=1-F(t(a))
\]
is strictly increasing in $a$. The mappings $t$ and $\delta$ are continuously differentiable, and $t(a)\downarrow \alpha$ and $\delta(a)\uparrow 1$ as $a\to\infty$.
\end{proposition}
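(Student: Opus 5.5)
The plan is to reduce the cutoff-affine characterization of Proposition~\ref{prop:bn-nonnegative} to a single scalar equation in the cutoff $t$, and then prove that this equation defines $a$ as a strictly decreasing function of $t$. Inverting that function gives every claim in the statement.

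\textbf{Step 1: the optimal response of an active type.} Differentiating \eqref{1} gives
\[
P_x(x,y)=c-2bx+2axy-ay^{2}.
\]
Against an opponent strategy with moments $m_1=\mathrm E[y]$ and $m_2=\mathrm E[y^2]$, interim utility is a quadratic in own effort. Its first-order condition is
\[
c-\theta-am_2=2x\,(b-am_1).
\]
Provided $b>am_1$, so that the quadratic is concave, the optimal response is therefore $x(\theta)=\lambda(t-\theta)_+$, with
\[
\lambda=\frac{1}{2(b-am_1)},\qquad t=c-am_2 .
\]
Because $F$ is atomless, the exceptional boundary equilibrium of Proposition~\ref{prop:bn-nonnegative} cannot arise, so this cutoff-affine form covers every symmetric equilibrium.

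\textbf{Step 2: the consistency conditions.} Write
\[
G_1(t)=\int(t-\theta)_+\,dF,\qquad G_2(t)=\int(t-\theta)_+^2\,dF.
\]
Since the density is $C^1$, these satisfy $G_2'=2G_1$ and $G_1'=F$, and $G_1,G_2>0$ for $t>\alpha$. Consistency requires $m_1=\lambda G_1$ and $m_2=\lambda^2G_2$, which gives two equations:
\[
2b\lambda-1=2a\lambda^2G_1(t),\qquad c-t=a\lambda^2G_2(t).
\]

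\textbf{Step 3: solving for $\lambda$ and $a$ in terms of $t$.} For $a>0$, dividing the two equations eliminates $a$ and yields
\[
\lambda(t)=\frac{1}{2b}\Bigl(1+\frac{2(c-t)G_1}{G_2}\Bigr).
\]
Substituting back gives $a$ as an explicit function of the cutoff:
\[
a(t)=\frac{c-t}{\lambda(t)^2G_2(t)}=\frac{4b^2(c-t)\,G_2}{\bigl(G_2+2(c-t)G_1\bigr)^2},\qquad t\in(\alpha,\beta].
\]
Concavity holds automatically on this branch, because $2b\lambda-1>0$ gives $b>am_1$. Full activity ($t\geq\beta$) means the constraint $x\ge 0$ is slack. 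In that case the equilibrium is the affine benchmark of Theorem~\ref{thm:bn-unrestricted}. I set $a_D(F):=a(\beta)$ and check that the benchmark is fully active exactly when $a\le a_D(F)$, by matching the condition $k\beta+d\ge 0$ against the same system evaluated at $t=\beta$.

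\textbf{Step 4: limits as $t\downarrow\alpha$.} The factor $c-t$ is bounded away from zero and $G_2\to 0$. Hence $a(t)\sim 4b^2G_2/\bigl(4(c-\alpha)G_1^2\bigr)$, up to the vanishing $G_2$ term in the denominator. Cauchy--Schwarz gives $G_1^2\le F(t)\,G_2$, so $G_2/G_1^2\ge 1/F(t)\to\infty$. Therefore $a(t)\to\infty$ as $t\downarrow\alpha$.

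\textbf{Step 5: the main obstacle, strict monotonicity $a'(t)<0$ on $(\alpha,\beta]$.} I would compute the logarithmic derivative:
\[
\frac{a'}{a}=-\frac{1}{c-t}+\frac{2G_1}{G_2}-\frac{2\bigl(2G_1+2(c-t)F-2G_1\bigr)}{G_2+2(c-t)G_1}.
\]
Clearing denominators turns the sign condition into a polynomial inequality in $G_1$, $G_2$, $F$ and $u:=c-t$. My first attempt would use only the Cauchy--Schwarz bound $F\ge G_1^2/G_2$, after substituting $r=uG_1/G_2$, so that the expression becomes a one-variable quadratic in $r$ whose sign I can check. If that bound is not enough, I would add the elementary bound $G_2\le(t-\alpha)G_1$, together with $u>t-\alpha$, which follows from $t\le\beta<c$.

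\textbf{Step 6: conclusions.} Given $a'<0$ and the limit in Step~4, $a(\cdot)$ is a $C^1$ strictly decreasing bijection from $(\alpha,\beta]$ onto $[a_D,\infty)$. For each $a>a_D$, the cutoff $t(a)=a^{-1}(a)\in(\alpha,\beta)$ is therefore unique, strictly decreasing, and continuously differentiable by the inverse function theorem, with $t(a)\downarrow\alpha$ as $a\to\infty$. The dropout rate $\delta=1-F\circ t$ is then $C^1$ and nondecreasing, with $\delta\uparrow1$ because $F$ is continuous and $F(\alpha)=0$. Strict increase of $\delta$ needs $F$ to be strictly increasing near the relevant cutoffs. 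I expect this to require $f>0$ on the relevant range, an implicit assumption that I would make explicit.

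Finally, $a_D(F)>0$ because $a(\beta)>0$, and for $0<a\le a_D$ there is no branch with $t<\beta$. For $a\le 0$, dropout is excluded by Proposition~\ref{prop:bn-nonnegative}.
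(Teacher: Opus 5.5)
This is essentially the paper's proof. Both use the closed form $a(t)=4b^2(c-t)G_2/D^2$ with $D=G_2+2(c-t)G_1$, and Cauchy--Schwarz $G_1^2\le FG_2$ both for $a'<0$ and for $a(t)\to\infty$ as $t\downarrow\alpha$. Both then invert $a(\cdot)$, and both get full activity for $a\le a_D(F)$ (as in your Step~6) from existence in Proposition~\ref{prop:bn-nonnegative} together with the absence of any cutoff $t<\beta$.

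Your first attempt in Step~5 already closes the argument. Multiplying your logarithmic derivative by $(c-t)G_2D>0$ gives exactly $-\bigl[G_2^2+4(c-t)^2(FG_2-G_1^2)\bigr]$, because the cross terms cancel, so the fallback bound is unnecessary. You also need not assume $f>0$: the support being exactly $[\alpha,\beta]$ already makes $F$ strictly increasing there.
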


The dependence on $F$ is essential; Appendix Remark~\ref{rem:dropout-calibration} defines $a_D(F)$ through lower partial moments and works it out in closed form for a uniform prior.

Two questions follow: which comparative statics survive once dropout begins, and what dropout does to the shape of the effort schedule. We take them in turn.

\begin{remark}[What survives under dropout]\label{rem:bn-dropout}
Dropout is a suppression-only qualification. Under empowerment ($a<0$) and at neutrality ($a=0$) no type drops out, so Theorems~\ref{thm:uncertainty}, \ref{thm:bn-disclosure}, \ref{thm:bn-a-peak}, and~\ref{thm:bn-peak-variance} hold unchanged, subject in the literal contest only to the support check of Proposition~\ref{prop:bn-truncation}. The same holds under moderate suppression, $a\leq a_D(F)$. For stronger suppression the affine formulas give way to the cutoff-affine equilibrium, where the relevant objects are lower partial moments over active types rather than the raw moments $M_1$ and $M_2$. This region is characterized, not a gap: Proposition~\ref{prop:bn-dropout} gives the monotone cutoff, and Appendix Remark~\ref{rem:dropout-calibration} gives the explicit lower-partial-moment formulas.
\end{remark}

The shape question is different: once the cutoff is active, aggregate effort is governed by endogenous lower partial moments, and the constrained path need not inherit the affine benchmark's single peak.

\begin{remark}[Dropout and multiple effort peaks]\label{rem:bn-dropout-peak}
The single-peakedness of Theorem~\ref{thm:bn-a-peak} is a property of the fully active affine benchmark, not of the constrained equilibrium. Once dropout sets in, two forces act on aggregate effort: stronger suppression removes high-cost types from the top, while the surviving low-cost types compete harder. Either force can dominate locally, so the constrained aggregate-effort path need not be single-peaked. A smooth prior can generate two local peaks. With $b=6$, $\alpha=1$, $\beta=2$, $c=2.00293$, and types $\theta=1+Z$ for the three-component beta mixture specified in the detailed proof, expected effort rises to a first peak at $a\simeq46.43$, falls to a trough at $a\simeq95.86$, and rises again to a second peak at $a\simeq108.97$. The first peak is already on the cutoff-affine branch (dropout starts at $a_D(F)\simeq0.96$), so it differs slightly from the fully active affine maximizer $a^\dagger\simeq46.17$. The active mass falls throughout this range---from $0.996$ at the first peak to $0.439$ at the trough and $0.369$ at the second---so the second rise is genuine: averaging over all types, with dropouts counted as zeros, the survivors' intensified effort more than offsets the additional exit. The scale choice $b=6$ keeps the displayed branch away from probabilistic truncation; on the displayed constrained branch, a numerical support check gives raw probabilities between $0.060$ and $0.940$, and a clipped-payoff deviation check finds no profitable deviation to numerical tolerance.
\end{remark}

\begin{figure}[H]
\centering
\includegraphics[width=\textwidth]{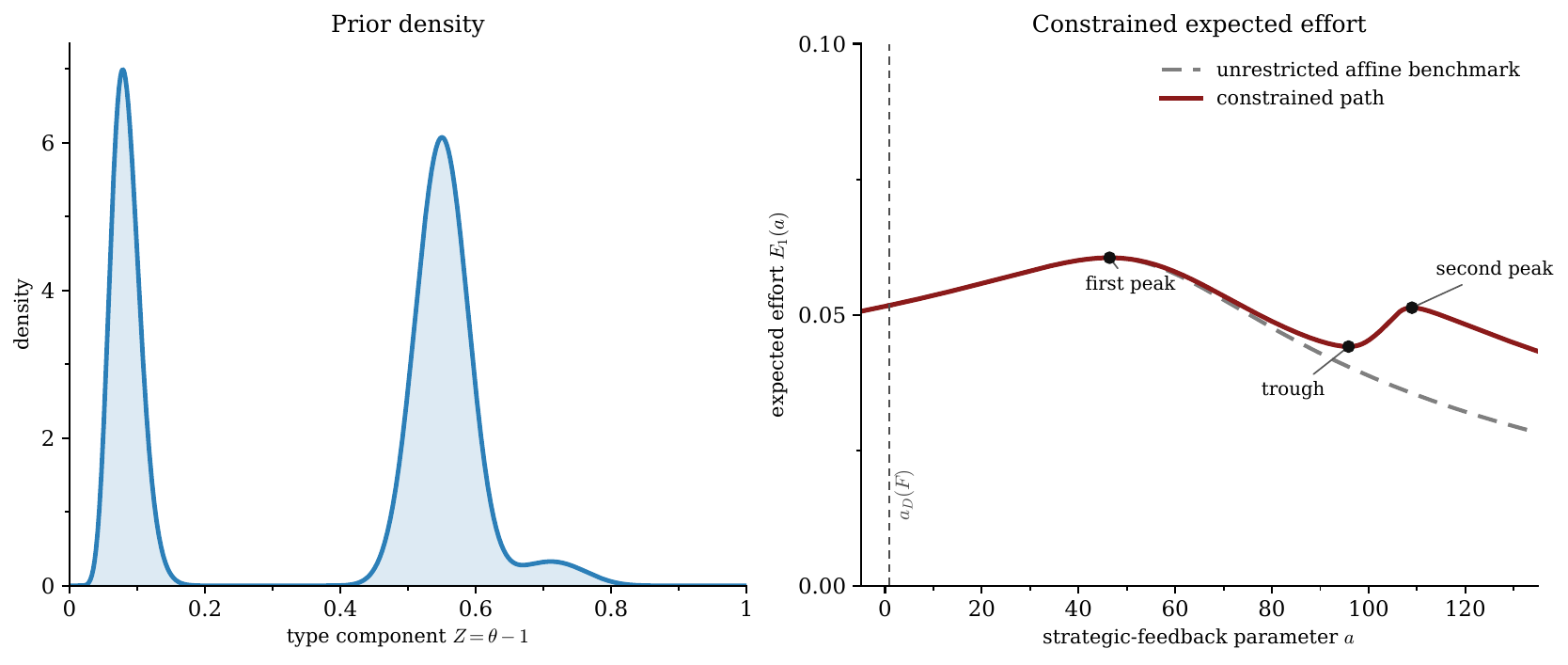}
\caption{Density and expected effort in the two-peak dropout example. The left panel plots the density of $Z=\theta-1$; the right panel plots unconditional expected effort along the constrained path, with inactive types counted as zeros.}
\label{DropoutDensityEffortFig}
\end{figure}

Proposition \ref{prop:bn-nonnegative} settles the action constraint $x\geq 0$, but truncation can still matter in the literal contest based on $\bar P$, even under empowerment. From the viewpoint of a deviating type, the opponent's effort is random because it is generated by the opponent's private type, so the same state-by-state clipping mechanism as in the complete-information mixed region can arise. As before, it is useful to separate a conservative sufficient condition for the main text from a sharper support-based treatment in the Appendix.

\begin{proposition}[Simple sufficient condition for the Bayesian benchmark]\label{prop:bn-truncation}
\leavevmode\par\noindent
Suppose Proposition~\ref{prop:bn-nonnegative} yields a fully active equilibrium, so equilibrium follows the affine benchmark rule
\[
x^{u}(\theta)=k\theta+d,
\]
with
\[
\underline x:=x^{u}(\beta),
\qquad
\overline x:=x^{u}(\alpha).
\]
If
\[
\overline x\leq \frac{1}{\alpha},
\qquad
c^2\leq 2b,
\qquad
a\leq b\alpha,
\qquad
c\alpha^2-2b\alpha+a\geq 0,
\]
then $x^{u}(\theta)$ is a symmetric Bayesian Nash equilibrium of the truncated contest based on $\bar P$.
\end{proposition}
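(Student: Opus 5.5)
The plan is to reduce the truncated contest to the polynomial one on a box that contains every relevant action. The target box is $B:=[0,1/\alpha]^2$. Let $U$ denote the polynomial interim payoff and $\bar U$ its truncated counterpart (with $\bar P$ in place of $P$). I would establish three facts: (i) $0\le P\le 1$ on $B$, so $\bar P=P$ there; (ii) every opponent action lies in $[\underline x,\overline x]\subset[0,1/\alpha]$; (iii) deviations outside $[0,1/\alpha]$ are unprofitable by cost alone.

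Fact (ii) is immediate. Full activity gives $\underline x\ge 0$, and the hypothesis gives $\overline x\le 1/\alpha$. Then for any deviation $\tilde x\in[0,1/\alpha]$ and any type $\theta$, (i) yields $\bar U(\theta,\tilde x)=U(\theta,\tilde x)$. By Proposition~\ref{prop:bn-nonnegative} and Theorem~\ref{thm:bn-unrestricted}, $x^u(\theta)$ maximizes $U(\theta,\cdot)$ over $\mathbb R_+$, so no such deviation is profitable. As a check on global optimality within the box, $\partial^2 P/\partial x^2=-2b+2ay\le -2b+2b\alpha y\le 0$ for $y\le 1/\alpha$, using $a\le b\alpha$. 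Hence $U(\theta,\cdot)$ is concave on $\mathbb R_+$ and the first-order affine rule is a global polynomial best response.

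Fact (iii) runs as follows. If $\tilde x>1/\alpha$, then $\bar P\le 1$ and $\theta\ge\alpha$ give $\bar U(\theta,\tilde x)\le 1-\theta\tilde x<0$. On the other hand, the equilibrium payoff is at least the payoff of deviating to $0$, which equals $\mathrm E[P(0,x^u(\theta_y))]\ge 0$ by (i). So this deviation is strictly worse.

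Fact (i) is the main obstacle, and it is where the remaining three hypotheses enter. For the upper bound, on the edge $y=0$ we have $P(x,0)=\tfrac12+x(c-bx)\le \tfrac12+c^2/(4b)\le 1$, using $c^2\le 2b$. By antisymmetry, $P\ge 0$ on the edge $x=0$. To extend these bounds to the interior of $B$, I would show that $P$ is nondecreasing in $x$ and nonincreasing in $y$ on $B$, so that $P(0,y)\le P(x,y)\le P(x,0)$.

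For the monotonicity I would write $P_x=c-2bx+3ax^2-ay^2$. The strategy is to locate its minimum over $B$ using $a\le b\alpha$, which controls the sign of $P_{xx}$ and of $P_{xy}=2a(x-y)$. The aim is to show that the binding point is the corner $x=y=1/\alpha$. There $P_x$ reduces to $c-2b/\alpha+a/\alpha^2=(c\alpha^2-2b\alpha+a)/\alpha^2\ge 0$, which is exactly the last hypothesis. The sign of $P_y$ then follows by symmetry.

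If this direct minimization turns out to be messy (for $a<0$ the term $-ay^2$ has the helpful sign, but for $a>0$ it does not), my fallback is to handle $a\le 0$ and $0<a\le b\alpha$ separately. In the second case I would bound $P_x$ below along the diagonal direction, since the gap term $x-y$ is what the cross-derivative controls. With (i)–(iii) in hand, no type has a profitable deviation in the truncated contest, which gives the statement.
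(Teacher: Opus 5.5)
Your reduction to facts (i)--(iii) is sound, and (ii) and (iii) go through as you wrote them. The gap is in (i): the monotonicity claim you propose to prove is false for every parameter configuration satisfying the hypotheses.

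\textbf{Why monotonicity fails.} First, the derivative is $P_x=c-2bx+2axy-ay^2$, not $c-2bx+3ax^2-ay^2$. Your corner value $c-2b/\alpha+a/\alpha^2$ is in fact the value of the correct formula. Second, the hypotheses force $a>0$: the condition $c\alpha^2-2b\alpha+a\ge 0$ gives $a\ge\alpha(2b-c\alpha)$, and $2b\ge c^2>c\alpha$. So your case split on the sign of $a$ is moot.

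With $a>0$, your own concavity computation shows $P_x$ is nonincreasing in $x$ on $B$. Moreover, $P_x(1/\alpha,y)$ is increasing in $y$ on $[0,1/\alpha]$, since its $y$-derivative is $2a(1/\alpha-y)\ge 0$. Hence the minimum of $P_x$ over $B$ is at $(1/\alpha,0)$, not at the diagonal corner, and it equals $c-2b/\alpha<0$.

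Concretely, $P(\cdot,0)=\tfrac12+cx-bx^2$ peaks at $c/(2b)\le 1/c<1/\alpha$ and falls afterwards. This is the same interior peak your own edge bound $P(x,0)\le\tfrac12+c^2/(4b)$ relies on. So $P$ is not nondecreasing in $x$ on $B$, and by antisymmetry it is not nonincreasing in $y$ either (e.g.\ $P_y(0,1/\alpha)=-P_x(1/\alpha,0)>0$). Neither direct minimization nor your fallback can establish a property that does not hold.

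\textbf{The missing idea.} Use concavity in own effort instead of monotonicity; this is the paper's argument. Because $P_{xx}=2(ay-b)\le 0$ for $y\in[0,1/\alpha]$, the minimum of $P(\cdot,y)$ over $[0,1/\alpha]$ is attained at an endpoint.
At $x=0$, $P(0,y)=\tfrac12-cy+by^2\ge 0$ for every $y$, since $c^2\le 2b$.
At $x=L:=1/\alpha$, write $z:=L-y\ge 0$. Then
\[
P(L,y)=\tfrac12+z\bigl(c-2bL+aL^2\bigr)+z^2\,(b-aL)\ \ge\ \tfrac12 .
\]
The two coefficients are nonnegative exactly by $c\alpha^2-2b\alpha+a\ge 0$ and $a\le b\alpha$. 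So the last hypothesis enters as the slope of $P(L,\cdot)$ at the diagonal, not as a global lower bound on $P_x$.

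Hence $P\ge 0$ on $B$, and $P\le 1$ on $B$ follows by antisymmetry because $B$ is symmetric. The paper needs even less: upper clipping can only lower a deviator's payoff, so $\bar U\le U$ once $P\ge 0$, and $P\le 1$ at the candidate comes from antisymmetry. With (i) obtained this way, your (ii) and (iii) complete the proof.
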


On this parameter region, the affine benchmark, the unrestricted polynomial game, and the truncated Bayesian game all coincide. The inequalities imply $a>0$, so this simple criterion applies only under suppression. The sharper Appendix Proposition \ref{prop:bn-truncation-detailed} replaces the global condition $c^2\leq 2b$ with a support-specific check on $[\underline x,\overline x]$ and can also certify empowering cases; under $a<0$, full activity is automatic, so only support-based truncation remains.

We close the section by extending the disclosure result of Theorem~\ref{thm:bn-disclosure} from the affine benchmark to the literal truncated contest. The conclusion is the same, but it has to be verified signal by signal once nonnegativity and truncation are in play.

\begin{proposition}[Sufficient conditions for Theorem \ref{thm:bn-disclosure} in the literal contest]\label{prop:bn-disclosure-sufficient}
\leavevmode\par\noindent
Maintain the notation of Theorem \ref{thm:bn-disclosure}. Suppose that for every feasible signal $\pi\in\mathcal S$, the induced posterior-mean contest satisfies both of the following conditions:
\begin{enumerate}
\item it is fully active, so Proposition \ref{prop:bn-nonnegative} coincides with the affine benchmark; and
\item truncation is inactive on the induced equilibrium support, for example by Proposition \ref{prop:bn-truncation-detailed}.
\end{enumerate}
Then the conclusion of Theorem \ref{thm:bn-disclosure} also holds in the literal contest with $x\geq 0$ and truncated CSF $\bar P$. Under empowerment ($a<0$), condition 1 is automatic by Proposition \ref{prop:bn-nonnegative}, so only condition 2 must be checked. At $a=0$, full activity is automatic by Remark \ref{rem:bn-zero}, but condition 2 still has to be checked unless a support-based truncation criterion verifies that clipping is harmless. Under suppression ($a>0$), both conditions must be verified signal by signal.
\end{proposition}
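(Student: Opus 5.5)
The plan is to reduce the literal-contest claim to the affine-benchmark statement of Theorem~\ref{thm:bn-disclosure}, applied signal by signal. Fix a feasible signal $\pi\in\mathcal S$. Since payoffs are quasi-linear in the cost and the opponent's effort depends only on the opponent's signal, each player's interim payoff given posterior $\mu$ depends on the own type only through $\mathrm{E}_\mu[\theta]\,\tilde x$. The induced Bayesian game is therefore exactly the IID contest with type distribution $\hat\theta_\pi$, whose support lies in $[\alpha,\beta]\subset(0,c)$ because posterior means are convex combinations of types. Its mean is $M_1$ by Bayes plausibility, so $\Delta$ is unchanged, and its variance is $\mathrm{Var}(\hat\theta_\pi)\le\sigma_\theta^2$. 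Thus all earlier results apply to the posterior-mean contest with $\sigma_\theta^2$ replaced by $\mathrm{Var}(\hat\theta_\pi)$. When $\hat\theta_\pi$ is degenerate (no disclosure), I interpret the contest by the continuity convention already adopted in the statement of Theorem~\ref{thm:bn-disclosure}.

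Next I show that, under conditions 1 and 2, the literal-contest equilibrium for $\pi$ is the affine rule of Theorem~\ref{thm:bn-unrestricted} built from $\hat\theta_\pi$.
\begin{enumerate}
\item By condition 1, Proposition~\ref{prop:bn-nonnegative} gives a fully active equilibrium, i.e.\ cutoff $t\ge\beta$. This equilibrium coincides with the affine rule, which is then nonnegative on the support, so it is an equilibrium of the unrestricted polynomial game with $x\ge0$.
\item By condition 2, for example via Proposition~\ref{prop:bn-truncation-detailed}, clipping is harmless: $\bar P=P$ on the equilibrium support, and no deviation that triggers clipping is profitable. Hence the same rule is a symmetric Bayesian Nash equilibrium of the truncated game based on $\bar P$.
\end{enumerate}
Ex ante expected effort in the literal contest is therefore $E_1$ from \eqref{eq:bn-e1-a}, with $\sigma_\theta^2$ replaced by $v_\pi:=\mathrm{Var}(\hat\theta_\pi)$. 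This is a function of $v_\pi$ alone.

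Finally I compare signals. Theorem~\ref{thm:uncertainty} shows that this expression is decreasing in the variance when $a>0$, increasing when $a<0$, and constant when $a=0$. The variances are ordered at the two extremes:
\begin{enumerate}
\item no disclosure gives $v_\pi=0$, the minimum;
\item full disclosure gives $v_\pi=\sigma_\theta^2$, the maximum;
\item more Blackwell-informative signals produce mean-preserving spreads of the posterior-mean distribution, so $v_\pi$ is weakly larger.
\end{enumerate}
This yields the three cases and the Blackwell formulation. The closing sentences on which conditions are automatic follow directly. For $a<0$, full activity holds by Proposition~\ref{prop:bn-nonnegative} (dropout is a suppression phenomenon). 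For $a=0$, it holds by Remark~\ref{rem:bn-zero}, since the dominant strategy $(c-\theta)/(2b)$ is positive whenever $\theta<c$.

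The main obstacle I expect is the monotonicity step. Theorem~\ref{thm:uncertainty} requires the variance changes to keep the affine rule fully active. I must ensure this holds along the entire family indexed by $v_\pi$, not only at the two endpoints. I would handle this by using the hypothesis that every feasible $\pi$ satisfies condition 1, and by comparing $E_1$ directly as a scalar function of $v\in[0,\sigma_\theta^2]$ evaluated at realized variances. Because $E_1$ depends on $v$ only through $\omega=v/a^2$, its monotonicity in $v$ is an algebraic property of the closed form. It holds wherever the formula is the equilibrium, so only pointwise validity at each feasible $\pi$ is needed. I would also check separately that the degenerate no-disclosure case agrees with the continuity limit of $E_1$.
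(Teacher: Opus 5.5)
Your proposal is correct and follows the paper's argument. You fix a signal, use condition 1 to identify the nonnegative-effort equilibrium of the posterior-mean contest with the affine rule, use condition 2 to identify the literal-contest equilibrium with that rule, and then carry the variance-based ranking of Theorem~\ref{thm:bn-disclosure} over signal by signal. Your extra steps fill in what the paper leaves implicit: the posterior-mean reduction, reading condition 2 as also ruling out clipping-induced deviations (as Proposition~\ref{prop:bn-truncation-detailed} does), and observing that monotonicity of the closed form needs validity only at each feasible $\pi$.
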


Two broader points emerge from the incomplete-information results, both of which sharpen the message of the complete-information analysis. First, the typical comparative statics found in the contest literature --- that dispersion discourages effort and that more information reduces effort --- reflect the suppressive sign of strategic feedback rather than a property of contests in general. The cubic CSF reveals them as one side of a symmetric pair, with empowering environments producing the opposite responses. Second, the location of the effort-maximizing strategic-feedback parameter is itself uncertainty-dependent: over the dominant middle range of type variances identified above, $a^\dagger$ moves left as variance rises, first toward neutrality and then into empowerment, so the design margin that maximizes effort shifts with the informational environment.

Taken together, our results show that the direction of strategic interaction fundamentally shapes how contests respond to uncertainty, disclosure, and heterogeneity. Suppressive and empowering environments therefore generate qualitatively different comparative statics even within the same contest framework.

\section{Discussion: Contests and Institutions}

In contest theory, the success function is the basic building block. Choose it, and its properties travel quietly into every model that follows. The generalized Tullock contest has held that role for forty years, and for good reason: it is tractable, familiar, and well suited to the lottery-like settings it was designed for. Yet the choice carries a feature that is rarely stated out loud. At any asymmetric profile, the leading player's marginal effort lowers the trailing player's marginal return. This is not a parameter the modeler can tune. It comes built into the form. So a model that begins with Tullock begins, by default, in a suppressive world --- even when the application would suggest otherwise.

This is not a complaint about the lottery model where it fits. When effort buys tickets, shares of attention, or probabilistic claims on a prize, the lottery primitive is the right tool, and its built-in structure is part of what makes it useful. The question is what the field gains once that primitive is generalized.

Imagine, for a moment, a version of consumer theory in which only substitutes existed --- not because complements are absent from the world, but because the standard demand systems had been written in a way that ignores them. Antitrust, demand estimation, and platform analysis would all be poorer for it. Suppression has played a similar role in contest theory, and empowerment has been its missing twin. We do not mean that empowerment and complementarity are the same thing; empowerment is rank-dependent --- defined by the current ordering of investments --- whereas complementarity is not; Section 2 develops the distinction. The point is methodological. A single benchmark had quietly absorbed an entire conceptual category, and a whole range of questions become accessible once the missing category has a benchmark of its own.

Once such a benchmark is available, much of what follows comes almost free of charge. The cubic CSF delivers closed-form mixed equilibria under complete information, a unique affine Bayesian Nash equilibrium under incomplete information, and typically single-peaked expected effort in the interaction parameter. The sign of strategic feedback governs how uncertainty affects effort and how an effort-maximizing designer ranks disclosure policies. None of this requires new machinery. It comes from the geometry of a polynomial that is symmetric, smooth, and free of the singularity at zero effort that has long been the source of so much technical difficulty in lottery-style models. The doors that open here were closed for technical reasons, not substantive ones.

Read this way, several familiar findings stop looking like isolated domain-specific complications and begin to line up. In innovation, \cite{AghionEtAl2005} documents an inverted-U relationship between product-market competition and innovation, with competition discouraging innovation by laggard firms but encouraging it by neck-and-neck firms, while \cite{BloomSchankermanVanReenen2013} emphasizes the coexistence of positive technology spillovers and negative business-stealing effects. In politics, \cite{Stromberg2004} shows that media competition and advertising incentives skew coverage toward large groups and groups valuable to advertisers, while \cite{BesleyPrat2006} shows that media capture weakens government accountability and that pluralism and independent ownership can limit it.

These results sit in different literatures and use different language, but they share a common structural feature. Applied work repeatedly encounters institutions that alter how effort translates into effective competition, while the standard contest benchmark builds in only one sign of that feedback. The cubic CSF does not reproduce those environments in full. It does, however, supply a common reduced-form language for discussing them inside contest theory rather than around it.

The parameter $a$ is where institutions meet the contest. Real institutional environments are high-dimensional --- appropriability regimes, the orientation of intermediaries, lock-in, spillovers, scrutiny, disclosure --- and they reach the contest through different channels. The cubic benchmark collapses all of them onto one question: does the institutional feature in front of us make a local lead translate into a smaller or a larger marginal return for the trailing side? Strong patent enforcement and a clientelistic press both suppress, in this sense, however different they look in other respects; weak appropriability and an accountability-driven press both empower. The reduction is deliberate, and it has the usual cost: $a$ does not resolve the channels through which any particular institution reaches the contest. It has the usual gain: it makes very different applications comparable inside one tractable model.

We end where the field began. The second half of \cite{Tullock1980} is a search for institutional levers that lower the social cost of rent seeking. The levers Tullock identifies are all defensive: reduce the number of contestants, make marginal costs steeper, or pre-bias the contest toward a designated winner. He acknowledges that each comes at a price, and closes with an explicit call for further work on practical methods to reduce dissipation.

Inside the fully active cubic benchmark, expected rent-seeking effort is single-peaked in the interaction parameter, so it falls off as the contest moves toward either strong suppression or empowerment. The two edges are not symmetric. Strong suppression is the route Tullock effectively explored, and his own analysis ran into a wall in this region, where pure-strategy equilibrium fails and his proposed remedies work by modifying the symmetric benchmark rather than by working within it. Our framework shows that strong suppression does lower rent seeking, without those external interventions, but it does so at the price of unpredictability, with players mixing their strategies. Empowerment lowers effort by a different mechanism. It does not pre-select the winner or eliminate contestants. It changes how a local lead translates into the trailing side's marginal return: under strong empowerment, falling behind is less damaging because the leading side's effort raises the trailing side's marginal effectiveness, so the strategic value of aggressive investment to stay ahead is muted. The contest remains genuinely contested; what changes is that visible leadership invites response rather than entrenching itself, and the resulting equilibrium effort is lower.

That second route was unavailable in the canonical contest model, and once one starts to look for it in the world rather than the model, it is not hard to find. Independent media, antitrust enforcement, and intellectual-property regimes that balance protection with diffusion all work, in our terms, by making leadership invite response. Whether these arrangements are welcome is a question for an external welfare standard, not for the contest model itself. What the cubic benchmark contributes is a way to see them inside contest theory --- to read institutional reforms as movements in the strategic-feedback parameter, and to ask whether the contest one is studying belongs in the same world that the lottery primitive describes by default. Tullock asked how to limit socially wasteful competition without making the competition itself disappear. The cubic benchmark does not just answer that question. It brings institutions back into the basic contest model.

\pagebreak

\section*{Appendix}
\subsection*{Additional Applications of Empowerment and Suppression}

The two motivating examples in the introduction --- R\&D under different appropriability regimes, and political competition under different media institutions --- are not the only places where empowerment and suppression arise naturally. The basic question is always the same: fix a moment at which one side is ahead, then ask whether a marginal increase in the leading player's investment makes the trailing side's own additional effort more effective or less effective. If less, the environment is suppressive; if more, empowering. This section is an invitation to look for that pattern in other settings. We do not claim to establish that any particular mechanism is the operative one; we only mean to show that the question has a natural answer in a range of familiar contests, and that the answer can fall on either side.

In many contests the feedback runs through a scarce resource that both sides need, and the institutional environment governing that resource sets the sign. Capital and talent are the clearest cases, and within each one finds both suppression and empowerment.

Consider first the competition for capital. Two firms race toward an ambitious technological prize --- a moon hotel, a fusion plant, a brain-computer interface --- and SpaceX and Blue Origin are a recognizable instance. The question is who investors think benefits from the leader's spending. When they read a lead as the strongest signal of who will ultimately win, capital concentrates on the leader: every milestone it reaches tightens its hold on the available money, and the trailing firm's next round is harder to close. A marginal increase in the leader's investment makes the trailing firm's fundraising less effective. That is suppression. But if there is no shortage of investors hunting for the next Google or the next Tesla, some will read the same lead differently --- as the leader paying to clear the path. On this view the leader's spending de-risks the whole endeavour, maps the dead-ends, and proves the approach can work, and a cheaper trailing firm that learns from those mistakes is the better bet precisely because it free-rides on what the leader has established. Then the leader's progress draws money toward the challenger, and the trailing firm raises capital more easily as the leader pulls ahead. That is empowerment. The two regimes track recognizable features of the capital market: concentrated, late-stage money with low risk tolerance tends to chase the leader, while diversified, early-stage money hunting for the next breakout tends to spread out.

The competition for talent has the same two faces, and here the organizing variable is the direction in which experienced people flow relative to rank --- set by whether the costly work of bringing talent into the industry is done at the top or at the bottom.\footnote{The two-player model captures the bilateral version of this story: one firm bears the entry-recruiting cost and the other harvests or skims it. The ``field'' language below is the natural many-player extension, in which the same flow runs between the leading side and the rest of the field.}

When the leader does that work, talent flows downward. A dominant technology firm recruits raw talent from outside the industry and seasons it, after which smaller competitors headhunt that talent away with greater autonomy, a larger equity stake, and the chance to define a product rather than maintain one. The leader keeps doing this because it needs the people now; the leakage is a cost of leading, not a reason to stop. The effect is that the leader's recruiting raises the productivity of the trailing firm's recruiting, because the challenger hires from a pool the leader created; and when the challenger sharpens its offer, more of what the leader brings in walks out the door, so the leader's marginal recruiting becomes less productive. That is empowerment: whoever is on top does the entry recruiting and ends up seeding the field below it.

When the bottom does that work, talent flows upward. In football or modelling, smaller clubs and agencies scout and develop raw talent, and the top clubs and agencies then buy or poach it for the best opportunities. Here the leader's pulling power lowers the productivity of the trailing side's effort, because every prospect a small club develops is liable to be bought out before the club can capture the return; and when the smaller side develops harder, it enriches the pool the leader draws from, so the leader's marginal recruiting becomes more productive. That is suppression: whoever is on top skims the field below it. The contrast is simply where entry recruiting happens --- at the top, talent disperses downward and the leader subsidizes the field; at the bottom, it concentrates upward and the field subsidizes the leader.

The same dichotomy is easy to find beyond capital and talent. Platform and standards races turn on how a lead in installed base translates into the value of further investment. Under lock-in --- proprietary interfaces, exclusive complements, switching costs that grow with the user base --- the larger the lead, the more each remaining user is worth defending and the less a trailing platform gets for the same spending, because a user weighing which platform to join cares how many others have already joined. That is suppression. Under interoperability --- open standards, data portability, mandated compatibility --- a trailing platform can plug into the tooling and infrastructure the leader builds rather than rebuild it, so the leader's investment in deepening the ecosystem raises the productivity of the laggard's own effort. That is empowerment. Telecommunications regulators have chosen between these regimes deliberately for decades, and disputes over dominant digital platforms are, in our terms, arguments about which side of $a=0$ a market sits on.

A simpler everyday version plays out in retail. A leading consumer brand competing for physical shelf space converts its lead into retailer relationships, co-marketing budgets, and prime placement, all of which make a trailing brand's marketing budget less productive per dollar --- suppression. The same two brands competing for online attention face a different geometry: search and recommendation engines surface the leading brand alongside its closest alternatives, so a shopper looking for the leader is shown the challenger too, and the trailing brand's marginal spend benefits from the attention the leader has drawn to the category --- empowerment.

Sports is the setting where this dynamic has been managed most visibly, and for the longest time. The labour-market linkage just discussed is already present in sport --- clubs compete for players exactly as firms compete for talent --- and what follows sits on top of it. A sporting contest is not a clean static instance of the model, since races and seasons are dynamic by nature, but league commissioners, racing regulators, and tournament organizers have spent decades choosing, by hand, what sign the feedback between leader and chaser should take, in vocabulary almost identical to this paper's. At the level of season design, reverse drafts in the NHL, NBA, and NFL hand the trailing teams the next year's best new players, so that a current lead is deliberately made to raise the trailing side's future strength --- empowerment by construction; promotion-and-relegation in European football does the reverse, raising the cost of falling behind and letting strong clubs compound their advantages --- suppression by construction.

 At the level of a single race, the same dichotomy appears in the mechanics of competition. In cycling, the rider in front breaks the air and the rider behind drafts in the slipstream, spending less energy to hold the pace; a lead actively raises the trailing rider's effectiveness, which is why breakaways are hard to sustain and pelotons stay bunched. Motor racing legislates the same effect directly, with drag-reduction systems and aerodynamic rules that tune how much a trailing car benefits from the car ahead, while in other configurations a leading car uses dirty air and blocking lines to make the chaser's effort less effective.

Sport also makes one further feature of the model recognizable. The marginal return to a competitor's own effort is not always positive: athletes over-train, peak too early, and burn out. The cubic specification accommodates this through the local curvature of own-effort returns, which the standard lottery success function does not.

These examples have little in common on the surface, but they all turn on the same simple question: when one side pulls ahead, does that make catching up harder, or does it make the trailing side's next move more effective? The first case is suppression; the second is empowerment. The channel that carries the effect varies from setting to setting --- capital markets, hiring, platform rules, shelf space, league design, aerodynamics --- and so does the institution that fixes its direction. What does not vary is the question itself. In each case a current lead is converted, by some feature of the environment, into either a tighter hold on the contest or an opening for the side behind, and in the cubic benchmark that conversion is captured by the single parameter $a$. The benchmark does not claim that any one of these stories is the true mechanism in any one industry; it claims only that the suppressive and the empowering cases are equally natural.

\subsection*{Proofs}

This subsection gives the print-version proofs. The aim is to keep the logic visible without carrying every algebraic step. The full derivations, including the support checks and branch calculations, are collected in the online appendix below.

\subsubsection*{Model observations}

\begin{proof}{Proof of Observation}{\ref{obs:quadratic}}
The idea is that fairness makes $P-1/2$ antisymmetric. Any antisymmetric polynomial vanishes on the diagonal and therefore contains the factor $x-y$. If the total degree is at most two, the remaining symmetric factor can only be linear in $x+y$, so no $xy$ term can appear. Hence the cross-partial is zero, and cubic degree is the first place where a non-trivial strategic cross-effect can enter.
\end{proof}

\begin{proof}{Proof of Observation}{\ref{obs:tullock-suppressive}}
The proof is just the cross-partial calculation. All terms outside $x^r-y^r$ are positive, so the sign of the cross-partial follows the ranking of efforts. Thus Tullock contests are suppressive by functional form, not by parameter choice.
\end{proof}

\begin{proof}{Proof of Observation}{\ref{obs:domain-geometry}}
The useful coordinates are average effort and the effort gap. In those coordinates the diagonal has raw probability $1/2$, reversing the sign of the gap reflects the probability around $1/2$, and sufficiently large off-diagonal movement eventually pushes the raw polynomial outside $[0,1]$. Thus the admissible set is a symmetric band around the diagonal, unbounded along it but bounded in every other direction. It is necessarily non-convex: it contains every diagonal point and also contains off-diagonal points arbitrarily close to the diagonal, but the midpoint between such an off-diagonal point and a sufficiently large diagonal point keeps a fixed nonzero gap while sending average effort to infinity, where the raw polynomial leaves $[0,1]$.
\end{proof}

\subsubsection*{Complete-information results}

\begin{proof}{Proof of Theorem}{\ref{thm:cinfo-unrestricted}}
The key simplification is that a player's payoff is quadratic in own effort. In pure strategies, the two first-order conditions differ only by the factor
\[
(x-y)\bigl(a(x+y)-2b\bigr),
\]
so any pure candidate is either symmetric or lies on the line $x+y=2\kappa$. The second line cannot satisfy both players' second-order conditions, leaving only the symmetric equation
\[
ax^2-2bx+c-\theta=0.
\]
Thus pure equilibrium exists exactly when this equation has the admissible root selected by concavity,
\[
x^*=\kappa-\operatorname{sgn}(a)\sqrt{\zeta},
\]
which is the region $\zeta\geq0$.

When $\zeta<0$, the symmetric root disappears and mixing becomes necessary. Against a mixed opponent, expected payoff still depends on the opponent only through the first two moments and remains quadratic in the deviating effort:
\[
K+(aE_1-b)x^2+(c-\theta-aE_2)x.
\]
A non-degenerate support can be optimal only if this quadratic is flat. Flatness gives $aE_1=b$ and $aE_2=c-\theta$, hence mean $\kappa$ and variance $-\zeta$. Conversely, these moments make the deviator indifferent across all efforts, so any distributions on $[0,\infty)$ with those moments form an unrestricted mixed equilibrium.
\end{proof}

\begin{proof}{Proof of Remark}{\ref{rem:cinfo-moment-unique}}
The intuition is that the mixed equilibrium is pinned down by indifference of a quadratic payoff. Indifference fixes the coefficients of the quadratic, and therefore fixes only the first two moments of the opponent's distribution. Nothing in the unrestricted payoff sees higher moments, so the equilibrium is unique in moments but not in distributions.
\end{proof}

\begin{proof}{Proof of Remark}{\ref{rem:cinfo-zero}}
At $a=0$, the interaction term drops out of the marginal payoff: each player solves the same strictly concave one-variable problem regardless of the opponent's effort. The dominant strategy is therefore $(c-\theta)/(2b)$, and the formula for the pure-equilibrium root converges to exactly this value as $a\to0$.
\end{proof}

\begin{proof}{Proof of Corollary}{\ref{cor:cinfo-br-curvature}}
The curvature result is the local version of the same quadratic geometry. Rewriting the interior best response as
\[
x^{BR}(y)=\frac{y+\kappa}{2}+\frac{\zeta}{2(y-\kappa)}
\]
gives
\[
\frac{d^2x^{BR}}{dy^2}=\frac{\zeta}{(y-\kappa)^3}=\frac{a^3\zeta}{(ay-b)^3}.
\]
The second-order condition fixes the sign of the denominator. Under empowerment, $\zeta>0$ and the branch is convex. Under suppression, the sign changes with $\zeta$, so best responses switch curvature exactly at the pure/mixed boundary.
\end{proof}

\begin{proof}{Proof of Theorem}{\ref{thm:cinfo-effort}}
The proof treats the pure and mixed regions separately. In the pure region, equilibrium effort is the stable root of
\[
ax^{*2}-2bx^*+c-\theta=0.
\]
Implicit differentiation gives
\[
\frac{dx^*}{da}=\frac{x^{*2}}{2(b-ax^*)}>0,
\]
because the denominator is positive exactly on the second-order-stable branch. Thus effort rises as the contest moves from empowerment toward stronger suppression, up to the pure/mixed boundary. In the mixed region, Theorem \ref{thm:cinfo-unrestricted} replaces the fixed-point equation by the moment restriction $E[x]=b/a$. This region can occur only under suppression, so $a>0$ and $dE[x]/da=-b/a^2<0$. The pure and mixed formulas meet at $\zeta=0$, which is therefore the peak.
\end{proof}

\begin{proof}{Proof of Corollary}{\ref{cor:cinfo-max}}
Rent dissipation is total expected effort in this unit-prize normalization. Theorem \ref{thm:cinfo-unrestricted} gives total effort $2x^*$ on the pure branch and $2E[x]=2b/a$ on the mixed branch. Theorem \ref{thm:cinfo-effort} already shows that the first expression rises up to $\zeta=0$ while the second falls after $\zeta=0$. Multiplying by two preserves the same peak, and at the boundary the two expressions agree because $x^*=\kappa=b/a$.
\end{proof}

\begin{proof}{Proof of Proposition}{\ref{prop:cinfo-truncation}}
The pure part asks when the unrestricted symmetric candidate survives clipping. At $(x^*,x^*)$ the raw probability is $1/2$, so truncation is inactive at the candidate and the payoff is $1/2-\theta x^*$. For any deviation with $0\leq P\leq1$, the unrestricted best-response argument applies. If $P\geq1$, clipping lowers the deviator's winning probability relative to the raw polynomial; if $P\leq0$, the deviator's payoff is just $-\theta x\leq0$. Hence the candidate survives exactly when its own payoff is nonnegative:
\[
\frac12-\theta x^*\geq0.
\]
In the mixed region, on-domain mixing still imposes the unrestricted quadratic indifference conditions, so the moments must be $E[x]=\kappa$ and $\operatorname{Var}(x)=-\zeta$. The resulting common payoff is $1/2-\theta\kappa$. Since zero effort gives a nonnegative truncated payoff, participation requires $1/2\geq\theta\kappa$.
\end{proof}

\begin{proof}{Proof of Proposition}{\ref{prop:cinfo-two-point}}
The mixed region pins down only mean $\kappa$ and variance $-\zeta$, so the proof constructs minimal two-point distributions with those moments. With $s=\sqrt{-\zeta}$, the symmetric support $\{\kappa-s,\kappa+s\}$ is feasible exactly when $s\leq\kappa$, equivalently $a\leq2b^2/(c-\theta)$, and equal probabilities deliver mean $\kappa$ and variance $s^2$. When that lower support point would be negative, use the endpoint support $\{0,z\}$ and solve
\[
pz=\kappa,
\qquad
pz^2=\frac{c-\theta}{a},
\]
which gives $z=(c-\theta)/b$ and $p=b^2/[a(c-\theta)]$. At $a=2b^2/(c-\theta)$, the lower symmetric point is zero, the upper point equals $z$, and $p=1/2$, so the two descriptions match.
\end{proof}

\begin{proposition}[Branch-specific conditions for two-point families]\label{prop:cinfo-mixed-branches}
Suppose $a>0$ and $\zeta<0$. For the canonical two-point families, define
\[
\bar x=\max\left\{0,\kappa-\frac{\theta}{2as}\right\},
\qquad
\bar a_{M}=\frac{\left(b+c^{2}-3 c \theta+2 \theta^{2}+\sqrt{b \left(b-2 c \theta+2 \theta^{2} \right)}\right) b^{2}}{\left(c-\theta \right)^{3}}.
\]
Then the symmetric branch is an on-domain mixed equilibrium whenever
\[
P(0,\kappa-s)\geq 0
\qquad\text{and}\qquad
P(\bar x,\kappa+s)\geq 0.
\]
The endpoint branch is an on-domain mixed equilibrium whenever
\[
a\leq \bar a_{M}.
\]
Wherever the relevant branch condition holds, the truncated contest attains the unrestricted moment profile \eqref{mixed} and the common payoff $\tfrac{1}{2}-\theta\kappa$.
\end{proposition}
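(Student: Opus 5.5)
The plan is to bound the deviator's truncated payoff against each canonical two-point strategy from above by the unrestricted payoff. By Theorem~\ref{thm:cinfo-unrestricted}, the unrestricted payoff is flat and equal to $\tfrac12-\theta\kappa$ against either family. I write the truncation error as
\[
\bar P-P=(-P)_+-(P-1)_+ ,
\]
so the truncated payoff against an opponent distribution $G$ is
\[
V(x)=\tfrac12-\theta\kappa+\mathrm{E}_G\bigl[(-P(x,Y))_+\bigr]-\mathrm{E}_G\bigl[(P(x,Y)-1)_+\bigr].
\]
Clipping at $1$ only lowers $V$. The whole task is therefore to control the upward clipping term $(-P)_+$, or to dominate the deviations where it appears by a cruder bound. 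The equilibrium claim then follows in three steps. First, $V(x)\le \tfrac12-\theta\kappa$ for all $x\ge0$. Second, equality holds on the support, which is automatic once the support profiles lie in $\overline{\mathcal D}$, where the raw and clipped payoffs agree. Third, the moment and payoff conclusions follow from Theorem~\ref{thm:cinfo-unrestricted} and Proposition~\ref{prop:cinfo-truncation}.

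For the symmetric branch the opponent puts mass $\tfrac12$ on $y_-=\kappa-s$ and on $y_+=\kappa+s$. The asymptotics of $P(x,y)$ in $x$ are governed by the sign of $ay-b$. Against $y_+$ we have $ay_+>b$, so $P(x,y_+)\to+\infty$ as $x\to\infty$; negativity can then only occur for small $x$, where $x$ trails. Against $y_-$ we have $ay_-<b$, so $P(x,y_-)$ eventually turns negative for large $x$, and possibly at $x=0$. I split the deviation range accordingly.
\begin{itemize}
\item \emph{Large $x$, where $P(x,y_-)<0$.} Here I use the crude bound $V(x)\le \tfrac12\cdot 1+\tfrac12\cdot 0-\theta x$ and show these $x$ exceed $\kappa$. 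This gives $V(x)\le \tfrac12-\theta\kappa$.
\item \emph{Intermediate $x$.} On an interval containing the support, I use that $P(\cdot,y_-)$ is increasing in $x$ up to its interior maximum. Then $P(0,y_-)\ge0$ makes $P(x,y_-)\ge0$ there.
\item \emph{Against $y_+$.} The threshold $\bar x=\max\{0,\kappa-\theta/(2as)\}$ should be exactly the point below which a deviation is unprofitable even ignoring clipping. I expect it to come out of the $x$-derivative of the raw payoff against $y_+$ alone, bounded via the quadratic structure. So for $x<\bar x$ I bound $V$ directly. For $x\ge\bar x$, the monotonicity of $P(\cdot,y_+)$ on the relevant arc together with $P(\bar x,y_+)\ge0$ removes the clipping term.
\end{itemize}

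For the endpoint branch the opponent plays $0$ with probability $1-p$ and $z=(c-\theta)/b$ with probability $p$. Against $y=0$ the raw payoff is $P(x,0)=\tfrac12+x(c-bx)$. It is negative only for $x>x_0:=(c+\sqrt{c^2+2b})/(2b)$, and there the crude bound $V(x)\le p-\theta x$ applies. Against $y=z$ I need to check two things: where $P(x,z)<0$ can occur for $x<z$, and the value at $x=0$, namely $P(0,z)=\tfrac12-z(c-bz)$. The plan is to reduce all of this to one polynomial inequality in $a$ (equivalently in $p$) by requiring the following.
\begin{itemize}
\item The crude bound beats $\tfrac12-\theta\kappa$ at the first $x$ where clipping begins.
\item The residual gain $(1-p)\,(-P(x,0))_+$, or $p\,(-P(x,z))_+$, is at most the strict loss in the unrestricted comparison. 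That loss is zero in the flat case, so this forces a tangency-type condition.
\end{itemize}
The expression for $\bar a_M$, with its $\sqrt{b(b-2c\theta+2\theta^2)}$ term, indicates that the binding constraint is a quadratic in $a$ obtained by equating the crude bound with $\tfrac12-\theta\kappa$ at the boundary root of $P$. I would derive $\bar a_M$ as its larger root and verify $\bar a_M\ge 2b^2/(c-\theta)$.

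The main obstacle is the endpoint branch: identifying which deviation region is binding and checking that the single inequality $a\le\bar a_M$ controls every other region. The raw success function is non-monotone in own effort, so several sign-change points of $P(\cdot,0)$ and $P(\cdot,z)$ must be ordered relative to $z$ and $\kappa$. I would first establish this ordering under the mixed-region condition $\zeta<0$, and only then solve for the binding quadratic.
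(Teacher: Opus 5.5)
Your decomposition of the clipping error and your split into deviation regions are sound. There is a genuine gap, however, in how you propose to obtain $\bar a_M$ on the endpoint branch: you attribute it to the wrong constraint.

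\emph{Why the crude bound cannot give $\bar a_M$.} The crude-bound comparison at $x_0$ cannot produce an upper cutoff. The raw payoff is flat and $P(x_0,0)=0$, so $pP(x_0,z)-\theta x_0=\tfrac12-\theta\kappa$. Your requirement $p-\theta x_0\le\tfrac12-\theta\kappa$ is therefore just $P(x_0,z)\ge1$. Since $\partial_a P(x_0,z)=zx_0(x_0-z)>0$, this is a \emph{lower} bound $a\ge a_0$. Moreover $a_0<2b^2/(c-\theta)$, so it is slack on the whole branch, and it involves $\sqrt{c^2+2b}$ rather than the radical in $\bar a_M$.

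\emph{The binding constraint.} It is the one in your second bullet, and it lives on $[0,x_0]$. There $P(x,0)\ge0$, so clipping against $y=0$ can only hurt. Because the raw payoff has no slack, any $x$ with $P(x,z)<0$ is a strictly profitable deviation. Your bullet's inclusion of $(1-p)(-P(x,0))_+$ is misplaced: for $x>x_0$ that gain is not forced to vanish. It is offset by clipping $P(x,z)$ at $1$, which is exactly what the crude bound records. The condition to impose is $P(\cdot,z)\ge0$ on $[0,x_0]$.

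\emph{The tangency computation you are missing.} Using $c-bz=\theta$, write $P(x,z)=\tfrac12+(x-z)(\theta+mx)$ with $m:=az-b\ge b>0$. This is a convex parabola with vertex $x_v=z/2-\theta/(2m)<z/2<x_0$ and vertex value $\tfrac12-\theta z-(mz-\theta)^2/(4m)$. Nonnegativity of the vertex value is equivalent to $z^2m^2-2(1-\theta z)m+\theta^2\le0$. Its larger root $m_+=(1-\theta z+\sqrt{1-2\theta z})/z^2$ gives exactly $\bar a_M=(m_++b)/z$. The smaller root $m_-$ never binds, because $x_v>0$ iff $m>\theta/z=\sqrt{m_-m_+}\ge m_-$. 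When $x_v\le0$, the check is $P(0,z)=\tfrac12-\theta z\ge0$, i.e.\ $b\ge2\theta(c-\theta)$, which is precisely the realness of the radical in $\bar a_M$.

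\emph{Two smaller corrections.} First, drop the plan to verify $\bar a_M\ge2b^2/(c-\theta)$. It is not part of this statement; if it fails, the endpoint claim is simply vacuous. It is also false in general: for $c=1$, $\theta=0.2$, $b=0.4$ one gets $\bar a_M\approx0.331<0.4=2b^2/(c-\theta)$. It needs the primitive restrictions of Proposition~\ref{prop:cinfo-mixed-simple}.

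Second, on the symmetric branch you misread $\bar x$. The raw payoff is identically $\tfrac12-\theta\kappa$, so no deviation is strictly unprofitable ``ignoring clipping''. A direct bound for $x<\bar x$ based on raw-payoff slack is therefore unavailable. What you need instead is that $\bar x$ minimizes the convex parabola $P(\cdot,\kappa+s)$ over $[0,\infty)$; its vertex is $\kappa-\theta/(2as)$, using $a(\kappa+s)^2=c-\theta+2bs$. Hence $P(\bar x,\kappa+s)\ge0$ already gives $P(x,\kappa+s)\ge0$ for all $x\ge0$, and $x<\bar x$ needs no separate case.

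With that fix your symmetric-branch argument works. Your crude bound on $\{x:P(x,\kappa-s)<0\}$, a set lying beyond the vertex $\kappa+\theta/(2as)>\kappa$, does the job that the paper's identity $P(x,\kappa-s)+P(x,\kappa+s)=1+2\theta(x-\kappa)$ does for $x\ge\kappa$.
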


\begin{proof}{Proof of Proposition}{\ref{prop:cinfo-mixed-branches}}
Once the two-point distribution has the right moments, the unrestricted payoff is flat, so any profitable deviation must be created by clipping. On the symmetric branch, deviations above the mean are never helped by clipping, while deviations below the mean are safe if the two raw probabilities remain nonnegative at their worst points. On the endpoint branch, the right tail is automatically harmless, and the only remaining danger is that $P(x,z)$ dips below zero before the zero-support probability vanishes; the tangency where this first happens gives the cutoff $\bar a_M$.
\end{proof}

\begin{proof}{Proof of Proposition}{\ref{prop:cinfo-mixed-simple}}
The primitive inequalities make the branch-specific checks automatic on a nonempty part of the mixed region. The bound $c\leq2\theta$ keeps participation from being too demanding and ensures that the symmetric branch's critical point is at the endpoint $x=0$. The bound $b\geq c^2/2$ makes $P(0,y)=1/2-cy+by^2$ nonnegative for every $y$, so the symmetric branch stays away from the lower clipping boundary. On the endpoint branch, Proposition \ref{prop:cinfo-mixed-branches} applies up to $\bar a_M$, and the same primitive inequalities imply $\bar a_M\geq2b^2/(c-\theta)$, so the admissible interval is nonempty.
\end{proof}

\begin{proposition}[Failure of on-domain mixing under extreme suppression]\label{prop:cinfo-mixed-failure}
There exists a finite threshold $\bar a_{F}$ such that for every $a\geq \bar a_{F}$ the truncated contest admits no on-domain mixed equilibrium.
\end{proposition}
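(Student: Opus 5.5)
Fix the remaining parameters $b,c,\theta$ and let $a\to\infty$. By Proposition~\ref{prop:cinfo-truncation}, any on-domain mixed equilibrium of the truncated contest must have support in $\mathcal D$, mean $\kappa=b/a$ and second moment $E[x^2]=(c-\theta)/a$. Both moments vanish as $a\to\infty$. The argument combines these moment constraints with the requirement that the support lie in $\mathcal D$, and derives a contradiction for large $a$.

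\textbf{Step 1: the support must reach a fixed positive level.} Since $\mathrm{Var}(x)=-\zeta=(a(c-\theta)-b^2)/a^2>0$, the support is nondegenerate. Let $M$ be the supremum of the support. Then
\[
E[x^2]\le M\,E[x] \quad\Longrightarrow\quad M\ge \frac{E[x^2]}{E[x]}=\frac{c-\theta}{b}=z .
\]
This lower bound does not depend on $a$. So every on-domain equilibrium distribution puts mass at or arbitrarily near some point $y_0\ge z/2$, say. The same reasoning shows that the mass near zero tends to one, because $P(x>\epsilon)\le \kappa/\epsilon\to0$.

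\textbf{Step 2: locate a profile outside $\mathcal D$.} Every profile in the product of the two supports must lie in $\bar{\mathcal D}$ where indifference is used. Consider the profile $(0,y_0)$, or $(\epsilon,y_0)$ with $\epsilon$ small. We have
\[
P(0,y)=\tfrac12-cy+by^2,
\]
which is independent of $a$, so that route alone fails. Instead use a small support point $x\in(0,\epsilon)$ together with $y_0$. Since
\[
P(x,y_0)=\tfrac12+(x-y_0)\bigl(c-b(x+y_0)+axy_0\bigr),
\]
the term $-a x y_0^2$ dominates as soon as $a x\gg 1$. The issue is therefore whether the support has points $x$ with $ax$ bounded below.

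The mean constraint $E[x]=b/a$ forces exactly this. The support has mass near the scale $1/a$ unless nearly all of the mass sits at exactly $0$. If the mass concentrates on $\{0\}\cup[\text{order }1]$, then the second moment gives mass of order $1/a$ on the large points, and the mean is then of order $1/a$ as well, which is consistent.

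\textbf{Step 3 (main obstacle): the endpoint-type case.} The remaining case is a support of the form $\{0\}\cup S$ with $S$ of order one, as in the endpoint branch. Here I would use deviation profitability rather than support interiority. Consider a deviation to a small $x=\eta/a$. The unrestricted payoff is flat, but clipping at $P(x,y)<0$ for $y\in S$ raises the deviator's payoff, since $\bar P=0>P$. Indeed $P(\eta/a,y)\approx \tfrac12-cy+by^2-\eta y^2$, which is negative for suitable $\eta$ once $y$ lies near $z$. This yields a profitable deviation through the truncation gain, exactly the tangency mechanism behind $\bar a_M$ in Proposition~\ref{prop:cinfo-mixed-branches}.

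To finish, I would make the bound uniform. The clipping gain is proportional to the mass on $S$ times the negative part of $P$. The loss from the deviation is zero in the unrestricted payoff. The gain therefore strictly exceeds zero, giving a contradiction for all $a\ge\bar a_F$, where $\bar a_F$ is chosen so that $\eta=a x$ can be taken with $x$ in the admissible range.
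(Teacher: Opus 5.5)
Your Step 3 is the right argument, and it is essentially the paper's. You deviate to $x=\eta/a$, note that against the required moments the unrestricted payoff is flat at $\tfrac12-\theta\kappa$, and collect a strict gain from lower clipping against the opponent's upper tail. As written, though, the clipping accounting is incomplete, and this is a genuine gap.

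\textbf{First gap: upper clipping.} You treat the gain where $P<0$ as the only way the truncated deviation payoff can differ from the flat unrestricted one. You never rule out opponent efforts $y$ with $P(\eta/a,y)>1$, where truncation \emph{lowers} the deviator's payoff and could offset the gain. This is not vacuous. If your ``suitable'' $\eta$ satisfies $\eta<b$, then $y\mapsto P(\eta/a,y)$ has $y^2$-coefficient $b-\eta>0$. It is then convex, eventually exceeds $1$, and nothing in your argument controls opponent mass out there.

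\textbf{Second gap: where $P$ is negative.} You only know $P(\eta/a,y)<0$ ``near $z$''. Step 1, however, gives you mass near the supremum $M\ge z$, and $M$ may lie well beyond $z$. There a convex $P(\eta/a,\cdot)$ can be positive again.

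\textbf{The fix.} The paper closes both gaps with one choice: take $\lambda>b$ with $\lambda z^2+\theta z>\tfrac12$. Then
$\phi_a(y)=P(\lambda/a,y)=(b-\lambda)y^2+(\lambda^2/a-c)y+\tfrac12+c\lambda/a-b\lambda^2/a^2$
is concave in $y$. Its slope at zero is $\phi_a'(0)=\lambda^2/a-c<0$ for large $a$, so $\phi_a$ is strictly decreasing on $[0,\infty)$. Two consequences follow.
\begin{itemize}
\item $\phi_a\le\phi_a(0)\to\tfrac12<1$ everywhere, so there is no upper clipping against \emph{any} opponent distribution.
\item $\phi_a(z)\to\tfrac12-\theta z-\lambda z^2<0$, so $\phi_a<0$ on all of $[z,\infty)$, not just near $z$.
\end{itemize}
The positive-mass set is then supplied exactly by the moments. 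Since $\mathrm{E}[Y^2]=z\,\mathrm{E}[Y]$ and $\mathrm{E}[Y]>0$, one gets $\Pr(Y\ge z)>0$: if $Y<z$ a.s., then $\mathrm{E}[Y^2]<z\,\mathrm{E}[Y]$. This gives a strict expected gain over $\tfrac12-\theta\kappa$. The threshold $\bar a_F$ is any $a$ beyond which $\zeta<0$, $\lambda^2/a<c$, $\phi_a(0)<1$ and $\phi_a(z)<0$.

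\textbf{Step 2 and the case split.} Done this way, the argument uses only the two moment identities, so it applies to every on-domain mixed equilibrium whatever the shape of its support. Your case split is therefore unnecessary. Step 2 should be dropped: as written it does not establish that any support profile leaves $\mathcal D$.
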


\begin{proof}{Proof of Proposition}{\ref{prop:cinfo-mixed-failure}}
The failure result shows that the sufficient restrictions cannot be dispensed with entirely. Under very strong suppression, construct a small deviation $x_a=\lambda/a$ that lies just above the mixed-equilibrium mean. Against moderate opponent efforts, clipping cannot lower this deviation's payoff; against sufficiently large opponent efforts, the raw probability becomes negative and clipping raises it to zero. The mixed moment conditions force every on-domain mixed equilibrium to put positive probability on that large-effort tail, so the clipping gain is strict on a positive-probability set. Hence the deviation is profitable once $a$ is large enough, and on-domain mixed equilibrium cannot survive arbitrarily strong suppression.
\end{proof}

\begin{proof}{Proof of Remark}{\ref{rem:cinfo-local-mixed}}
For local Nash equilibrium, one does not need global deviation checks. If the finite support of the canonical mixed strategy lies strictly inside $\mathcal D$, then sufficiently small deviations never activate truncation, and local best-response conditions coincide with the unrestricted ones. On the symmetric branch, the diagonal support profiles have probability $1/2$, and the only nontrivial check is
\[
P(\kappa-s,\kappa+s)=\frac12-2\theta s>0,
\]
which is equivalent to $s<1/(4\theta)$. On the endpoint branch, the diagonal profiles again have probability $1/2$, while
\[
P(0,z)=\frac12-\frac{\theta(c-\theta)}{b}>0,
\]
which is equivalent to $b>2\theta(c-\theta)$. The opposite off-diagonal profiles then lie in $\mathcal D$ by $P(x,y)=1-P(y,x)$.
\end{proof}

\subsubsection*{Incomplete-information results}

\begin{proof}{Proof of Theorem}{\ref{thm:bn-unrestricted}}
Let the opponent's effort be the random variable $Y$, with moments $E_1$ and $E_2$. In any symmetric equilibrium, a type $\theta$ who deviates to $\tilde x$ obtains interim utility of the form
\[
K(\theta)+(aE_1-b)\tilde x^2+(c-\theta-aE_2)\tilde x.
\]
Finite optimality for all types requires $aE_1<b$, so the best response is unique and affine in type:
\[
x(\theta)=\frac{aE_2+\theta-c}{2(aE_1-b)}=k\theta+d.
\]
Taking moments of this affine rule gives two equations in $E_1$ and $\operatorname{Var}(x)$. With $z:=E_1-\kappa$, they collapse to
\[
z^4-\zeta z^2-\frac{\omega}{4}=0.
\]
The negative quadratic branch is impossible for $z^2$, and the second-order condition selects the sign of $z$ opposite to the sign of $a$. This yields
\[
E_1=\kappa-\frac{\operatorname{sgn}(a)}{\sqrt2}\sqrt{\zeta+\sqrt{\zeta^2+\omega}},
\]
and then $k$, $d$, and $\operatorname{Var}(x)$ follow from the moment identities. Since any symmetric equilibrium must pass through this system, the equilibrium is unique.
\end{proof}

\begin{proof}{Proof of Remark}{\ref{rem:bn-zero}}
At $a=0$, the raw interim payoff is strictly concave with first-order condition $c-\theta-2bx=0$. Hence every type has the dominant strategy $x(\theta)=(c-\theta)/(2b)>0$, and expected effort is $(c-M_1)/(2b)$. For two equilibrium types $\theta$ and $\theta'$, the raw probability is $P(x(\theta),x(\theta'))=1/2+((\theta')^2-\theta^2)/(4b)$, so the affine support is strictly on-domain exactly when $\beta^2-\alpha^2<2b$. The formula from Theorem \ref{thm:bn-unrestricted} has the same limit as $a\to0$, so the neutral case is the continuous boundary of the affine benchmark.
\end{proof}

\begin{proof}{Proof of Remark}{\ref{rem:bn-vanishing}}
Letting $\sigma_\theta^2\to0$ sends $\omega$ to zero while $\kappa$ and $\zeta$ remain fixed. The affine formulas from Theorem \ref{thm:bn-unrestricted} and the display following it then give the stated limits: if $\zeta\geq 0$, effort converges to the complete-information pure action with zero variance; if $\zeta<0$, effort converges in moments to mean $\kappa$ and variance $-\zeta$. Thus vanishing uncertainty selects the complete-information moment profile generated by the affine Bayesian limit, and any shrinking prior family with convergent standardized types selects the corresponding affine push-forward distribution.
\end{proof}

\begin{proof}{Proof of Theorem}{\ref{thm:uncertainty}}
Holding $a$, $b$, $c$, and $M_1$ fixed, the expected-effort formula depends on type dispersion only through
\[
\omega=\frac{\sigma_\theta^2}{a^2}.
\]
Differentiating the formula in Theorem \ref{thm:bn-unrestricted} gives
\[
\frac{\partial E_1}{\partial\omega}
=-\frac{\operatorname{sgn}(a)}{4\sqrt2\sqrt{\zeta^2+\omega}\sqrt{\zeta+\sqrt{\zeta^2+\omega}}}.
\]
The denominator is positive, so the sign is $-\operatorname{sgn}(a)$. Since $\omega$ is increasing in $\sigma_\theta^2$, uncertainty lowers expected effort under suppression and raises it under empowerment; at $a=0$ expected effort depends only on the mean type.
\end{proof}

\begin{proof}{Proof of Theorem}{\ref{thm:bn-disclosure}}
A signal policy replaces the true type by the posterior mean $\hat\theta_\pi=\mathrm{E}[\theta\mid s_\pi]$ in interim expected payoffs. Symmetry and Bayes plausibility keep the mean fixed, and the affine formula from Theorem \ref{thm:bn-unrestricted} makes expected effort depend on the signal only through the variance of this posterior-mean distribution:
\[
0\leq \operatorname{Var}(\hat\theta_\pi)\leq\operatorname{Var}(\theta).
\]
No disclosure attains the lower bound and full disclosure attains the upper bound; the lower-bound case is understood as the continuous limit of the affine formula. The designer therefore minimizes posterior-mean variance under suppression, maximizes it under empowerment, and is indifferent at neutrality. A Blackwell refinement raises the variance of posterior means by the law of total variance, giving the final monotonicity claim.
\end{proof}

\begin{proof}{Proof of Theorem}{\ref{thm:bn-a-peak}}
Normalize expected effort by
\[
\rho=\frac{\sigma_\theta^2}{\Delta^2},
\qquad
\xi=\frac{\Delta a}{b^2},
\qquad
E_1(a)=\frac{\Delta}{b}e(\xi).
\]
Because $\Delta/b>0$ and $b^2/\Delta>0$, maximizing $E_1$ over $a$ is equivalent to maximizing $e$ over $\xi$. The positive-average region $a<\bar a$ is exactly $\xi<4/\rho$.

Write the radical term in \eqref{eq:bn-e1-a} as $y(\xi)$, so $e(\xi)=(1-y(\xi))/\xi$. Inverting $\xi\mapsto y$ gives two branches, $\xi_-(y)$ and $\xi_+(y)$, meeting at $\xi=2/(1+\rho)$. The minus branch is decreasing in $y$, while the plus branch is increasing and already has $e'(\xi)<0$. Thus any interior maximum must come from the minus branch. On that branch the derivative calculation, with the reversed orientation of $\xi_-$ taken into account, gives
\[
\operatorname{sgn} e'(\xi)=\operatorname{sgn}\{g(y)-\rho\},
\qquad
g(y)=\frac{y^3(2-y)}{(y^2-y-1)^2}.
\]
The function $g$ is strictly increasing from $0$ to $\infty$ on $(0,\varphi)$ and satisfies $g(1)=1$. Hence $g(y)=\rho$ has a unique solution $y^\dagger$. A direct endpoint check gives $g(y_0)<\rho$ at the branch point $y_0=\sqrt{\rho/(1+\rho)}$, so $y^\dagger$ is admissible on the minus branch. Since $\xi_-$ is decreasing, $e$ rises before $\xi^\dagger=\xi_-(y^\dagger)$ and falls after it; the plus branch continues falling, so this point is the global maximizer.

Finally, $\xi^\dagger<\xi_-(y_0)=2/(1+\rho)$, which gives $a^\dagger<2b^2\Delta/(\Delta^2+\sigma_\theta^2)$. Also, the sign of $\xi_-(y)$ is the sign of $1-y$, so $a^\dagger$ is positive, zero, or negative according as $y^\dagger$ is below, equal to, or above $1$. Since $g(1)=1$, this is equivalent to $\rho<1$, $\rho=1$, or $\rho>1$.
\end{proof}

\begin{proof}{Proof of Remark}{\ref{rem:bn-positive}}
The claim about $\bar a$ is a sign calculation from \eqref{eq:bn-e1-a}. If $a<0$, the radical term is larger than $b$, so numerator and denominator are both negative and $E_1(a)>0$. If $a=0$, $E_1(0)=\Delta/(2b)>0$. If $a>0$, positivity is equivalent to the radical term being smaller than $b$, which reduces to $\sigma_\theta^2 a<4b^2\Delta$, or $a<\bar a$. Equality gives $E_1=0$. The remaining statements distinguish this average-effort positivity condition from full activity and from literal truncation.
\end{proof}

\begin{proof}{Proof of Theorem}{\ref{thm:bn-peak-variance}}
Use the critical-point representation from Theorem \ref{thm:bn-a-peak}. Set
\[
\rho=\frac{\sigma_\theta^2}{\Delta^2},
\qquad
A(\rho)=\frac{\Delta a^\dagger}{b^2}.
\]
If $y^\dagger$ solves $g(y^\dagger)=\rho$, substitution into the branch formula for $\xi^\dagger$ gives
\[
A(\rho)=\frac{2(1-y^\dagger)(1+y^\dagger-(y^\dagger)^2)}{2-y^\dagger}.
\]
Because $g$ is strictly increasing, $y^\dagger$ is increasing in $\rho$; hence the sign of $A'(\rho)$ is the sign of $dA/dy$. Differentiating the displayed expression gives
\[
\frac{dA}{dy}=-\frac{2q(y)}{(2-y)^2},
\qquad
q(y)=2y^3-8y^2+8y-1.
\]
The denominator is positive. The cubic $q$ has exactly two roots in $(0,\varphi)$: one in $(0,2/3)$ and one in $(1,\varphi)$, since $q'(y)=2(y-2)(3y-2)$, $q(0)<0$, $q(2/3)>0$, $q(1)>0$, and $q(\varphi)<0$. Call them $y_1<y_2$, set $\rho_i=g(y_i)$, and define $\sigma_i^2=\rho_i\Delta^2$. Since $g$ is increasing and $g(1)=1$, we have $0<\sigma_1^2<\Delta^2<\sigma_2^2$.

The sign pattern of $q$ implies that $A$, and therefore $a^\dagger=(b^2/\Delta)A$, is increasing, then decreasing, then increasing as $\rho$ (equivalently $\sigma_\theta^2$) rises. Finally, $y^\dagger\downarrow0$ as $\rho\downarrow0$ and $y^\dagger\uparrow\varphi$ as $\rho\to\infty$; the displayed formula for $A$ then gives $A\to1$ and $A\to0^-$, which yields the two endpoint limits for $a^\dagger$.
\end{proof}

\begin{proof}{Proof of Proposition}{\ref{prop:bn-nonnegative}}
Let $E_1$ and $E_2$ be the opponent's first two effort moments. A type-$\theta$ deviation has interim payoff
\[
U(\theta,\tilde x)=K(\theta)+(aE_1-b)\tilde x^2+(c-\theta-aE_2)\tilde x.
\]
The all-zero profile cannot be an equilibrium because $c-\theta>0$, so $E_1>0$. Equilibrium also requires $aE_1\leq b$; otherwise utility is unbounded above in own effort. If $aE_1<b$, the payoff is strictly concave and the unique best response is cutoff-affine:
\[
x(\theta)=\lambda(t-\theta)_+,
\qquad
\lambda=\frac{1}{2(b-aE_1)},
\qquad
t=c-aE_2.
\]
Writing
\[
A(t)=\mathrm{E}[(t-\theta)_+],
\qquad
B(t)=\mathrm{E}[(t-\theta)_+^2],
\]
and substituting $E_1=\lambda A(t)$ and $E_2=\lambda^2B(t)$ gives the fixed-point system
\[
2aA(t)\lambda^2-2b\lambda+1=0,
\qquad
t=c-aB(t)\lambda^2.
\]
Conversely, any solution with $a\lambda A(t)<b$ makes active types satisfy the first-order condition and inactive types have nonpositive derivative at zero, so it is an equilibrium.

If $aE_1=b$, the payoff is linear. Its slope must be nonpositive for every type and zero for any type who exerts positive effort. Since the slope is strictly decreasing in $\theta$, only the lowest type can be active; this requires an atom $m=F(\{\alpha\})>0$. The moment conditions for such an equilibrium are feasible exactly when $am(c-\alpha)\geq b^2$, and the construction in the statement---type $\alpha$ mixing on $\{0,(c-\alpha)/b\}$, with all higher types choosing zero---delivers it.

When this boundary case is unavailable, existence in the cutoff-affine class follows from the fixed-point system. For $a=0$ the dominant-strategy equilibrium from Remark \ref{rem:bn-zero} applies; for $a<0$ the affine equilibrium is strictly positive because $c-\theta-aE_2>0$ and $b-aE_1>0$. For $a>0$, eliminating $\lambda$ gives a scalar cutoff equation
\[
H(t):=2(c-t)A(t)+B(t)-2b\sqrt{\frac{(c-t)B(t)}{a}}=0.
\]
If the boundary condition fails, then $H(t)<0$ near $\alpha$, while $H(c)=B(c)>0$, so a root exists by continuity. The resulting cutoff-affine equilibrium is fully active when $t\geq\beta$ and has dropout when $t<\beta$.
\end{proof}

\begin{proof}{Proof of Proposition}{\ref{prop:bn-dropout}}
Because the prior is atomless, the exceptional boundary case in Proposition \ref{prop:bn-nonnegative} is unavailable. On a partially active branch with cutoff $t<\beta$, write
\[
A_F(t)=\mathrm{E}_F[(t-\theta)_+],
\qquad
B_F(t)=\mathrm{E}_F[(t-\theta)_+^2],
\qquad
D_F(t)=B_F(t)+2(c-t)A_F(t).
\]
Solving the cutoff-affine fixed-point equations for $\lambda$ and $a$ gives
\[
\lambda_F(t)=\frac{D_F(t)}{2bB_F(t)},
\qquad
a_F(t)=\frac{4b^2(c-t)B_F(t)}{D_F(t)^2}.
\]
Differentiation, together with Cauchy--Schwarz, gives $a_F'(t)<0$, so stronger suppression corresponds to a lower cutoff. Moreover $a_F(t)\to\infty$ as $t\downarrow\alpha$. Therefore $a_D(F):=a_F(\beta)$ is the full-activity threshold, and for $a>a_D(F)$ there is a unique partially active cutoff $t(a)\in(\alpha,\beta)$ with $t'(a)<0$. The dropout rate $\delta(a)=1-F(t(a))$ is then strictly increasing, continuously differentiable, and tends to one as $t(a)\downarrow\alpha$.
\end{proof}

\begin{remark}[Dropout threshold calibration]\label{rem:dropout-calibration}
The dropout threshold $a_D(F)$ is governed by lower partial moments of the prior. For a prior $F$, define
\[
A_F(t):=\mathrm{E}_F[(t-\theta)_+],
\qquad
B_F(t):=\mathrm{E}_F[(t-\theta)_+^2],
\qquad
D_F(t):=B_F(t)+2(c-t)A_F(t).
\]
Then the partially active branch is described by
\[
a_F(t):=\frac{4b^2(c-t)B_F(t)}{D_F(t)^2},
\qquad
\lambda_F(t):=\frac{D_F(t)}{2bB_F(t)},
\]
and
\[
a_D(F)=a_F(\beta).
\]
Thus $a_D(F)$ has no distribution-free closed form; it is a functional of the lower partial moments of $F$, and the cutoff is determined implicitly by $a=a_F(t)$.

For a simple calibration, let $F$ be uniform on $[\alpha,\beta]$. Write $h:=\beta-\alpha$ and $q:=c-\beta>0$, so $\Delta=q+h/2$. Since $A_F(\beta)=h/2$, $B_F(\beta)=h^2/3$, and $D_F(\beta)=h(q+h/3)$,
\[
a_D(F)=\frac{4b^2q}{3(q+h/3)^2},
\qquad
\frac{a_D(F)}{b^2/\Delta}
=\frac{4q(q+h/2)}{3(q+h/3)^2}.
\]
Therefore $a_D(F)\leq b^2/\Delta$ if and only if $q\leq h/\sqrt{3}$. In the mean-preserving uniform family $[M_1-h/2,M_1+h/2]$, this is equivalent to $h\geq \Delta/(1/2+1/\sqrt{3})\approx0.928\Delta$, or $\sigma_\theta^2=h^2/12\gtrsim0.0718\Delta^2$. Thus, for this family, dropout does not bind on the initial increasing region $\sigma_\theta^2<\sigma_1^2$, where $\sigma_1^2\approx0.0045\Delta^2$; only once $\sigma_\theta^2\gtrsim0.072\Delta^2$, well past $\sigma_1^2$, does dropout begin to interact with the peak.
\end{remark}

\begin{proof}{Proof of Remark}{\ref{rem:bn-dropout}}
By Proposition~\ref{prop:bn-nonnegative} the equilibrium is cutoff-affine outside the boundary case. Under empowerment and neutrality the cutoff lies beyond the support, and by Proposition~\ref{prop:bn-dropout} this persists under suppression up to $a_D(F)$; there the affine comparative statics apply directly, subject only to the literal-contest support checks. For $a>a_D(F)$ the cutoff drops below $\beta$, so the equilibrium moments become lower partial moments over active types. The explicit formulas follow by substituting these into the cutoff-affine fixed-point system; Appendix Remark~\ref{rem:dropout-calibration} records them.
\end{proof}

\begin{proof}{Proof of Remark}{\ref{rem:bn-dropout-peak}}
On the partially active branch, write $s=t-1$ and let $A(s)=\mathrm{E}[(s-Z)_+]$, $B(s)=\mathrm{E}[(s-Z)_+^2]$, $C(s)=c-1-s$, $D(s)=B(s)+2C(s)A(s)$. Proposition~\ref{prop:bn-dropout} gives $E_1(s)=A(s)D(s)/(2bB(s))$ with $a(s)=4b^2C(s)B(s)/D(s)^2$ strictly decreasing in $s$. Differentiating, $E_1'(s)/E_1(s)=F(s)/A(s)+2C(s)F(s)/D(s)-2A(s)/B(s)$, where $F(s)=\Pr(Z\le s)$. For the stated mixture this expression changes sign three times on $(0,1)$, giving---since $a'(s)<0$---two local maxima of $E_1$ in $a$ separated by a local minimum. The detailed proof records the incomplete-beta evaluation and the numerical values.
\end{proof}

\begin{proposition}[Support-based truncation criterion]\label{prop:bn-truncation-detailed}
Assume Proposition \ref{prop:bn-nonnegative} is in its fully active region. Then equilibrium follows the affine benchmark rule
\[
x^{u}(\theta)=k\theta+d.
\]
Write
\[
\underline x:=x^{u}(\beta),
\qquad
\overline x:=x^{u}(\alpha).
\]
If
\[
\overline x\leq \frac{1}{\alpha},
\qquad
a\leq b\alpha,
\qquad
c\alpha^2-2b\alpha+a\geq 0,
\]
and
\[
\min_{y\in[\underline x,\overline x]}\left(\frac{1}{2}-cy+by^2\right)\geq 0,
\]
then the same affine rule $x^{u}(\theta)$ is a symmetric Bayesian Nash equilibrium of the truncated contest based on $\bar P$.
\end{proposition}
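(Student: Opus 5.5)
The plan is to compare the truncated interim payoff $\bar U(\theta,\tilde x)=\mathrm{E}[\bar P(\tilde x,Y)]-\theta\tilde x$, where $Y=x^u(\theta_y)$ is the opponent's random effort on $[\underline x,\overline x]$, with the polynomial payoff $U(\theta,\tilde x)$ from Theorem~\ref{thm:bn-unrestricted}. In the fully active region, $x^u(\theta)$ is the unique maximizer of the strictly concave quadratic $U(\theta,\cdot)$ over $\tilde x\ge 0$ (Proposition~\ref{prop:bn-nonnegative}). The elementary fact I will use is that clipping at $1$ can only lower a deviator's payoff, and clipping at $0$ can only raise it. Hence $\bar U(\theta,\tilde x)\le U(\theta,\tilde x)$ whenever $P(\tilde x,y)\ge 0$ for all $y$ in the support.

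The proof then reduces to showing $P(\tilde x,y)\ge 0$ on a box $[0,1/\alpha]\times[\underline x,\overline x]$, and to disposing of deviations outside it.

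\emph{Distant deviations.} For $\tilde x>1/\alpha$ we have $\bar U(\theta,\tilde x)\le 1-\theta\tilde x\le 1-\alpha\tilde x<0$. On the other hand, the equilibrium payoff is at least the payoff from $\tilde x=0$. The min condition gives $P(0,y)=\tfrac12-cy+by^2\ge 0$ on the support, and $P(0,y)=1-P(y,0)\le 1$ because $P(y,0)=P(0,y)$ by the same formula. So $U(\theta,0)=\mathrm{E}[P(0,Y)]\ge 0$, and $U(\theta,x^u(\theta))\ge 0$. Thus distant deviations are unprofitable. (The claim that the truncated equilibrium payoff equals the polynomial one uses the on-path step below.)

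\emph{The box.} Fix $y\in[\underline x,\overline x]$. We have $P_{xx}=-2b+2ay$. If $a\le 0$, this is negative. If $a>0$, then $ay\le a\overline x\le a/\alpha\le b$ by $\overline x\le 1/\alpha$ and $a\le b\alpha$. So $P(\cdot,y)$ is concave on $[0,1/\alpha]$, and its minimum is attained at an endpoint.
\begin{itemize}
\item At $\tilde x=0$, nonnegativity is the min condition.
\item At $\tilde x=1/\alpha$, write $P(1/\alpha,y)=\tfrac12+(1/\alpha-y)\,\bigl(c-b/\alpha+y(a/\alpha-b)\bigr)$. The first factor is nonnegative since $y\le\overline x\le 1/\alpha$. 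The bracket is affine in $y$ with nonpositive slope $a/\alpha-b\le 0$, so it is smallest at $y=1/\alpha$, where it equals $(c\alpha^2-2b\alpha+a)/\alpha^2\ge 0$. Hence $P(1/\alpha,y)\ge\tfrac12$.
\end{itemize}
So $P\ge 0$ on the whole box.

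\emph{On-path check.} Apply the box bound to pairs $(x,y)$ with both coordinates in the support. Then symmetry $P(x,y)=1-P(y,x)$ gives $0\le P\le 1$, so $\bar P=P$ on path and $\bar U(\theta,x^u(\theta))=U(\theta,x^u(\theta))$.

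\emph{Conclusion.} For every deviation $\tilde x\in[0,1/\alpha]$ we get $\bar U(\theta,\tilde x)\le U(\theta,\tilde x)\le U(\theta,x^u(\theta))=\bar U(\theta,x^u(\theta))$. Together with the distant-deviation step, $x^u$ is a symmetric Bayesian Nash equilibrium of the truncated contest.

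The main obstacle is the box lemma: establishing concavity in own effort and controlling the far endpoint. This is exactly where the three inequalities $\overline x\le 1/\alpha$, $a\le b\alpha$ and $c\alpha^2-2b\alpha+a\ge 0$ enter. The remaining steps are bookkeeping about the sign of clipping.
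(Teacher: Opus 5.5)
Your proof is correct and follows the paper's argument essentially step for step: concavity of $P(\cdot,y)$ on $[0,1/\alpha]$, the two endpoint checks (your affine-in-$y$ bracket at $\tilde x=1/\alpha$ is the paper's $z=L-y$ expansion), symmetry for on-path interiority, ruling out deviations above $1/\alpha$ by comparison with a nonnegative payoff, and the chain $\bar U\le U\le U(\theta,x^u(\theta))=\bar U(\theta,x^u(\theta))$ on the box. There is one harmless slip. Since $P(y,0)=\tfrac12+cy-by^2=1-P(0,y)$ rather than $P(0,y)$, your justification of $P(0,y)\le 1$ is wrong. That bound is never used, though, because only $P(0,y)\ge 0$ is needed for $U(\theta,0)\ge 0$.
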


\begin{proof}{Proof of Proposition}{\ref{prop:bn-truncation-detailed}}
Let $L:=1/\alpha$ and let $Y=x^u(\theta')$ denote the opponent's effort. The candidate support is $[\underline x,\overline x]\subset[0,L]$. For each on-support $y$, the raw probability $P(\tilde x,y)$ is concave in the deviating effort on $[0,L]$, because $ay-b\leq aL-b\leq0$. Hence it is enough to check the two endpoints. The left endpoint is exactly the left-edge criterion in the statement. At the right endpoint, writing $z=L-y\geq0$ gives
\[
P(L,y)=\frac12+z(c-2bL+aL^2)+z^2(b-aL),
\]
which is nonnegative by $a\leq b\alpha$ and $c\alpha^2-2b\alpha+a\geq0$. Thus $P(\tilde x,y)\geq0$ for every $\tilde x\in[0,L]$ and every on-support $y$; by symmetry, the affine candidate itself also has raw probabilities at most one.

Truncation is therefore inactive at the candidate. Deviations above $L$ are dominated by zero because their cost exceeds the prize for every type. For deviations in $[0,L]$, clipping can only weakly lower payoff relative to the untruncated polynomial game, and the affine rule is already the untruncated best response. Hence no deviation is profitable in the truncated contest.
\end{proof}

\begin{proof}{Proof of Proposition}{\ref{prop:bn-truncation}}
This is the preceding support-based criterion with a stronger left-edge condition. The assumption $c^2\leq2b$ makes $P(0,y)=1/2-cy+by^2$ nonnegative for every real $y$, so the left-edge criterion in Proposition \ref{prop:bn-truncation-detailed} is automatic.
\end{proof}

\begin{proof}{Proof of Proposition}{\ref{prop:bn-disclosure-sufficient}}
For each signal rule, if the induced posterior-mean contest is fully active and truncation is inactive on its equilibrium support, then the literal truncated contest and the affine benchmark generate the same equilibrium effort under that signal. Therefore the signal ranking from Theorem \ref{thm:bn-disclosure} carries over signal by signal. Under empowerment full activity is automatic; at $a=0$, full activity is automatic but truncation still requires a support check; under suppression both conditions must be verified signal by signal.
\end{proof}

\pagebreak 

\section*{Online Appendix: Detailed Proofs}

\begin{proof}{Detailed proof of Observation}{\ref{obs:quadratic}}
Let $R(x,y):=P(x,y)-1/2$. The symmetry condition $P(x,y)=1-P(y,x)$ is equivalent to
\[
R(x,y)=-R(y,x),
\]
so $R$ is antisymmetric and therefore vanishes on the diagonal $x=y$. Hence $R$ is divisible by $x-y$, and we can write
\[
R(x,y)=(x-y)Q(x,y)
\]
for some polynomial $Q$. If $P$ has degree at most two, then $Q$ has degree at most one. Antisymmetry of $R$ also implies $Q(x,y)=Q(y,x)$, so a linear $Q$ must have the form
\[
Q(x,y)=q_0+q_1(x+y).
\]
Thus
\[
P(x,y)=\frac12+q_0(x-y)+q_1(x^2-y^2),
\]
and consequently
\[
\frac{\partial^2 P}{\partial x\,\partial y}\equiv 0.
\]
Therefore no polynomial CSF of degree at most two can combine symmetry with a non-trivial cross-effect; cubic is the minimal possible degree.
\end{proof}

\begin{proof}{Detailed proof of Observation}{\ref{obs:tullock-suppressive}}
Let
\[
P(x,y)=\frac{x^r}{x^r+y^r},
\qquad r>0,
\]
on the positive orthant. Differentiating first with respect to $x$ gives
\[
P_x(x,y)=\frac{r x^{r-1}y^r}{(x^r+y^r)^2}.
\]
Differentiating this expression with respect to $y$ yields
\[
P_{xy}(x,y)
=r x^{r-1}\left[
\frac{r y^{r-1}}{(x^r+y^r)^2}
-\frac{2r y^{2r-1}}{(x^r+y^r)^3}
\right].
\]
Putting the terms over a common denominator gives
\[
P_{xy}(x,y)
=\frac{r^2x^{r-1}y^{r-1}\bigl(x^r+y^r-2y^r\bigr)}{(x^r+y^r)^3}
=\frac{r^2x^{r-1}y^{r-1}(x^r-y^r)}{(x^r+y^r)^3}.
\]
For $x,y>0$, the factor
\[
\frac{r^2x^{r-1}y^{r-1}}{(x^r+y^r)^3}
\]
is strictly positive. Hence the sign of $P_{xy}$ is the sign of $x^r-y^r$, which is the sign of $x-y$. Therefore $P_{xy}>0$ exactly when $x>y$ and $P_{xy}<0$ exactly when $x<y$. In the terminology of the paper, the generalized Tullock contest is suppressive throughout its domain.
\end{proof}

\begin{proof}{Detailed proof of Observation}{\ref{obs:domain-geometry}}
Write $z:=(x+y)/2$ and $d:=x-y$, so that $x=z+d/2$ and $y=z-d/2$. The variable $z$ measures movement along the diagonal $x=y$ --- the overall effort level --- whereas $d$ measures the signed gap between the two efforts. The constraint $z\geq |d|/2$ encodes nonnegative efforts. In these coordinates,
\[
P(z,d)=\frac{1}{2}+d\left(c-2bz+az^{2}-\frac{a}{4}d^{2}\right),
\]
so the diagonal becomes the center line $d=0$, on which $P\equiv 1/2$; in particular every symmetric profile lies in $\mathcal D$. Since $P(z,d) = 1 - P(z,-d)$, the boundary curves $P=0$ and $P=1$ are reflections of each other across the diagonal, and $\mathcal D$ is symmetric across it.

Now fix a non-diagonal ray $(x,y)=r(u,v)$ with $u,v\geq 0$, $u\neq v$, and $r\geq 0$. Along this ray,
\[
P(ru,rv)=\frac{1}{2}+r(u-v)\left(c-br(u+v)+ar^2uv\right).
\]
If $a uv\neq 0$, the cubic term $a uv(u-v)r^3$ is nonzero and dominates. If $a uv=0$, the quadratic term $-b(u-v)(u+v)r^2$ is nonzero and dominates. Hence $|P(ru,rv)|\to\infty$ as $r\to\infty$ on every non-diagonal ray, so $\mathcal D$ is bounded along any such direction. Along the diagonal itself, $P\equiv 1/2$, so $\mathcal D$ is unbounded along it.

It remains to show non-convexity. Fix any signed effort gap $d\neq0$ and let $z\to\infty$. From the expression above,
\[
P(z,d)=\frac12+d\left(c-2bz+az^2-\frac{a}{4}d^2\right).
\]
If $a\neq0$, the term $adz^2$ dominates; if $a=0$, the term $-2bdz$ dominates. Hence for every fixed $d\neq0$, $P(z,d)$ eventually leaves $[0,1]$ as $z\to\infty$.

Choose any point $(z_0,d_0)$ with $d_0\neq0$ that lies in $\mathcal D$; such points exist because $P(z_0,d)\to1/2$ as $d\to0$. For any sufficiently large $Z$, the diagonal point $(Z,0)$ also lies in $\mathcal D$. The midpoint in $(z,d)$ coordinates is
\[
\left(\frac{z_0+Z}{2},\frac{d_0}{2}\right).
\]
Its gap $d_0/2$ is fixed and nonzero, while its average effort tends to infinity as $Z\to\infty$. By the preceding paragraph, this midpoint eventually lies outside $\mathcal D$. Thus $\mathcal D$ contains two points whose midpoint is outside $\mathcal D$, so $\mathcal D$ is non-convex.
\end{proof}

\begin{proof}{Detailed proof of Theorem}{\ref{thm:cinfo-unrestricted}}

We fix the opponent's effort at $y$. Player $X$'s expected utility in the unconstrained contest is
\[
U(x,y)=\frac{1}{2}+\left(x-y\right)\left(c-b(x+y)+axy\right)-\theta x,
\]
so
\[
U_x(x,y)=2axy-a y^2-2bx+c-\theta,
\qquad
U_{xx}(x,y)=2ay-2b.
\]
Hence the interior best response is
\[
x^{BR}(y)=\frac{a y^{2}-(c-\theta)}{2(a y-b)}.
\]
For an optimal interior reply, the second-order condition requires
\[
2ay-2b<0.
\]

Now consider any pure-strategy equilibrium $(x,y)$ of the unconstrained contest. The first-order conditions for the two players are
\[
2axy-a y^2-2bx+c-\theta=0,
\qquad
2axy-a x^2-2by+c-\theta=0.
\]
Subtracting gives
\[
(x-y)\bigl(a(x+y)-2b\bigr)=0.
\]
Thus either $x=y$ or $x+y=2\kappa$.

If $x=y$, then the first-order condition reduces to
\[
ax^2-2bx+c-\theta=0.
\]
Real roots exist exactly when $\zeta\geq 0$. The two roots are
\[
x=\frac{b\pm\sqrt{b^2-a(c-\theta)}}{a}.
\]
If $a>0$, the symmetric second-order condition is $ax-b<0$, i.e. $x<\kappa=b/a$, so only the smaller root is admissible. If $a<0$, then $\zeta=\kappa^2+(c-\theta)/|a|>\kappa^2$, so $\kappa-\sqrt{\zeta}<0$ while $\kappa+\sqrt{\zeta}>0$; again only one root is admissible. Therefore
\[
x^*=y^*=\frac{b-\sqrt{b^2-a(c-\theta)}}{a}=\kappa-\mathrm{sgn}(a)\sqrt{\zeta}.
\]
The alternative root either violates the second-order condition or is negative.

If $x\neq y$, then $x+y=2\kappa$. Writing $x=\kappa+d/2$ and $y=\kappa-d/2$ and substituting back into the first-order conditions yields
\[
3a^{2}d^{2}=4\bigl(a(c-\theta)-b^{2}\bigr),
\]
that is,
\[
d^2=-\frac{4\zeta}{3}.
\]
Hence asymmetric pure candidates can arise only if $\zeta<0$, which in turn requires $a>0$. But under suppression the second-order conditions are $y<\kappa$ for player $X$ and $x<\kappa$ for player $Y$. These inequalities cannot both hold when $x+y=2\kappa$. Therefore at least one second-order condition is violated, so no asymmetric pure equilibrium exists. This proves uniqueness of the pure equilibrium whenever $\zeta\geq 0$, and shows that no pure equilibrium exists when $\zeta<0$.

Now consider any mixed-strategy equilibrium of the untruncated contest, and let $\mu_X$ and $\mu_Y$ denote the two players' distributions.
We first rule out the possibility that one player mixes against a degenerate opponent. Suppose $\mu_Y$ is degenerate at a pure action $y$ while $\mu_X$ is nondegenerate. Then player $X$'s payoff from choosing $x$ is
\[
U_X(x;y)=(ay-b)x^2+(c-\theta-ay^2)x+\frac{1}{2}-cy+by^2.
\]
If $\mu_X$ is nondegenerate, every point in its support must be a best response, so this quadratic must be constant. Hence
\[
ay-b=0,
\qquad
c-\theta-ay^2=0.
\]
Thus $y=\kappa$ and $y^2=(c-\theta)/a$, which implies $\zeta=0$. But then any distribution with mean $\kappa$ and second moment $(c-\theta)/a=\kappa^2$ has zero variance, contradicting nondegeneracy of $\mu_X$. Therefore any genuine mixed equilibrium must have both players nondegenerate.

Given nondegeneracy, player $X$'s expected payoff from choosing $x$ against $\mu_Y$ depends only on player $Y$'s first two moments,
\[
E_1^Y=\int y\,d\mu_Y(y),
\qquad
E_2^Y=\int y^2\,d\mu_Y(y),
\]
and is given by the quadratic
\[
U_X(x;E_1^Y,E_2^Y)=\left(-a x +b \right) E_2^Y +\left(a x^{2}-c\right) E_1^Y +\frac{1}{2}-b x^{2}+(c-\theta)x.
\]
If $\mu_X$ is nondegenerate, every point in its support must be a best response to $\mu_Y$. A quadratic on $\mathbb{R}$ has at most one maximizer unless it is constant. Therefore $U_X(\cdot;E_1^Y,E_2^Y)$ must be constant, which implies
\[
aE_1^Y-b=0,
\qquad
-aE_2^Y+c-\theta=0.
\]
Hence
\[
E_1^Y=\frac{b}{a}=\kappa,
\qquad
E_2^Y=\frac{c-\theta}{a}.
\]
The same argument applied to player $Y$ shows that if $\mu_Y$ is nondegenerate, then player $X$'s moments satisfy
\[
E_1^X=\frac{b}{a}=\kappa,
\qquad
E_2^X=\frac{c-\theta}{a}.
\]
Hence every genuine mixed equilibrium must satisfy
\[
\mathrm{E}[x]=\mathrm{E}[y]=\kappa,
\qquad
\mathrm{Var}(x)=\mathrm{Var}(y)=\frac{c-\theta}{a}-\frac{b^2}{a^2}=-\zeta.
\]
If $\zeta>0$, this is impossible for a nondegenerate distribution. If $\zeta=0$, it forces both distributions to be degenerate at $\kappa=x^*$. Thus when $\zeta\geq 0$ the equilibrium found above is the unique equilibrium of the unrestricted contest.

Now suppose $\zeta<0$. If player $Y$'s first two moments satisfy
\[
E_1^Y=\kappa,
\qquad
E_2^Y=\frac{c-\theta}{a},
\]
then
\[
U_X(x;E_1^Y,E_2^Y)=\frac{1}{2}-\theta\kappa
\qquad\text{for all }x\ge 0.
\]
So every action is a best response to $\mu_Y$. The same argument applies symmetrically to player $Y$, so any pair of distributions with the stated moments is an unrestricted mixed equilibrium.
It remains only to show that such distributions exist when $\zeta<0$. Write
\[
s:=\sqrt{-\zeta}>0.
\]
If $s\leq \kappa$, equivalently $-\zeta\leq \kappa^2$, then the two-point distribution on $\{\kappa-s,\kappa+s\}$ with equal mass has mean $\kappa$ and variance $s^2=-\zeta$, and its support is contained in $[0,\infty)$. If instead $s>\kappa$, equivalently $-\zeta>\kappa^2$, then necessarily
\[
a>\frac{2b^2}{c-\theta}.
\]
Define
\[
z:=\frac{c-\theta}{b},
\qquad
p:=\frac{b^2}{a(c-\theta)}.
\]
Since $\zeta<0$ implies $a>b^2/(c-\theta)$, one has $p\in(0,1)$. The two-point distribution that assigns probability $1-p$ to $0$ and probability $p$ to $z$ then satisfies
\[
\mathrm{E}[X]=pz=\frac{b}{a}=\kappa,
\]
and
\[
\mathrm{Var}(X)=pz^2-\kappa^2=
\frac{c-\theta}{a}-\frac{b^2}{a^2}=-\zeta.
\]
Thus in either case there exists a nondegenerate distribution on $[0,\infty)$ with the required moments. Pairing any two such distributions yields an unrestricted mixed equilibrium by the indifference argument above. Therefore the unrestricted contest admits mixed equilibria whenever $\zeta<0$, and every pair of distributions on $[0,\infty)$ with those moments is an unrestricted mixed equilibrium. This completes the proof of the theorem.
\end{proof}

\begin{proof}{Detailed proof of Remark}{\ref{rem:cinfo-moment-unique}}
In the mixed region, the proof of Theorem \ref{thm:cinfo-unrestricted} shows that expected payoff against an opponent's mixed strategy is the quadratic
\[
U_X(x;E_1^Y,E_2^Y)=\left(aE_1^Y-b\right)x^2+\left(c-\theta-aE_2^Y\right)x+K_Y,
\]
where $K_Y$ is independent of $x$. Nondegenerate mixing requires this quadratic to be constant, so it imposes exactly two restrictions on the opponent's distribution:
\[
E_1^Y=\kappa,
\qquad
E_2^Y=\frac{c-\theta}{a}.
\]
Equivalently, the mean is $\kappa$ and the variance is
\[
E_2^Y-(E_1^Y)^2=\frac{c-\theta}{a}-\frac{b^2}{a^2}=-\zeta.
\]
No higher moment or other feature of the distribution enters expected payoff. Thus equilibrium is unique in moments but not in distributions: any two distributions on $[0,\infty)$ with these same moments generate the same indifference condition and hence the same unrestricted mixed-equilibrium payoff.

The convergence statement follows from the Bayesian formulas in Theorem \ref{thm:bn-unrestricted}. As $\sigma_\theta^2\to0$, one has $\omega\to0$. If $\zeta>0$, then
\[
\mathrm{E}[x]\to \kappa-\mathrm{sgn}(a)\sqrt{\zeta}=x^*,
\qquad
\mathrm{Var}(x)\to0.
\]
If $\zeta<0$, then
\[
\mathrm{E}[x]\to\kappa,
\qquad
\mathrm{Var}(x)\to-\zeta.
\]
These limits are exactly the complete-information pure and mixed moment benchmarks.
\end{proof}

\begin{proof}{Detailed proof of Remark}{\ref{rem:cinfo-zero}}
Set $a=0$. Then player $X$'s payoff against any opponent effort $y$ is
\[
U(x,y)=\frac12+(x-y)(c-b(x+y))-\theta x,
\]
so
\[
U_x(x,y)=c-\theta-2bx,
\qquad
U_{xx}(x,y)=-2b<0.
\]
The unique best response is independent of $y$ and equals
\[
x^*=\frac{c-\theta}{2b}.
\]
The same argument applies to player $Y$, so this is a dominant-strategy equilibrium and is unique. It is strictly positive whenever $c>\theta$.

For continuity of the pure-equilibrium formula, rationalize the root:
\[
\frac{b-\sqrt{b^2-a(c-\theta)}}{a}
=
\frac{c-\theta}{b+\sqrt{b^2-a(c-\theta)}}.
\]
Letting $a\to0$ gives $(c-\theta)/(2b)$. Therefore the pure equilibrium extends continuously through $a=0$, and because the pure-equilibrium condition $\zeta\geq0$ is satisfied for all sufficiently small nonzero $a$, the unrestricted equilibrium is pure and unique in a neighborhood of the degenerate case.
\end{proof}

\begin{proof}{Detailed proof of Corollary}{\ref{cor:cinfo-br-curvature}}
From the proof of Theorem \ref{thm:cinfo-unrestricted}, the interior best response is
\[
x^{BR}(y)=\frac{a y^{2}-(c-\theta)}{2(a y-b)}.
\]
Using $\kappa=b/a$ and $\zeta=\kappa^{2}-(c-\theta)/a$, this can be rewritten as
\[
x^{BR}(y)=\frac{y+\kappa}{2}+\frac{\zeta}{2(y-\kappa)}.
\]
Differentiating twice gives
\[
\frac{dx^{BR}(y)}{dy}=\frac{1}{2}-\frac{\zeta}{2(y-\kappa)^{2}},
\qquad
\frac{d^{2}x^{BR}(y)}{dy^{2}}=\frac{\zeta}{(y-\kappa)^{3}}=\frac{a^{3}\zeta}{(ay-b)^{3}}.
\]
On the interior branch, the second-order condition is $ay-b<0$, so the denominator is negative. If $a<0$, then $\zeta>0$ by definition, hence $d^{2}x^{BR}(y)/dy^{2}>0$ and best responses are convex. If $a>0$, then the sign is the opposite of the sign of $\zeta$: best responses are concave when $\zeta>0$ and convex when $\zeta<0$. The change occurs exactly at $\zeta=0$, and the final sentence follows because empowerment always has $\zeta>0$, while under suppression the unrestricted equilibrium is pure precisely when $\zeta\geq 0$.
\end{proof}

\begin{proof}{Detailed proof of Theorem}{\ref{thm:cinfo-effort}}
First consider the interior of the pure-equilibrium region, where $\zeta>0$ and the equilibrium is the symmetric pure action $x^*$. By Theorem \ref{thm:cinfo-unrestricted}, $x^*$ is the second-order-stable root of
\[
ax^{*2}-2bx^*+c-\theta=0.
\]
Equivalently, if $F(a,x):=ax^2-2bx+c-\theta$, then $F(a,x^*(a))=0$. Along the stable branch,
\[
F_x(a,x^*)=2(ax^*-b)<0,
\]
so the implicit function theorem applies away from the boundary $\zeta=0$. Differentiating $F(a,x^*(a))=0$ with respect to $a$ gives
\[
x^{*2}+2(ax^*-b)\frac{dx^*}{da}=0,
\]
and hence
\[
\frac{dx^*}{da}=\frac{x^{*2}}{2(b-ax^*)}>0.
\]
Thus equilibrium effort is strictly increasing on the interior of the pure branch. At $a=0$, the same equation reduces to $x^*=(c-\theta)/(2b)$, and the pure root is continuous through this point by Remark \ref{rem:cinfo-zero}. At the upper boundary of the pure branch, $\zeta=0$, the formula in Theorem \ref{thm:cinfo-unrestricted} gives $x^*=\kappa=b/a$, so the increasing pure branch reaches the boundary value continuously.

Now consider the mixed-equilibrium region. By Theorem \ref{thm:cinfo-unrestricted}, every unrestricted mixed equilibrium has
\[
\mathrm{E}[x]=\kappa=\frac{b}{a}.
\]
The condition $\zeta<0$ can occur only when $a>0$, because for $a<0$ one has $\zeta=\kappa^2-(c-\theta)/a>0$. Therefore, throughout the mixed region,
\[
\frac{d\mathrm{E}[x]}{da}=-\frac{b}{a^2}<0.
\]
Expected effort is therefore strictly decreasing once the contest is mixed.

The two regions meet where $\zeta=0$, equivalently
\[
b^2=a(c-\theta).
\]
At this point the pure value $x^*=\kappa$ and the mixed moment $\mathrm{E}[x]=\kappa$ coincide. Since effort increases up to this boundary from the pure side and decreases after it on the mixed side, unrestricted equilibrium effort is maximized at $\zeta=0$.
\end{proof}

\begin{proof}{Detailed proof of Corollary}{\ref{cor:cinfo-max}}
With the prize normalized to one and effort cost linear, rent dissipation is measured by total expected effort. On the pure-equilibrium branch, Theorem \ref{thm:cinfo-unrestricted} gives the symmetric profile $(x^*,x^*)$, so total effort is
\[
R(a)=2x^*(a).
\]
By Theorem \ref{thm:cinfo-effort}, $x^*(a)$ is increasing on the pure branch. Hence $R(a)$ is increasing there as well.

On the mixed-equilibrium branch, Theorem \ref{thm:cinfo-unrestricted} gives
\[
\mathrm{E}[x]=\mathrm{E}[y]=\kappa=\frac{b}{a},
\]
for every unrestricted mixed equilibrium. Therefore total expected effort is
\[
R(a)=\mathrm{E}[x+y]=\frac{2b}{a}.
\]
Since the mixed branch exists only under suppression, $a>0$, and
\[
R'(a)=-\frac{2b}{a^2}<0.
\]
Thus rent dissipation is decreasing throughout the mixed region.

At the boundary $\zeta=0$, equivalently $b^2=a(c-\theta)$, the pure and mixed formulas coincide: Theorem \ref{thm:cinfo-unrestricted} gives $x^*=\kappa=b/a$, so
\[
2x^*=2\kappa=\frac{2b}{a}.
\]
Therefore total unrestricted equilibrium effort rises on the pure side, falls on the mixed side, and is maximized exactly at $\zeta=0$, rather than at the extreme of suppression.
\end{proof}

\begin{proof}{Detailed proof of Proposition}{\ref{prop:cinfo-truncation}}
We first check when the unrestricted pure candidate survives in the literal truncated contest. If $a\neq0$, let $x^*$ be the pure candidate from Theorem \ref{thm:cinfo-unrestricted}; if $a=0$, let $x^*=(c-\theta)/(2b)$ as in Remark \ref{rem:cinfo-zero}. In either case the candidate is symmetric, so
\[
P(x^*,x^*)=\bar P(x^*,x^*)=\frac12,
\]
and its payoff is $\frac12-\theta x^*$.

Fix any deviation $x\geq0$ against $x^*$. Write the corresponding raw, untruncated payoff as
\[
U(x,x^*)=P(x,x^*)-\theta x.
\]
If $0\leq P(x,x^*)\leq1$, then truncation is inactive and the deviator receives exactly $U(x,x^*)$, which is no larger than $U(x^*,x^*)$ because $x^*$ is the unrestricted best response. If $P(x,x^*)\geq1$, then clipping replaces $P(x,x^*)$ by $1\leq P(x,x^*)$, so the truncated payoff is weakly below the unrestricted payoff and again cannot exceed $U(x^*,x^*)$. Finally, if $P(x,x^*)\leq0$, the truncated payoff from the deviation is $-\theta x\leq0$.

Therefore the pure candidate is a Nash equilibrium of the truncated contest whenever
\[
\frac12-\theta x^*\geq0.
\]
If this inequality fails, the candidate payoff is negative, whereas the zero-effort deviation gives
\[
\bar P(0,x^*)\geq0
\]
and has zero effort cost. Hence the candidate cannot be an equilibrium. Since any on-domain pure equilibrium inherited from the cubic benchmark must satisfy the same local first- and second-order conditions as in the unrestricted polynomial game, Theorem \ref{thm:cinfo-unrestricted} and Remark \ref{rem:cinfo-zero} leave no other on-domain pure candidate. This proves the pure statement.

Now suppose $a\neq0$ and $\zeta<0$. In an on-domain mixed equilibrium, truncation is inactive on the equilibrium support, so the same quadratic indifference argument used in Theorem \ref{thm:cinfo-unrestricted} applies to the support payoffs. Hence any such mixed equilibrium must satisfy the unrestricted moment conditions
\[
\mathrm{E}[x]=\mathrm{E}[y]=\kappa,
\qquad
\operatorname{Var}(x)=\operatorname{Var}(y)=-\zeta.
\]
Substituting these moments into the raw expected payoff gives the common equilibrium payoff
\[
\frac12-\theta\kappa.
\]
A player can always deviate to zero effort and obtain
\[
\mathrm{E}[\bar P(0,Y)]\geq0,
\]
because truncated probabilities are nonnegative. Thus participation is necessary:
\[
\frac12-\theta\kappa=\frac12-\frac{\theta b}{a}\geq0.
\]
The proposition does not claim that these mixed-region conditions are sufficient; the sufficient support-specific checks are supplied by Propositions \ref{prop:cinfo-two-point} and \ref{prop:cinfo-mixed-simple}.
\end{proof}

\begin{proof}{Detailed proof of Proposition}{\ref{prop:cinfo-two-point}}
In the mixed-equilibrium region, $a>0$ and
\[
s^2=-\zeta=\frac{c-\theta}{a}-\frac{b^2}{a^2}>0.
\]
The symmetric two-point construction is feasible precisely when its lower support point is nonnegative:
\[
\kappa-s\geq0
\qquad\Longleftrightarrow\qquad
s^2\leq\kappa^2.
\]
Using $\kappa=b/a$, this condition is
\[
\frac{c-\theta}{a}-\frac{b^2}{a^2}\leq\frac{b^2}{a^2}
\qquad\Longleftrightarrow\qquad
a\leq\frac{2b^2}{c-\theta}.
\]
On this branch, the distribution that assigns probability $\nicefrac12$ to each of $\kappa-s$ and $\kappa+s$ has mean
\[
\frac{\kappa-s+\kappa+s}{2}=\kappa
\]
and variance $s^2=-\zeta$. It therefore matches the mixed-equilibrium moments in \eqref{mixed}.

If $a>2b^2/(c-\theta)$, then $\kappa-s<0$, so the symmetric support is infeasible in $\mathbb R_+$. Use instead the endpoint support $\{0,z\}$, assigning probability $1-p$ to $0$ and probability $p$ to $z$. Its first two moments are
\[
E_1=pz,
\qquad
E_2=pz^2.
\]
Matching the required moments $E_1=\kappa=b/a$ and $E_2=(c-\theta)/a$ gives
\[
z=\frac{E_2}{E_1}=\frac{c-\theta}{b},
\qquad
p=\frac{E_1}{z}=\frac{b^2}{a(c-\theta)}.
\]
Because the mixed region implies $a>b^2/(c-\theta)$, one has $p\in(0,1)$, so this is a valid distribution on $[0,\infty)$. Its variance is
\[
E_2-E_1^2=\frac{c-\theta}{a}-\frac{b^2}{a^2}=-\zeta.
\]

Thus both branches generate exactly the unrestricted mixed moment profile. By Theorem \ref{thm:cinfo-unrestricted}, each such distribution is an unrestricted mixed equilibrium. At the switch point $a=2b^2/(c-\theta)$, one has $s=\kappa$, so the symmetric support is $\{0,2\kappa\}$. Also
\[
z=\frac{c-\theta}{b}=2\kappa,
\qquad
p=\frac12,
\]
so the endpoint description and the symmetric description coincide.
\end{proof}

\begin{proof}{Detailed proof of Proposition}{\ref{prop:cinfo-mixed-branches}}
By Proposition \ref{prop:cinfo-truncation}, any on-domain mixed equilibrium must match the unrestricted mixed moments and satisfy $\tfrac{1}{2}-\theta\kappa\geq 0$. It therefore remains to verify that the stated two-point constructions match these moments and admit no profitable deviation into the truncation region.

\smallskip
\noindent\emph{Step 1: Moment matching for the two-point supports.} On the symmetric branch, $\kappa-s\geq 0$ if and only if $s\leq \kappa$, which, using $s^{2}=-\zeta=(c-\theta)/a-b^{2}/a^{2}$, is equivalent to $a\leq 2b^{2}/(c-\theta)$. The distribution on $\{\kappa-s,\kappa+s\}$ with equal mass has mean $\kappa$ and variance $s^{2}=-\zeta$, matching the unrestricted moments.

On the endpoint branch, solving $zp=\kappa$ and $z^{2}p=(c-\theta)/a$ for $(z,p)$ gives
\[
z=\frac{c-\theta}{b},
\qquad
p=\frac{b^{2}}{a(c-\theta)}.
\]
At the switch $a=2b^{2}/(c-\theta)$, one has $\kappa-s=0$, $\kappa+s=z$, and $p=\tfrac{1}{2}$, so the two branches coincide.

\smallskip
\noindent\emph{Step 2: Truncation analysis on the symmetric branch.} Fix $a\in\bigl(b^{2}/(c-\theta),\,2b^{2}/(c-\theta)\bigr]$ and let
\[
\tau(t)=\min\{1,\max\{0,t\}\}.
\]
The deviator's true expected payoff against the symmetric distribution is
\[
\widetilde U_{\mathrm{sym}}(x)=\tfrac{1}{2}\tau\bigl(P(x,\kappa-s)\bigr)+\tfrac{1}{2}\tau\bigl(P(x,\kappa+s)\bigr)-\theta x.
\]
A direct calculation using $\kappa=b/a$ and $as^{2}=(c-\theta)-b^{2}/a$ gives
\[
P(x,\kappa-s)+P(x,\kappa+s)=1+2\theta(x-\kappa).
\]
The untruncated expected payoff is therefore constant at $\tfrac{1}{2}-\theta\kappa$. Any profitable deviation must come entirely from truncation.

\emph{Case $x\geq \kappa$.} Let $P_{1}=P(x,\kappa-s)$ and $P_{2}=P(x,\kappa+s)$. Then $P_{1}+P_{2}=1+2\theta(x-\kappa)\geq 1$. The truncation adjustment
\[
\Delta(x):=\tau(P_{1})+\tau(P_{2})-P_{1}-P_{2}
\]
is non-positive in every case: it is zero if both terms lie in $[0,1]$; it is weakly negative if exactly one term exceeds $1$; it is strictly negative if both exceed $1$; and if one term is negative then the other must exceed $1$, in which case
\[
\Delta(x)=1-P_{1}-P_{2}=-2\theta(x-\kappa)\leq 0.
\]
Hence $\widetilde U_{\mathrm{sym}}(x)\leq \tfrac{1}{2}-\theta\kappa$ for every $x\geq \kappa$.

\emph{Case $x\in[0,\kappa]$.} Now $P_{1}+P_{2}=1+2\theta(x-\kappa)\leq 1$, so it is enough to show that both terms are non-negative throughout $[0,\kappa]$. Along the lower edge $y=\kappa-s$, the quadratic $P(x,\kappa-s)$ is concave in $x$ with vertex at $\kappa+\theta/(2as)>\kappa$. It is therefore increasing on $[0,\kappa]$, and its minimum is $P(0,\kappa-s)$. Along the upper edge $y=\kappa+s$, the quadratic $P(x,\kappa+s)$ is convex in $x$ with vertex at $\kappa-\theta/(2as)$, so its minimum on $[0,\kappa]$ is attained at
\[
\bar x=\max\left\{0,\kappa-\frac{\theta}{2as}\right\}.
\]
Since $P_{1}+P_{2}\leq 1$, non-negativity of both components also implies $P_{1},P_{2}\leq 1$. The conditions $P(0,\kappa-s)\geq 0$ and $P(\bar x,\kappa+s)\geq 0$ therefore ensure that both components stay in $[0,1]$ on $[0,\kappa]$, as required.

\smallskip
\noindent\emph{Step 3: Truncation analysis on the endpoint branch.} Fix $a\geq 2b^{2}/(c-\theta)$ and consider the distribution $\{0,z\}$ with probabilities $(1-p,p)$. The deviator's true payoff is
\[
\widetilde U_{\mathrm{end}}(x)=(1-p)\tau\bigl(P(x,0)\bigr)+p\tau\bigl(P(x,z)\bigr)-\theta x,
\]
and the corresponding untruncated payoff is again the constant $\tfrac{1}{2}-\theta\kappa$.

Write
\[
x_{0}=\frac{c+\sqrt{c^{2}+2b}}{2b}
\]
for the positive root of $P(x,0)=0$, so $P(x,0)\geq 0$ on $[0,x_{0}]$ and $P(x,0)\leq 0$ for $x\geq x_{0}$. We show that the only truncation issue that can arise is a dip of $P(\cdot,z)$ below zero on $[0,x_{0}]$.

\emph{Right tail ($x\geq x_{0}$).} The quadratic $P(x,z)$ is convex in $x$ because $2(az-b)>0$ in the mixed-equilibrium region, with vertex at $x_{v}=(az^{2}-c)/(2(az-b))$. Direct substitution of $z=(c-\theta)/b$ gives
\[
x_{v}-x_{0}=-\frac{a(c-\theta)\bigl(\theta+\sqrt{c^{2}+2b}\bigr)-b^{2}\sqrt{c^{2}+2b}}{2b\bigl(a(c-\theta)-b^{2}\bigr)}.
\]
On the endpoint branch, $a\geq2b^2/(c-\theta)$, so the denominator is positive. The numerator is already positive at $a=2b^2/(c-\theta)$, where it equals $b^2(2\theta+\sqrt{c^2+2b})$, and is increasing in $a$. Hence $x_v<x_0$ throughout the endpoint branch, and $P(x,z)$ is increasing on $[x_0,\infty)$.

It remains to check the value at $x_0$. Substitution gives
\[
P(x_0,z)-1=\frac{a(c-\theta)\bigl(b+c\theta+\theta\sqrt{c^2+2b}\bigr)-2b^2(b+c\theta-\theta^2)}{2b^3}.
\]
Thus $P(x_0,z)\geq1$ whenever
\[
a\geq a_0:=\frac{2b^{2}(b+c\theta-\theta^{2})}{(c-\theta)\bigl(b+c\theta+\theta\sqrt{c^{2}+2b}\bigr)}.
\]
Moreover,
\[
\frac{a_0}{2b^2/(c-\theta)}=\frac{b+c\theta-\theta^2}{b+c\theta+\theta\sqrt{c^2+2b}}<1,
\]
so the inequality holds throughout the endpoint branch. For $x\geq x_{0}$, therefore, $\tau(P(x,0))=0$, $\tau(P(x,z))=1$, and $\widetilde U_{\mathrm{end}}(x)=p-\theta x$, which is decreasing in $x$. At the left endpoint $x=x_{0}$, truncation on $P(\cdot,z)$ weakly lowers the payoff relative to the untruncated benchmark, so the right-tail region yields no profitable deviation.

\emph{Interior ($x\in[0,x_{0}]$).} Here $P(x,0)\geq 0$, so $\tau(P(x,0))=\min\{1,P(x,0)\}$ weakly lowers payoff relative to the untruncated benchmark. Any strict profit must therefore come from a dip of $P(x,z)$ below zero. The sufficient condition
\[
\min_{x\in[0,x_{0}]}P(x,z)\geq 0
\]
rules this out. Since $P(\cdot,z)$ is convex, the minimum on $[0,x_{0}]$ is attained either at $x=0$ or at the vertex $x_{v}$. The explicit threshold $\bar a_{M}$ comes from the interior tangency case $x_{v}\in(0,x_{0})$ with $P(x_{v},z)=0$. Substituting $y=z=(c-\theta)/b$ into the tangency system
\[
\bigl(2ay-2b\bigr)x-ay^{2}+c=0,
\qquad
P(x,y)=0,
\]
and solving for $a$ yields
\[
a\leq \bar a_{M}=\frac{\bigl(b+c^{2}-3c\theta+2\theta^{2}+\sqrt{b(b-2c\theta+2\theta^{2})}\bigr)b^{2}}{(c-\theta)^{3}}.
\]
If instead $x_{v}\leq 0$, the minimum is attained at $x=0$, and then $P(0,z)=\tfrac{1}{2}-\theta(c-\theta)/b\geq 0$ because the reality of the square-root term in $\bar a_{M}$ implies $b\geq 2\theta(c-\theta)$. Hence the endpoint branch is admissible throughout $[2b^{2}/(c-\theta),\,\bar a_{M}]$, completing the proof.
\end{proof}

\begin{proof}{Detailed proof of Proposition}{\ref{prop:cinfo-mixed-simple}}
The result follows from Proposition \ref{prop:cinfo-mixed-branches} once the primitive inequalities are shown to imply the branch conditions, participation, and nonemptiness of the admissible endpoint interval.

First note that the mixed region implies $a>b^2/(c-\theta)$, hence
\[
\kappa=\frac{b}{a}<\frac{c-\theta}{b}.
\]
The primitive assumptions imply
\[
b\geq \frac{c^2}{2}\geq 2\theta(c-\theta),
\]
because $c^2/2-2\theta(c-\theta)=(c-2\theta)^2/2\geq0$. Therefore
\[
\theta\kappa<\frac{\theta(c-\theta)}{b}\leq \frac12,
\]
so the participation restriction from Proposition \ref{prop:cinfo-truncation} is satisfied throughout the mixed region covered here.

\smallskip
\noindent\emph{Symmetric branch.} On $\left(\frac{b^{2}}{c-\theta},\frac{2b^{2}}{c-\theta}\right]$, the variance
\[
V(a)=s^{2}=\frac{c-\theta}{a}-\frac{b^{2}}{a^{2}}
\]
satisfies
\[
\frac{dV}{da}=\frac{2b^{2}-a(c-\theta)}{a^{3}}\geq 0.
\]
Thus $V$ is increasing on the symmetric branch and attains its maximum at the switch point $a=2b^{2}/(c-\theta)$:
\[
V_{\max}=\frac{(c-\theta)^2}{4b^2}.
\]
Hence
\[
s\leq \frac{c-\theta}{2b},
\qquad
2bs\leq c-\theta\leq \theta,
\]
where the last inequality uses $c\leq2\theta$. Since $2as\kappa=2bs\leq\theta$, one has
\[
\kappa-\frac{\theta}{2as}\leq0,
\]
so $\bar x=0$ in Proposition \ref{prop:cinfo-mixed-branches}. The two symmetric-branch checks therefore reduce to
\[
P(0,\kappa-s)\geq0,
\qquad
P(0,\kappa+s)\geq0.
\]
But
\[
P(0,y)=\frac12-cy+by^2
\]
has discriminant $c^2-2b\leq0$ under $b\geq c^2/2$, and is therefore nonnegative for every $y\in\mathbb R$. Thus the symmetric branch satisfies the branch-specific conditions throughout its range.

\smallskip
\noindent\emph{Endpoint branch.} For $a\geq2b^2/(c-\theta)$, Proposition \ref{prop:cinfo-mixed-branches} gives an on-domain mixed equilibrium whenever $a\leq\bar a_M$. Hence it remains only to verify that the interval
\[
\left[\frac{2b^{2}}{c-\theta},\bar a_{M}\right]
\]
is nonempty. Direct computation gives
\[
\bar a_{M}-\frac{2b^{2}}{c-\theta}=\frac{\left(b-c^{2}+c\theta+\sqrt{b \left(b-2 c \theta+2 \theta^{2} \right)}\right)b^{2}}{(c-\theta)^{3}}.
\]
The square-root term is well-defined because
\[
b-2c\theta+2\theta^{2}\geq \frac{c^{2}}{2}-2c\theta+2\theta^{2}=\frac{1}{2}(c-2\theta)^{2}\geq 0.
\]
Moreover,
\[
b-c^{2}+c\theta\geq \frac{c^{2}}{2}-c^{2}+c\theta=c\left(\theta-\frac{c}{2}\right)\geq 0.
\]
Thus $\bar a_M\geq2b^2/(c-\theta)$, so the endpoint interval is nonempty. Combining the symmetric branch and the endpoint branch proves the proposition.
\end{proof}

\begin{proof}{Detailed proof of Proposition}{\ref{prop:cinfo-mixed-failure}}
Let
\[
z=\frac{c-\theta}{b},
\]
and choose a constant $\lambda>b$ large enough that
\[
\lambda z^{2}+\theta z>\frac{1}{2}.
\]
For each suppression level $a$, consider the deviation
\[
x_{a}=\frac{\lambda}{a}.
\]
The choice $\lambda>b$ ensures that $x_{a}>\kappa=b/a$, so the deviator stays to the right of the equilibrium mean, while $\lambda z^{2}+\theta z>\frac{1}{2}$ guarantees that lower truncation eventually becomes profitable against sufficiently large opponent efforts. The deviation is feasible because $x_{a}>0$ for every $a>0$. Define
\[
\phi_{a}(y)=P(x_{a},y).
\]
Direct substitution gives
\[
\phi_{a}(y)=(b-\lambda)y^{2}+\left(\frac{\lambda^{2}}{a}-c\right)y+\left(\frac{1}{2}+\frac{c\lambda}{a}-\frac{b\lambda^{2}}{a^{2}}\right).
\]
Recall that $\tau(t)=\min\{1,\max\{0,t\}\}$.
Because $b-\lambda<0$, the function $\phi_{a}$ is concave in $y$. Moreover,
\[
\phi_{a}'(0)=\frac{\lambda^{2}}{a}-c\longrightarrow -c<0
\qquad\text{as }a\to\infty,
\]
so for all sufficiently large $a$ the derivative is already negative at $0$, and concavity then implies that $\phi_{a}$ is strictly decreasing on $[0,\infty)$. Also,
\[
\phi_{a}(0)=\frac{1}{2}+\frac{c\lambda}{a}-\frac{b\lambda^{2}}{a^{2}}\longrightarrow \frac{1}{2}<1,
\]
while
\[
\phi_{a}(z)=\frac{1}{2}-\theta z-\lambda z^{2}+\frac{b\lambda z+\lambda^{2}z+\lambda\theta}{a}-\frac{b\lambda^{2}}{a^{2}}
\longrightarrow \frac{1}{2}-\theta z-\lambda z^{2}<0
\]
by the choice of $\lambda$. In particular, the deviation $x_a$ eventually pushes the raw winning probability below zero at $y=z$, and therefore throughout the entire tail $y\ge z$. Hence there exists a finite threshold $\bar a_{F}$ such that for every $a\geq \bar a_{F}$,
\[
\phi_{a}(y)\leq \phi_{a}(0)<1\quad\text{for all }y\in[0,z],
\qquad
\phi_{a}(y)<0\quad\text{for all }y\geq z.
\]

Now suppose, toward contradiction, that an on-domain mixed equilibrium exists for some $a\geq \bar a_{F}$, and let $Y$ denote player $Y$'s effort draw. By Theorem \ref{thm:cinfo-unrestricted},
\[
\mathrm{E}[Y]=\kappa=\frac{b}{a}>0,
\qquad
\mathrm{E}[Y^{2}]=\frac{c-\theta}{a}=\kappa z.
\]
If $\Pr(Y\geq z)=0$, then $Y<z$ almost surely. Since $\mathrm{E}[Y]>0$, this implies $Y>0$ with positive probability, hence $Y^{2}<zY$ with positive probability and therefore $\mathrm{E}[Y^{2}]<z\,\mathrm{E}[Y]$, contradicting the displayed identities. Therefore
\[
\Pr(Y\geq z)>0.
\]

For every $y\in[0,z]$, we have $\phi_{a}(y)<1$, so truncation cannot lower the payoff there; for every $y\geq z$, we have $\phi_{a}(y)<0$, so truncation strictly raises it there. Because $\Pr(Y\ge z)>0$, this lower-tail truncation creates a strict gain on a set of positive probability. Thus
\[
\tau\bigl(\phi_{a}(y)\bigr)\geq \phi_{a}(y)\quad\text{for all }y\geq 0,
\]
with strict inequality on a set of positive probability because $\Pr(Y\geq z)>0$. Consequently,
\[
\mathrm{E}\bigl[\tau(P(x_{a},Y))\bigr]>\mathrm{E}[P(x_{a},Y)].
\]
Subtracting the same cost term $\theta x_{a}$ from both sides and using the unrestricted moment identities yields
\[
\mathrm{E}\bigl[\tau(P(x_{a},Y))\bigr]-\theta x_{a}
>
\mathrm{E}[P(x_{a},Y)]-\theta x_{a}
=
\frac{1}{2}-\theta\kappa.
\]
So $x_{a}$ is a profitable deviation in the truncated contest, contradicting equilibrium. Hence no on-domain mixed equilibrium exists for any $a\geq \bar a_{F}$.
\end{proof}

\begin{proof}{Detailed proof of Theorem}{\ref{thm:bn-unrestricted}}
Recall the notation from the main text:
\[
M_1=\mathrm{E}[\theta],
\qquad
M_2=\mathrm{E}[\theta^2],
\qquad
\Delta=c-M_1,
\qquad
\sigma_\theta^2=M_2-M_1^2,
\]
and, for $a\neq 0$,
\[
\kappa=\frac{b}{a},
\qquad
\omega=\frac{\sigma_\theta^2}{a^2},
\qquad
\zeta=\kappa^2-\frac{\Delta}{a}.
\]
Let $x(\theta)$ be a symmetric equilibrium candidate in the affine relaxation and let $Y=x(\theta')$ denote the opponent's effort. Write
\[
E_1:=\mathrm{E}[Y],
\qquad
E_2:=\mathrm{E}[Y^2].
\]
For a type $\theta$ who deviates to $\tilde x\in\mathbb{R}$, a direct expansion of expected utility gives
\[
U(\theta,\tilde x)=K(\theta)+(aE_1-b)\tilde x^2+(c-\theta-aE_2)\tilde x,
\]
where $K(\theta)$ does not depend on $\tilde x$. Hence any equilibrium best response must satisfy
\[
2(aE_1-b)\tilde x+c-\theta-aE_2=0.
\]
If $aE_1>b$, the objective is strictly convex in $\tilde x$ and cannot have a finite optimum for every type. If $aE_1=b$, expected utility is linear in $\tilde x$. Since the deviation space in this affine relaxation is $\mathbb R$, any type with nonzero slope has no finite best response. The slope $c-\theta-aE_2$, viewed as a function of $\theta$, is affine with coefficient $-1$; by nondegenerate type uncertainty it is nonzero on a set of types of positive measure. Hence some type has no finite best response. Therefore necessarily $aE_1<b$, and the unique best response is affine:
\[
x(\theta)=\frac{aE_2+\theta-c}{2(aE_1-b)}=k\theta+d,
\]
with
\[
k=\frac{1}{2(aE_1-b)}<0,
\qquad
d=\frac{aE_2-c}{2(aE_1-b)}.
\]
Thus any symmetric equilibrium must be affine, and its slope is strictly negative.

Now take moments. From $x(\theta)=k\theta+d$ we obtain
\[
E_1=kM_1+d,
\qquad
\mathrm{Var}(x)=k^2\sigma_\theta^2.
\]
Let $V:=\mathrm{Var}(x)$, so that $E_2=E_1^2+V$. Using the formula for $d$ and the identity $E_1=kM_1+d$, we get
\[
2(aE_1-b)E_1=aE_2+M_1-c,
\]
hence
\[
aE_1^2-2bE_1+\Delta-aV=0.
\]
Since
\[
V=k^2\sigma_\theta^2=\frac{\sigma_\theta^2}{4(aE_1-b)^2},
\]
it is convenient to set $z:=E_1-\kappa$. Then $aE_1-b=az$, and the displayed identity becomes
\[
z^4-\zeta z^2-\frac{\omega}{4}=0.
\]
So $z^2$ solves the quadratic equation
\[
q^2-\zeta q-\frac{\omega}{4}=0,
\]
and therefore
\[
(E_1-\kappa)^2=\frac{\zeta\pm\sqrt{\zeta^2+\omega}}{2}.
\]
Because $\omega>0$, one has $\sqrt{\zeta^2+\omega}>|\zeta|$, so the minus branch is negative and cannot be the square of a real number. Hence every real candidate must satisfy
\[
(E_1-\kappa)^2=\frac{\zeta+\sqrt{\zeta^2+\omega}}{2}.
\]
The second-order condition is $aE_1<b=a\kappa$, i.e.
\[
a(E_1-\kappa)<0.
\]
Thus the sign of $E_1-\kappa$ must be opposite to the sign of $a$, and the only admissible root is
\[
E_1=\kappa-\frac{\mathrm{sgn}(a)}{\sqrt{2}}\sqrt{\zeta+\sqrt{\zeta^2+\omega}}.
\]
This is the expected-effort formula in the statement.

Using $a(E_1-\kappa)=aE_1-b$, we next obtain
\[
k=\frac{1}{2(aE_1-b)}=-\frac{1}{\sqrt{2}|a|\sqrt{\zeta+\sqrt{\zeta^2+\omega}}},
\qquad
d=E_1-kM_1.
\]
Finally,
\[
V=k^2\sigma_\theta^2=\frac{\omega}{2\left(\zeta+\sqrt{\zeta^2+\omega}\right)}
=\frac{\sqrt{\zeta^2+\omega}-\zeta}{2},
\]
which is exactly the variance formula in the theorem.

Conversely, define $x(\theta)=k\theta+d$ from the formulas above. Since $k<0$, the strategy is strictly decreasing. Moreover,
\[
aE_1-b=a(E_1-\kappa)=-\frac{|a|}{\sqrt{2}}\sqrt{\zeta+\sqrt{\zeta^2+\omega}}<0,
\]
so $U(\theta,\tilde x)$ is strictly concave in $\tilde x$. Its first-order condition is solved uniquely by $\tilde x=x(\theta)$, so every type best responds to the candidate strategy. Hence the candidate is a symmetric strictly decreasing Bayesian Nash equilibrium of the affine relaxation.

Uniqueness is now immediate: any symmetric equilibrium must satisfy the quadratic equation for $(E_1-\kappa)^2$ above, and the second-order condition selects exactly one sign. That uniquely determines $E_1$, and then the formulas for $k$, $d$, and $\mathrm{Var}(x)$ follow.
\end{proof}

\begin{proof}{Detailed proof of Remark}{\ref{rem:bn-zero}}
When $a=0$, the raw polynomial expression becomes
\[
P(\tilde x,y)=\frac{1}{2}+c(\tilde x-y)-b(\tilde x^2-y^2).
\]
Hence expected utility of type $\theta$ from choosing $\tilde x$ against any opponent strategy is
\[
U(\theta,\tilde x)=K(\theta)+(c-\theta)\tilde x-b\tilde x^2,
\]
where $K(\theta)$ does not depend on $\tilde x$. Since $b>0$, this objective is strictly concave, and since $\theta<c$, its unique maximizer is
\[
\tilde x=\frac{c-\theta}{2b}>0.
\]
So each type has a dominant strategy, and the affine relaxation and unrestricted polynomial game coincide. Integrating over types gives
\[
\mathrm{E}[x]=\frac{c-M_1}{2b},
\qquad
\mathrm{Var}(x)=\frac{M_2-M_1^2}{4b^2}.
\]
The literal truncated contest is not automatic. On the affine support,
\[
P\bigl(x(\theta),x(\theta')\bigr)=\frac12+\frac{(\theta')^2-\theta^2}{4b},
\]
so all on-support probabilities lie strictly between $0$ and $1$ if and only if $\beta^2-\alpha^2<2b$. If this support check fails, clipping is active even locally; if it passes, global restricted-game survival still requires ruling out deviations created by clipping.

In the fixed-point notation from the proof of Proposition \ref{prop:bn-nonnegative}, one has $t=c-aE_2=c$ and
\[
2aA(t)\lambda^2-2b\lambda+1=0,
\]
which reduces to $\lambda=1/(2b)$, yielding exactly the same strategy.

For continuity, the affine-benchmark formulas from Theorem \ref{thm:bn-unrestricted} can be rewritten as
\[
k=-\frac{1}{\sqrt{2\left(b^2-a\Delta+\sqrt{(b^2-a\Delta)^2+\sigma_\theta^2 a^2}\right)}},
\]
and
\[
\mathrm{Var}(x)=\frac{\sigma_\theta^2}{2\left(\sqrt{(b^2-a\Delta)^2+\sigma_\theta^2 a^2}+b^2-a\Delta\right)}.
\]
Hence
\[
k\to -\frac{1}{2b},
\qquad
\mathrm{Var}(x)\to \frac{\sigma_\theta^2}{4b^2}
\qquad\text{as }a\to 0.
\]
Also, using $E_2=\mathrm{E}[x]^2+\mathrm{Var}(x)$, the affine-benchmark moment identities imply
\[
a\,\mathrm{E}[x]^2-2b\,\mathrm{E}[x]+\Delta-a\,\mathrm{Var}(x)=0.
\]
Taking the relevant root and rationalizing gives
\[
\mathrm{E}[x]=\frac{\Delta-a\,\mathrm{Var}(x)}{b+\sqrt{b^2-a\Delta+a^2\mathrm{Var}(x)}},
\]
so $\mathrm{E}[x]\to \Delta/(2b)=(c-M_1)/(2b)$. Therefore
\[
d=\mathrm{E}[x]-kM_1\to \frac{c}{2b}.
\]
Finally, since the limiting strategy satisfies $x_0(\beta)=(c-\beta)/(2b)>0$, continuity of $k$ and $d$ implies $k\beta+d>0$ for all sufficiently small $|a|$. Thus the nonnegative-effort equilibrium coincides with the affine benchmark in a neighborhood of $a=0$.
\end{proof}

\begin{proof}{Detailed proof of Remark}{\ref{rem:bn-vanishing}}
Fix $a\neq0$, $b$, $c$, and $M_1$, and consider a fully active sequence with $\sigma_\theta^2\downarrow0$. Then $\Delta$, $\kappa$, and $\zeta$ remain fixed, while
\[
\omega=\frac{\sigma_\theta^2}{a^2}\downarrow0.
\]
Theorem \ref{thm:bn-unrestricted} and the displayed moment formulas following it give
\[
\mathrm{E}[x]=\kappa-\frac{\mathrm{sgn}(a)}{\sqrt{2}}\sqrt{\zeta+\sqrt{\zeta^2+\omega}},
\qquad
\mathrm{Var}(x)=\frac{\sqrt{\zeta^2+\omega}-\zeta}{2}.
\]
If $\zeta>0$, then $\sqrt{\zeta^2+\omega}\to\zeta$, so
\[
\mathrm{E}[x]\to\kappa-\mathrm{sgn}(a)\sqrt{\zeta}=x^*,
\qquad
\mathrm{Var}(x)\to0.
\]
At the boundary $\zeta=0$, the same formulas give $\mathrm{E}[x]\to\kappa=x^*$ and $\mathrm{Var}(x)\to0$. These are exactly the complete-information pure-equilibrium moments for the limiting type $M_1$.

If $\zeta<0$, then $\sqrt{\zeta^2+\omega}\to-\zeta$, so
\[
\mathrm{E}[x]\to\kappa,
\qquad
\mathrm{Var}(x)\to-\zeta.
\]
These are exactly the complete-information mixed-equilibrium moments from Theorem \ref{thm:cinfo-unrestricted}, with the realized complete-information cost equal to $M_1$.

The distributional interpretation follows from affinity. Writing
\[
x(\theta)=\mathrm{E}[x]+k(\theta-M_1),
\qquad
k=-\frac{1}{\sqrt{2}|a|\sqrt{\zeta+\sqrt{\zeta^2+\omega}}},
\]
we have, in the mixed region $\zeta<0$,
\[
k\sigma_\theta\to-\sqrt{-\zeta}.
\]
Thus any fixed shrinking family of priors whose standardized costs $(\theta-M_1)/\sigma_\theta$ have a weak limit selects the corresponding affine push-forward effort distribution. Without fixing such a family, the universal selection is the moment profile $\bigl(\kappa,-\zeta\bigr)$.
\end{proof}

\begin{proof}{Detailed proof of Theorem}{\ref{thm:uncertainty}}

Within the fully active region, Theorem \ref{thm:bn-unrestricted} applies. Thus for $a\neq 0$,
\[
E_1=\kappa-\frac{\mathrm{sgn}(a)}{\sqrt{2}} \sqrt{ \zeta + \sqrt{\zeta^2+ \omega}},
\qquad
\omega=\frac{\sigma_\theta^2}{a^2}.
\]
Holding $a$ fixed, both $\kappa$ and $\zeta$ are constant, while $\omega$ is strictly increasing in $\sigma_\theta^2$. Differentiating with respect to $\omega$ gives
\[
\frac{\partial E_1}{\partial \omega}
=
-\frac{\mathrm{sgn}(a)}{4\sqrt{2}\,\sqrt{\zeta^2+\omega}\,\sqrt{\zeta+\sqrt{\zeta^2+\omega}}}.
\]
The denominator is strictly positive, so
\[
\frac{\partial E_1}{\partial \omega}<0 \quad\text{if } a>0,
\qquad
\frac{\partial E_1}{\partial \omega}>0 \quad\text{if } a<0.
\]
Since $\omega=\sigma_\theta^2/a^2$, the same sign pattern holds for the comparative static with respect to the variance of types. Finally, if $a=0$, Remark \ref{rem:bn-zero} gives
\[
E_1=\frac{c-M_1}{2b},
\]
which depends only on the mean type and is therefore independent of $\sigma_\theta^2$.
\end{proof}

\begin{proof}{Detailed proof of Theorem}{\ref{thm:bn-disclosure}}
Fix any $\pi\in\mathcal S$. Because the signal rule is symmetric and types are IID, the posterior-mean costs induced by $\pi$ are IID with common distribution $\hat \theta_{\pi}$. Conditional on signal realization $s$, replacing the unknown cost by its posterior mean $\hat \theta_{\pi}(s)$ leaves expected payoffs unchanged, so the induced affine relaxation is exactly the one studied above with type distribution given by $\hat \theta_{\pi}$. By the law of total variance,
\[
0\leq \mathrm{Var}(\hat \theta_{\pi})=\mathrm{Var}(\mathrm{E}[\theta\mid s_{\pi}])\leq \mathrm{Var}(\theta)=\sigma_\theta^2.
\]
The lower bound is attained under no disclosure, when $\hat \theta_{\pi}=M_1$ almost surely. The upper bound is attained under full disclosure, when $\hat \theta_{\pi}=\theta$ almost surely.

If $a\neq 0$ and $\mathrm{Var}(\hat \theta_{\pi})>0$, Theorem \ref{thm:bn-unrestricted} gives ex ante expected effort
\[
E_1=\kappa-\frac{\mathrm{sgn}(a)}{\sqrt{2}}\sqrt{\zeta+\sqrt{\zeta^2+\omega}},
\qquad
\omega=\frac{\mathrm{Var}(\hat \theta_{\pi})}{a^2},
\]
with $\kappa=b/a$ and $\zeta=\kappa^2-(c-\mathrm{E}[\theta])/a$, both independent of $\pi$. Differentiating with respect to $\omega$ gives
\[
\frac{\partial E_1}{\partial \omega}=-\frac{\mathrm{sgn}(a)}{4\sqrt{2}\,\sqrt{\zeta^2+\omega}\,\sqrt{\zeta+\sqrt{\zeta^2+\omega}}}.
\]
Hence expected effort is decreasing in $\mathrm{Var}(\hat \theta_{\pi})$ when $a>0$ and increasing in $\mathrm{Var}(\hat \theta_{\pi})$ when $a<0$. The same monotonicity extends to the no-disclosure endpoint by continuity as $\mathrm{Var}(\hat \theta_{\pi})\downarrow0$. Therefore no disclosure is optimal under suppression and full disclosure is optimal under empowerment. If $a=0$, Remark \ref{rem:bn-zero} gives $E_1=(c-\mathrm{E}[\theta])/(2b)$, which is independent of $\pi$, so the designer is indifferent over $\mathcal S$.

Finally, suppose signal rule $\pi'$ is a Blackwell garbling of $\pi$. Then
\[
\mathrm{E}[\theta\mid s_{\pi'}]
=\mathrm{E}\bigl[\mathrm{E}[\theta\mid s_{\pi}]\mid s_{\pi'}\bigr],
\]
so the law of total variance gives
\[
\mathrm{Var}(\mathrm{E}[\theta\mid s_{\pi'}])\leq \mathrm{Var}(\mathrm{E}[\theta\mid s_{\pi}]).
\]
Combining this variance ordering with the monotonicity just established gives the Blackwell-order statement.
\end{proof}

\begin{proof}{Detailed proof of Theorem}{\ref{thm:bn-a-peak}}
We normalize the problem, invert the map $\xi\mapsto y(\xi)$ into its two monotone branches, determine the sign of $de/dy$ on each branch, and then translate back to $de/d\xi$ using the orientation of the parametrization.

Introduce normalized variables
\[
\rho=\frac{\sigma_\theta^2}{\Delta^2},
\qquad
\xi=\frac{\Delta a}{b^2},
\qquad
E_1(a)=\frac{\Delta}{b}e(\xi).
\]
Then \eqref{eq:bn-e1-a} becomes
\[
e(\xi)=\frac{1-y(\xi)}{\xi},
\qquad
y(\xi)=\sqrt{\frac{1-\xi+\sqrt{(1-\xi)^2+\rho \xi^2}}{2}},
\]
with the continuous extension $e(0)=\frac{1}{2}$. By Remark \ref{rem:bn-positive}, the relevant region is
\[
\xi<\frac{4}{\rho}.
\]

Now define
\[
u(y)=\sqrt{y^2+\rho(y^2-1)},
\qquad
y_0=\sqrt{\frac{\rho}{1+\rho}}.
\]
Eliminating the outer radical in the definition of $y(\xi)$ shows that, for $y\ge y_0$, the inversion produces two branches,
\[
\xi=\xi_-(y):=\frac{2y(y-u(y))}{\rho}
\qquad\text{or}\qquad
\xi=\xi_+(y):=\frac{2y(y+u(y))}{\rho}.
\]
Moreover,
\[
u(1)=1,
\qquad
u(y_0)=0,
\]
so
\[
\xi_-(1)=0,
\qquad
\xi_-(y_0)=\xi_+(y_0)=\frac{2}{1+\rho},
\qquad
\xi_+(1)=\frac{4}{\rho}.
\]
Differentiating gives
\[
\xi_-'(y)=-\frac{2\bigl((u(y)-y)^2+\rho y^2\bigr)}{\rho u(y)}<0,
\]
and
\[
\xi_+'(y)=\frac{2\bigl((u(y)+y)^2+\rho y^2\bigr)}{\rho u(y)}>0.
\]
Thus the minus branch parameterizes $(-\infty,2/(1+\rho)]$, while the plus branch parameterizes $[2/(1+\rho),4/\rho)$.

We begin with the minus branch. On this branch,
\[
e(\xi_-(y))=\frac{1-y}{\xi_-(y)}=\frac{\rho(1-y)}{2y(y-u(y))}.
\]
Differentiating with respect to $y$, and using
\[
u'(y)=\frac{(1+\rho)y}{u(y)}
\qquad\text{and}\qquad
u(y)^2-y^2=\rho(y^2-1),
\]
gives
\[
\mathrm{sgn}\!\left(\frac{d}{dy}e(\xi_-(y))\right)
=
-\mathrm{sgn}\Bigl((y-1)\bigl[y+u(y)(y^2-y-1)\bigr]\Bigr).
\]
Since $\xi_-'(y)<0$ and
\[
\frac{d}{dy}e(\xi_-(y))=e'(\xi)\xi_-'(y),
\]
the sign flips once more, so
\[
\mathrm{sgn}\,e'(\xi)
=
\mathrm{sgn}\Bigl((y-1)\bigl[y+u(y)(y^2-y-1)\bigr]\Bigr).
\]
If $y\ge \varphi:=\frac{1+\sqrt5}{2}$, then $y^2-y-1\ge 0$, so the second factor is strictly positive and therefore $e'(\xi)>0$.

If $0<y<\varphi$, then $1+y-y^2>0$ and
\[
y+u(y)(y^2-y-1)=(1+y-y^2)\left(\frac{y}{1+y-y^2}-u(y)\right),
\]
so the sign of the bracket equals the sign of the corresponding difference of squares,
\[
\frac{y^2}{(1+y-y^2)^2}-u(y)^2=(1-y^2)\left(\rho-\frac{y^3(2-y)}{(y^2-y-1)^2}\right).
\]
Since $(y-1)(1-y^2)<0$ for $0<y<\varphi$, $y\neq 1$, it follows that
\[
\mathrm{sgn}\,e'(\xi)
=
\mathrm{sgn}\left(\frac{y^3(2-y)}{(y^2-y-1)^2}-\rho\right).
\]
Define
\[
g(y)=\frac{y^3(2-y)}{(y^2-y-1)^2}.
\]
Then
\[
g'(y)=\frac{2y^2(y-3)}{(y^2-y-1)^3}>0
\qquad\text{for }0<y<\varphi,
\]
so $g$ is strictly increasing on $(0,\varphi)$. Moreover,
\[
g(1)=1,
\qquad
\lim_{y\downarrow 0}g(y)=0,
\qquad
\lim_{y\uparrow \varphi}g(y)=\infty.
\]
Therefore there exists a unique $y^\dagger\in(0,\varphi)$ such that
\[
g(y^\dagger)=\rho.
\]
To verify that this critical point lies on the admissible minus branch, note that a direct computation gives
\[
g(y_0)-\rho=-\frac{\rho(1+\rho)^3}{\bigl(\rho^{3/2}+\sqrt{\rho}+\sqrt{1+\rho}\bigr)^2}<0.
\]
Since $g$ is strictly increasing, it follows that $y^\dagger>y_0$. Hence $\xi_-(y^\dagger)$ is well defined, and on the minus branch one has $e'(\xi)>0$ for $y>y^\dagger$ and $e'(\xi)<0$ for $y<y^\dagger$. Since $\xi_-$ is strictly decreasing, this yields a unique critical point
\[
\xi^\dagger=\xi_-(y^\dagger)
\qquad\text{and hence}\qquad
a^\dagger=\frac{b^2}{\Delta}\xi^\dagger.
\]
Moreover, because $y^\dagger>y_0$ and $\xi_-$ is strictly decreasing,
\[
\xi^\dagger<\xi_-(y_0)=\frac{2}{1+\rho},
\]
which is equivalent to the bound $a^\dagger<2b^2\Delta/(\Delta^2+\sigma_\theta^2)$ stated in the theorem.

Now turn to the plus branch. There,
\[
e(\xi_+(y))=\frac{1-y}{\xi_+(y)}
\qquad\text{for }y\in[y_0,1).
\]
Differentiating with respect to $y$ gives
\[
\frac{d}{dy}e(\xi_+(y))
=
-\frac{\xi_+(y)+(1-y)\xi_+'(y)}{\xi_+(y)^2}<0.
\]
Because $\xi_+'(y)>0$, it follows that $e'(\xi)<0$ throughout the plus branch. Hence the critical point found on the minus branch is the unique maximizer on the whole region $\xi<4/\rho$.

Finally, on the minus branch the sign of $\xi_-(y)$ is the sign of $y-u(y)$. Since
\[
u(y)^2-y^2=\rho(y^2-1),
\]
the sign of $y-u(y)$ is the sign of $1-y$. Therefore $\xi^\dagger>0$ if and only if $y^\dagger<1$, $\xi^\dagger=0$ if and only if $y^\dagger=1$, and $\xi^\dagger<0$ if and only if $y^\dagger>1$. Because $g$ is strictly increasing and $g(1)=1$, this is equivalent to
\[
\rho<1,
\qquad
\rho=1,
\qquad
\rho>1,
\]
respectively. Recalling that $\rho=\sigma_\theta^2/\Delta^2$ and $\xi=\Delta a/b^2$, we obtain the sign characterization for $a^\dagger$.
\end{proof}

\begin{proof}{Detailed proof of Remark}{\ref{rem:bn-positive}}
Write
\[
R(a):=\sqrt{\frac{b^2-\Delta a+\sqrt{(b^2-\Delta a)^2+\sigma_\theta^2 a^2}}{2}},
\]
so that, for $a\neq0$,
\[
E_1(a)=\frac{b-R(a)}{a}.
\]
Because $\Delta=c-M_1>0$ and $\sigma_\theta^2>0$, the threshold $\bar a=4b^2\Delta/\sigma_\theta^2$ is positive.

If $a<0$, then $b^2-\Delta a>b^2$ and
\[
\sqrt{(b^2-\Delta a)^2+\sigma_\theta^2 a^2}>b^2-\Delta a.
\]
Therefore
\[
R(a)>\sqrt{b^2-\Delta a}>b.
\]
Thus $b-R(a)<0$ and $a<0$, so $E_1(a)>0$ throughout the empowering region. At $a=0$, the continuous extension in \eqref{eq:bn-e1-a} gives $E_1(0)=\Delta/(2b)>0$.

Now suppose $a>0$. Since the denominator is positive,
\[
E_1(a)>0
\iff
R(a)<b.
\]
Squaring the nonnegative sides gives
\[
R(a)<b
\iff
b^2-\Delta a+\sqrt{(b^2-\Delta a)^2+\sigma_\theta^2 a^2}<2b^2,
\]
or equivalently
\[
\sqrt{(b^2-\Delta a)^2+\sigma_\theta^2 a^2}<b^2+\Delta a.
\]
The right-hand side is strictly positive, so squaring again is legitimate. This yields
\[
(b^2-\Delta a)^2+\sigma_\theta^2 a^2<(b^2+\Delta a)^2,
\]
which simplifies to
\[
\sigma_\theta^2 a^2<4b^2\Delta a.
\]
Because $a>0$, this is equivalent to
\[
a<\frac{4b^2\Delta}{\sigma_\theta^2}=\bar a.
\]
The same chain with equality gives $E_1(\bar a)=0$, and reversing the final inequality gives $E_1(a)<0$ for $a>\bar a$.

This proves that $\bar a$ is only the positivity boundary for the affine average effort. It does not ensure that the affine schedule is nonnegative type by type. Since the affine rule is strictly decreasing in type, full activity on $[\alpha,\beta]$ requires the high-cost endpoint to remain nonnegative. Proposition \ref{prop:bn-dropout} identifies the resulting prior-dependent threshold $a_D(F)$ under suppression; beyond it, the nonnegative-effort equilibrium is cutoff-affine rather than fully affine. The literal truncated contest imposes the further state-by-state support checks treated after Proposition \ref{prop:bn-nonnegative}.
\end{proof}

\begin{proof}{Detailed proof of Theorem}{\ref{thm:bn-peak-variance}}
Write
\[
\rho=\frac{\sigma_\theta^2}{\Delta^2},
\qquad
A(\rho)=\frac{\Delta a^\dagger}{b^2},
\]
so that
\[
a^\dagger=\frac{b^2}{\Delta}A(\rho).
\]
From the proof of Theorem \ref{thm:bn-a-peak}, the critical point is characterized by a unique $y^\dagger\in(0,\varphi)$ such that
\[
g(y^\dagger)=\rho,
\qquad
g(y)=\frac{y^3(2-y)}{(y^2-y-1)^2},
\]
and
\[
A(\rho)=\frac{\Delta a^\dagger}{b^2}=\xi^\dagger=\xi_-(y^\dagger)=\frac{2y^\dagger(y^\dagger-u(y^\dagger))}{\rho}.
\]
Using $\rho=g(y^\dagger)$ and the definition of $u$, we obtain
\[
u(y^\dagger)^2=(y^\dagger)^2+\rho\bigl((y^\dagger)^2-1\bigr)=\frac{(y^\dagger)^2}{\bigl(1+y^\dagger-(y^\dagger)^2\bigr)^2}.
\]
Because $y^\dagger<\varphi$, the denominator is positive, so
\[
u(y^\dagger)=\frac{y^\dagger}{1+y^\dagger-(y^\dagger)^2}.
\]
Substituting into the formula for $A(\rho)=\xi^\dagger$ yields
\[
A(\rho)=\frac{2(1-y^\dagger)\bigl(1+y^\dagger-(y^\dagger)^2\bigr)}{2-y^\dagger}.
\]
Thus $A$ depends only on $\rho$.

Because $g'(y)>0$ on $(0,\varphi)$, the sign of $A'(\rho)$ is the sign of $dA/dy$ evaluated at $y=y^\dagger$. A direct calculation gives
\[
\frac{dA}{dy}=-\frac{2q(y)}{(2-y)^2},
\qquad
q(y):=2y^3-8y^2+8y-1.
\]
Also,
\[
q(0)=-1,
\]
and directly solving $q(y)=0$ gives three real roots; the two in $(0,\varphi)$ are
\[
y_1\approx 0.1453623203,
\qquad
y_2\approx 1.4030317168.
\]
Define
\[
\rho_1:=g(y_1),
\qquad
\rho_2:=g(y_2).
\]
Numerically,
\[
\rho_1\approx 0.00450716,
\qquad
\rho_2\approx 8.73185992.
\]
Because $g$ is strictly increasing and $g(1)=1$, one has $0<\rho_1<1<\rho_2$. Since $q(0)<0$ and the roots are distinct, the sign pattern for $q$ yields
\[
\frac{dA}{dy}>0 \text{ on } (0,y_1),
\qquad
\frac{dA}{dy}<0 \text{ on } (y_1,y_2),
\qquad
\frac{dA}{dy}>0 \text{ on } (y_2,\varphi).
\]
Mapping back through $g(y^\dagger)=\rho$ yields
\[
A'(\rho)>0 \text{ on } (0,\rho_1),
\qquad
A'(\rho)<0 \text{ on } (\rho_1,\rho_2),
\qquad
A'(\rho)>0 \text{ on } (\rho_2,\infty).
\]
Now set
\[
\sigma_1^2:=\rho_1\Delta^2,
\qquad
\sigma_2^2:=\rho_2\Delta^2.
\]
Since $b$ and $\Delta$ are fixed,
\[
\frac{d a^\dagger}{d\sigma_\theta^2}=\frac{b^2}{\Delta^3}A'(\rho),
\]
so $a^\dagger$ has exactly the same monotonicity pattern in $\sigma_\theta^2$.

The sign characterization follows from Theorem \ref{thm:bn-a-peak}: $a^\dagger>0$ if and only if $\sigma_\theta^2<\Delta^2$, $a^\dagger=0$ if and only if $\sigma_\theta^2=\Delta^2$, and $a^\dagger<0$ if and only if $\sigma_\theta^2>\Delta^2$. Finally,
\[
\lim_{y\downarrow 0}g(y)=0,
\qquad
\lim_{y\uparrow\varphi}g(y)=\infty,
\]
so $y^\dagger\to 0$ as $\sigma_\theta^2\downarrow 0$ and $y^\dagger\to\varphi$ as $\sigma_\theta^2\to\infty$. The explicit formula for $A(\rho)$ then gives
\[
\lim_{\sigma_\theta^2\downarrow 0}a^\dagger=\frac{b^2}{\Delta},
\qquad
\lim_{\sigma_\theta^2\to\infty}a^\dagger=0^-.
\]
This proves the claims.
\end{proof}

\begin{proof}{Detailed proof of Proposition}{\ref{prop:bn-nonnegative}}
The proof has three parts. First, any equilibrium with strict concavity in own effort must be cutoff-affine. Second, the only non-concave boundary case is the atom-at-$\alpha$ equilibrium stated in the proposition. Third, outside that boundary case, a cutoff-affine fixed point exists.

Let $X$ denote the opponent's equilibrium effort draw and write
\[
E_1=\mathrm{E}[X],
\qquad
E_2=\mathrm{E}[X^2].
\]
For a type $\theta$ who deviates to some $\tilde x\geq 0$, untruncated interim utility is a quadratic polynomial in $\tilde x$:
\[
U(\theta,\tilde x)=K(\theta)+(aE_1-b)\tilde x^2+(c-\theta-aE_2)\tilde x,
\]
where $K(\theta)$ does not depend on $\tilde x$. Because $\beta<c$, the all-zero profile cannot be an equilibrium: the derivative at zero would equal $c-\theta>0$ for every type. Hence positive effort occurs with positive probability, so $E_1>0$.

If $aE_1>b$, then $U(\theta,\tilde x)\to\infty$ as $\tilde x\to\infty$, which is impossible in equilibrium. Thus
\[
aE_1\leq b.
\]

We first analyze the strict case $aE_1<b$. Then expected utility is strictly concave in own effort, so each type has the unique best response
\[
x(\theta)=\max\left\{0,\frac{c-\theta-aE_2}{2(b-aE_1)}\right\}.
\]
Define
\[
\lambda:=\frac{1}{2(b-aE_1)}>0,
\qquad
t:=c-aE_2.
\]
Then every symmetric equilibrium with $aE_1<b$ must have the cutoff-affine form
\[
x(\theta)=\lambda (t-\theta)_+.
\]

Now set
\[
A(t)=\mathrm{E}[(t-\theta)_+],
\qquad
B(t)=\mathrm{E}[(t-\theta)_+^2].
\]
The cutoff-affine strategy implies
\[
E_1=\lambda A(t),
\qquad
E_2=\lambda^2 B(t).
\]
Substituting these identities into the definitions of $\lambda$ and $t$ yields
\[
2aA(t)\lambda^2-2b\lambda+1=0,
\qquad
t=c-aB(t)\lambda^2.
\]
This proves the cutoff-affine characterization and the fixed-point system in the strict-concavity case.

Conversely, suppose $\lambda>0$, $(\lambda,t)$ satisfies these two equations, and $a\lambda A(t)<b$, and define
\[
x(\theta)=\lambda (t-\theta)_+.
\]
Then
\[
E_1=\lambda A(t),
\qquad
E_2=\lambda^2 B(t),
\qquad
\lambda=\frac{1}{2(b-aE_1)},
\qquad
t=c-aE_2.
\]
For every active type $\theta<t$, we have
\[
2(aE_1-b)x(\theta)+c-\theta-aE_2
=2(aE_1-b)\lambda(t-\theta)+t-\theta
=0,
\]
where the last equality uses $2(b-aE_1)\lambda=1$. So active types satisfy the first-order condition. For every inactive type $\theta\geq t$, the derivative at zero is
\[
c-\theta-aE_2=t-\theta\leq 0,
\]
so the boundary action $x=0$ is optimal. Since $aE_1<b$, expected utility is strictly concave in own effort, hence these pointwise optimality conditions are sufficient. Therefore the strategy is a symmetric Bayesian Nash equilibrium.

We next analyze the boundary case $aE_1=b$. Since $E_1>0$, this case can occur only under suppression, $a>0$. Expected utility is linear in own effort:
\[
U(\theta,\tilde x)=K(\theta)+(c-\theta-aE_2)\tilde x.
\]
For equilibrium, the slope must be weakly negative for every type, otherwise some type would prefer arbitrarily large effort. Since $E_1>0$, some type must nevertheless be willing to put positive probability on positive effort, so the slope must be zero for at least one type. Because the slope is strictly decreasing in $\theta$, this can happen only at the lowest-cost type, and necessarily
\[
c-aE_2=\alpha.
\]
Thus every type $\theta>\alpha$ strictly prefers $0$, while type $\alpha$ is indifferent over all efforts. So the only non-cutoff possibility is a boundary equilibrium supported entirely on the atom at $\alpha$.

Let $m:=F(\{\alpha\})$. In such a boundary equilibrium, if $Z$ denotes the effort randomized by type $\alpha$, then necessarily
\[
m\,\mathrm{E}[Z]=\frac{b}{a},
\qquad
m\,\mathrm{E}[Z^2]=\frac{c-\alpha}{a}.
\]
By Jensen's inequality,
\[
\mathrm{E}[Z^2]\geq \mathrm{E}[Z]^2,
\]
so a necessary condition is
\[
am(c-\alpha)\geq b^2.
\]
Conversely, if this inequality holds, define
\[
x_H:=\frac{c-\alpha}{b},
\qquad
p:=\frac{b^2}{am(c-\alpha)}\leq 1.
\]
Let every type $\theta>\alpha$ choose $0$, and let type $\alpha$ mix between $0$ and $x_H$ with probabilities $1-p$ and $p$. Then
\[
E_1=mpx_H=\frac{b}{a},
\qquad
E_2=mpx_H^2=\frac{c-\alpha}{a}.
\]
Hence type $\alpha$ is indifferent over all efforts, while every type $\theta>\alpha$ has slope
\[
c-\theta-aE_2=\alpha-\theta<0,
\]
so $0$ is uniquely optimal. This yields the symmetric Bayesian Nash equilibrium in the exceptional boundary case stated in the proposition.

At the knife-edge $am(c-\alpha)=b^2$, this construction has $p=1$, so type $\alpha$ plays $x_H=(c-\alpha)/b$ purely. Thus the equality case is already absorbed by the boundary equilibrium and need not be analyzed through the cutoff-affine existence argument.

It remains to prove existence in the cutoff-affine class whenever the boundary case is unavailable. The strict-concavity fixed-point system is the object to solve. For $a=0$, existence is given by Remark \ref{rem:bn-zero}. If $a<0$, Theorem \ref{thm:bn-unrestricted} gives a unique affine equilibrium of the affine relaxation, and at that profile
\[
x(\theta)=\frac{c-\theta-aE_2}{2(b-aE_1)}.
\]
Because $a<0$, $E_2\geq 0$, and $\theta\leq \beta<c$, the numerator is strictly positive for every type, while the denominator is also strictly positive. Hence $x(\theta)>0$ on the whole support, so the affine-relaxation equilibrium is already feasible in the game with $x\geq 0$.

Now suppose $a>0$ and
\[
am(c-\alpha)<b^2,
\]
so the boundary construction above is unavailable. Define
\[
H(t):=2(c-t)A(t)+B(t)-2b\sqrt{\frac{(c-t)B(t)}{a}},
\qquad t\in[\alpha,c].
\]
Since the type support is $[\alpha,\beta]\subset(0,c)$, the maps $A$ and $B$ are continuous on $[\alpha,c]$, with $B(t)>0$ for every $t>\alpha$, so $H$ is continuous on $[\alpha,c]$.

If $m=0$, then by Cauchy--Schwarz,
\[
A(t)^2\leq B(t)\Pr(\theta<t),
\]
and because $F$ has no atom at $\alpha$, we have $\Pr(\theta<t)\to 0$ as $t\downarrow \alpha$. Hence
\[
\frac{A(t)}{\sqrt{B(t)}}\to 0,
\]
and therefore
\[
\frac{H(t)}{\sqrt{B(t)}}
=2(c-t)\frac{A(t)}{\sqrt{B(t)}}+\sqrt{B(t)}-2b\sqrt{\frac{c-t}{a}}
\longrightarrow -2b\sqrt{\frac{c-\alpha}{a}}<0
\qquad\text{as }t\downarrow \alpha.
\]
So $H(t)<0$ for all $t$ sufficiently close to $\alpha$.

If $m>0$, write $t=\alpha+s$ with $s\downarrow 0$. Then
\[
A(\alpha+s)=ms+\int_{(\alpha,\alpha+s)} (\alpha+s-\theta)\,dF(\theta)=ms+o(s),
\]
because the integral is bounded by $s\Pr(\alpha<\theta<\alpha+s)=o(s)$. Likewise,
\[
B(\alpha+s)=ms^2+\int_{(\alpha,\alpha+s)} (\alpha+s-\theta)^2\,dF(\theta)=ms^2+o(s^2).
\]
Hence
\[
\frac{H(\alpha+s)}{s}
\longrightarrow 2m(c-\alpha)-2b\sqrt{\frac{m(c-\alpha)}{a}}<0,
\]
where the strict inequality is exactly the condition $am(c-\alpha)<b^2$. So again $H(t)<0$ for all $t$ sufficiently close to $\alpha$.

At the upper endpoint,
\[
H(c)=B(c)>0.
\]
By continuity, there exists $t^*\in(\alpha,c)$ such that $H(t^*)=0$. Define
\[
\lambda^*:=\sqrt{\frac{c-t^*}{aB(t^*)}}>0.
\]
Then
\[
t^*=c-aB(t^*)(\lambda^*)^2,
\]
and, because $B(t^*)>0$,
\[
2aA(t^*)(\lambda^*)^2-2b\lambda^*+1
=\frac{H(t^*)}{B(t^*)}=0.
\]
Thus $(\lambda^*,t^*)$ solves the fixed-point system. Moreover,
\[
a\lambda^*A(t^*)=b-\frac{1}{2\lambda^*}<b,
\]
so the strict concavity condition holds. The converse part already proved therefore shows that this pair generates a symmetric cutoff-affine Bayesian Nash equilibrium.

Finally, if $t\geq \beta$, then $(t-\theta)_+=t-\theta$ for every type on the support, so the strategy is globally affine and coincides with the affine benchmark from Theorem \ref{thm:bn-unrestricted}. If $t<\beta$, then exactly the types $\theta\geq t$ choose zero. Moreover, if $a<0$, then $t=c-aE_2\geq c>\beta$, so the restriction cannot bind under empowerment.
\end{proof}

\begin{proof}{Detailed proof of Proposition}{\ref{prop:bn-dropout}}
Because the prior is atomless, the exceptional boundary case from the proof of Proposition \ref{prop:bn-nonnegative} cannot arise. Under empowerment ($a<0$), the last part of that proof already shows that the equilibrium is fully active, so dropout can occur only for sufficiently large positive $a$.

Consider therefore the partially active branch with $a>0$ and $t<\beta$. Using the fixed-point system from the proof of Proposition \ref{prop:bn-nonnegative}, define
\[
A(t):=\mathrm{E}[(t-\theta)_+],
\qquad
B(t):=\mathrm{E}[(t-\theta)_+^2],
\qquad
D(t):=B(t)+2(c-t)A(t).
\]
Since the density is continuously differentiable, the maps $A$ and $B$ are continuously differentiable on $(\alpha,\beta)$, with
\[
A'(t)=F(t),
\qquad
B'(t)=2A(t).
\]
Solving the fixed-point equations for $\lambda$ and $a$ gives
\[
\lambda(t)=\frac{D(t)}{2bB(t)},
\qquad
a(t)=\frac{4b^2(c-t)B(t)}{D(t)^2}.
\]
Differentiating and simplifying yields
\[
a'(t)=-\frac{4b^2\Bigl(B(t)^2+4(c-t)^2\bigl(F(t)B(t)-A(t)^2\bigr)\Bigr)}{D(t)^3}.
\]
By Cauchy--Schwarz,
\[
A(t)^2\leq F(t)B(t).
\]
Because the prior is atomless with support $[\alpha,\beta]$, one has $B(t)>0$ for every $t\in(\alpha,\beta)$. Hence the first term in the numerator of $a'(t)$ is already strictly positive, while the Cauchy--Schwarz term is weakly nonnegative. Therefore $a'(t)<0$ throughout $(\alpha,\beta)$.

Now define
\[
a_D(F):=a(\beta)>0.
\]
To study the lower endpoint, write $s:=t-\alpha$. For $t$ sufficiently close to $\alpha$, one has $c-t\geq (c-\alpha)/2$, and since $0\leq (t-\theta)_+\leq s$,
\[
B(t)\leq sA(t).
\]
Therefore
\[
D(t)=B(t)+2(c-t)A(t)\leq sA(t)+2(c-t)A(t)\leq 2(c-\alpha)A(t),
\]
so
\[
a(t)=\frac{4b^2(c-t)B(t)}{D(t)^2}
\geq
\frac{b^2}{2(c-\alpha)}\frac{B(t)}{A(t)^2}.
\]
By Cauchy--Schwarz, $B(t)/A(t)^2\geq 1/F(t)$. Since $F(t)\downarrow 0$ as $t\downarrow \alpha$, it follows that $a(t)\to\infty$. So $a(t)$ maps $(\alpha,\beta]$ bijectively onto $[a_D(F),\infty)$ and admits a continuously differentiable inverse $t=t(a)$ on $(a_D(F),\infty)$, with
\[
t'(a)=\frac{1}{a'(t(a))}<0.
\]
If $a\leq a_D(F)$, no partially active cutoff $t<\beta$ can solve the fixed-point system, so the equilibrium from Proposition \ref{prop:bn-nonnegative} must be fully active.

Finally, on the dropout region,
\[
\delta'(a)=-f(t(a))t'(a)\geq 0.
\]
Because $t(a)$ is strictly decreasing and $F$ is strictly increasing on $[\alpha,\beta]$, the dropout rate $\delta(a)=1-F(t(a))$ is strictly increasing. Since $t(a)\downarrow \alpha$ as $a\to\infty$ and the prior has no atom at $\alpha$, one has $\delta(a)\uparrow 1$.
\end{proof}

\begin{proof}{Detailed proof of Remark}{\ref{rem:bn-dropout}}
Proposition \ref{prop:bn-nonnegative} gives the relevant equilibrium structure. Outside the exceptional boundary case, every equilibrium has
\[
x(\theta)=\lambda(t-\theta)_+.
\]
Thus the cutoff $t$ separates active types from inactive types: if $\theta<t$, then $x(\theta)>0$, while if $\theta\geq t$, then $x(\theta)=0$.

If $a<0$, the final part of the proof of Proposition \ref{prop:bn-nonnegative} shows that the affine relaxation is already positive on the whole support. Equivalently, the cutoff satisfies $t>\beta$. Hence the nonnegative-effort equilibrium coincides with the affine benchmark. The same conclusion holds under suppression whenever $t\geq\beta$, because then $(t-\theta)_+=t-\theta$ for every type in $[\alpha,\beta]$. In these fully active cases, the expected-effort and disclosure comparative statics from the affine benchmark apply directly, subject in the literal truncated contest to the separate support checks stated later.

If instead $t<\beta$, then types above the cutoff choose zero and the equilibrium moments are
\[
E_1=\lambda\mathrm{E}[(t-\theta)_+],
\qquad
E_2=\lambda^2\mathrm{E}[(t-\theta)_+^2].
\]
These are lower partial moments determined jointly with the cutoff, not the raw moments $M_1$ and $M_2$ entering the affine benchmark. Therefore the clean affine formulas and their variance-only comparative statics need not extend without further analysis. Proposition \ref{prop:bn-dropout} nevertheless characterizes the region by giving the monotone cutoff. Appendix Remark \ref{rem:dropout-calibration} records the corresponding lower-partial-moment formulas explicitly. Finally, when there is an atom at $\alpha$, Proposition \ref{prop:bn-nonnegative} identifies the only possible non-cutoff boundary equilibrium: all types above $\alpha$ choose zero, while type $\alpha$ is the only active type and may mix on $\{0,(c-\alpha)/b\}$.
\end{proof}

\begin{proof}{Detailed proof of Remark}{\ref{rem:bn-dropout-peak}}
Let $b=6$, $\alpha=1$, $\beta=2$, and $c=2.002930$. Write $\theta=1+Z$, and let $Z$ have a three-component beta-mixture density on $[0,1]$. The mixture weights and beta parameters are
\[
(w_1,w_2,w_3)=(0.368754,0.589342,0.041904),
\]
and
\[
\begin{aligned}
(p_1,q_1)&=(13.8700,151.8686),\\
(p_2,q_2)&=(90.9045,74.5338),\\
(p_3,q_3)&=(56.2754,23.4295).
\end{aligned}
\]
For $s\in(0,1)$ and $r=0,1,2$, write
\[
I_{jr}(s):=\frac{\mathrm{B}_{s}(p_j+r,q_j)}{\mathrm{B}(p_j,q_j)},
\]
where $\mathrm{B}_{s}$ is the incomplete beta integral. Thus $I_{j0}(s)$ is the beta distribution function and $I_{jr}(s)=\mathrm{E}[Z_j^r\mathbf 1\{Z_j\leq s\}]$ for $r=1,2$.

On the partially active branch set $s:=t-1$. The lower partial moments and active mass are
\[
F(s)=\sum_j w_j I_{j0}(s),
\qquad
A(s)=\sum_j w_j\bigl(sI_{j0}(s)-I_{j1}(s)\bigr),
\]
and
\[
B(s)=\sum_j w_j\bigl(s^2I_{j0}(s)-2sI_{j1}(s)+I_{j2}(s)\bigr).
\]
Let
\[
C(s):=c-1-s,
\qquad
D(s):=B(s)+2C(s)A(s).
\]
The cutoff equations from Proposition \ref{prop:bn-dropout} become
\[
a(s)=\frac{4b^2C(s)B(s)}{D(s)^2},
\qquad
\lambda(s)=\frac{D(s)}{2bB(s)},
\qquad
E_1(s)=\lambda(s)A(s)=\frac{A(s)D(s)}{2bB(s)}.
\]
Thus, relative to the normalized $b=1$ calculation, the roots in $s$ are unchanged, the corresponding $a$-values are multiplied by $b^2=36$, and expected-effort values are divided by $b=6$.
The ordinary moments of the mixture are
\[
M_1=1.3842754657,
\qquad
\sigma_\theta^2=0.0548949349,
\qquad
\Delta=c-M_1=0.6186545343.
\]
Substitution in the affine formula \eqref{eq:bn-e1-a}, together with the critical-point equation from Theorem \ref{thm:bn-a-peak}, gives
\[
a^\dagger=46.1716007697,
\qquad
E_1(a^\dagger)=0.0605036314.
\]
At the full-activity boundary $s=1$, the same partial-moment formulas give
\[
a_D(F)=a(1)=0.9561754615,
\qquad
E_1(a_D(F))=0.0517369455.
\]
Thus dropout begins well before the affine benchmark reaches its peak.

It remains to verify the shape of the partially active path. Since
\[
A'(s)=F(s),
\qquad
B'(s)=2A(s),
\qquad
D'(s)=2C(s)F(s),
\]
differentiating $E_1(s)=A(s)D(s)/(2bB(s))$ gives
\[
\frac{E_1'(s)}{E_1(s)}
=R(s)
:=\frac{F(s)}{A(s)}+\frac{2C(s)F(s)}{D(s)}-\frac{2A(s)}{B(s)}.
\]
Proposition \ref{prop:bn-dropout} gives $a'(s)<0$, so local maxima and minima as functions of $a$ occur at the same zeros of $R$, with the order reversed because larger $a$ means smaller $s$.

Evaluating the explicit incomplete-beta expressions above, the equation $R(s)=0$ has the following three roots on $(0,1)$:
\[
\begin{array}{c|c|c|c}
s & a(s) & E_1(s) & F(s) \\\hline
0.7703043128 & 46.4315598214 & 0.0605228910 & 0.9958450796 \\
0.5036802273 & 95.8579278830 & 0.0441288144 & 0.4385874357 \\
0.1977030705 & 108.9683767183 & 0.0513013245 & 0.3687472884
\end{array}
\]
The sign pattern is
\[
R(s)>0 \text{ on } (0,0.197703),\quad
R(s)<0 \text{ on } (0.197703,0.503680),
\]
\[
R(s)>0 \text{ on } (0.503680,0.770304),\quad
R(s)<0 \text{ on } (0.770304,1).
\]
Since increasing $a$ moves $s$ downward, the first and third rows are local maxima of expected effort as a function of $a$, while the middle row is a local minimum. Rounding these values gives the two peaks and intervening trough reported in the remark. Because $F(s)$ is the active mass, the last column also verifies that the second rise is an unconditional-effort effect rather than a conditioning artifact.

Finally, the three displayed profiles are equilibria of the literal truncated contest, not merely of the unrestricted polynomial game. At the three rows of the table, the second-order condition $aE_1<b$ holds with
\[
aE_1=2.8102,\qquad 4.2301,\qquad 5.5902,
\]
respectively, all strictly below $b=6$. Hence the untruncated interim payoff is strictly concave on the active part of each cutoff-affine profile. On the corresponding equilibrium supports, the raw winning probabilities are interior: their ranges are, respectively,
\[
[0.3456,0.6544],\qquad [0.3365,0.6635],\qquad [0.1341,0.8659].
\]
Thus truncation is inactive on path. For off-path deviations, we numerically maximize the clipped interim payoff over $\tilde x\in[0,1/\alpha]$ for each type; deviations above $1/\alpha$ are dominated by zero effort. The maximum clipped-payoff gains at the three rows are below $10^{-8}$, and deviations that push the raw probability above one are only penalized by upper clipping. Hence the reported stationary points survive the literal truncated contest to numerical tolerance.
\end{proof}

\begin{proof}{Detailed proof of Proposition}{\ref{prop:bn-truncation-detailed}}
Let $L:=1/\alpha$, and let $Y=x^u(\theta')$ denote the opponent's effort. Then the support of $Y$ is exactly $[\underline x,\overline x]\subset [0,L]$.

Fix $y\in[\underline x,\overline x]$. As a function of the deviation $\tilde x$, the raw probability is
\[
P(\tilde x,y)=\frac{1}{2}-cy+by^2+\tilde x\bigl(c-ay^2\bigr)+\tilde x^2(ay-b).
\]
Because $y\leq \overline x\leq L$ and $a\leq b\alpha=b/L$, one has
\[
ay-b\leq aL-b\leq 0.
\]
So $P(\tilde x,y)$ is concave in $\tilde x$ on $[0,L]$, and therefore its minimum on that interval is attained at one of the endpoints.

At the left endpoint,
\[
P(0,y)=\frac{1}{2}-cy+by^2\geq 0
\qquad\text{for all }y\in[\underline x,\overline x]
\]
by assumption. At the right endpoint, write $z:=L-y\geq 0$. Then
\[
P(L,y)=\frac{1}{2}+z\bigl(c-2bL+aL^2\bigr)+z^2(b-aL).
\]
The inequality $a\leq b\alpha$ is equivalent to $b-aL\geq 0$, while
\[
c\alpha^2-2b\alpha+a\geq 0
\qquad\Longleftrightarrow\qquad
c-2bL+aL^2\geq 0.
\]
Hence $P(L,y)\geq \nicefrac{1}{2}\geq 0$ for every $y\in[\underline x,\overline x]$. Therefore
\[
P(\tilde x,y)\geq 0
\qquad\text{for every }(\tilde x,y)\in[0,L]\times[\underline x,\overline x].
\]
Since $x^u(\theta)\in[\underline x,\overline x]\subset[0,1/\alpha]$, this gives, for every type $\theta$ and every on-support opponent effort $y$,
\[
P\bigl(x^{u}(\theta),y\bigr)\geq 0
\qquad\text{and}\qquad
P\bigl(y,x^{u}(\theta)\bigr)\geq 0.
\]
Using $P(x,y)=1-P(y,x)$, we obtain
\[
0\leq P\bigl(x^{u}(\theta),y\bigr)\leq 1.
\]
Hence truncation is inactive at the candidate strategy, and the truncated and untruncated interim utilities coincide there.

Write $\bar U(\theta,\tilde x)$ and $U(\theta,\tilde x)$ for truncated and untruncated interim utility. Consider any deviation $\tilde x\geq 0$. If $\tilde x>1/\alpha$, then
\[
\bar U(\theta,\tilde x)\leq 1-\theta\tilde x<1-\alpha\tilde x<0.
\]
By contrast, deviating to zero yields
\[
\bar U(\theta,0)=\mathrm{E}[\bar P(0,Y)]\geq 0.
\]
Therefore no deviation with $\tilde x>1/\alpha$ can be profitable.

Now let $\tilde x\in[0,1/\alpha]$. By the support condition established above, $P(\tilde x,Y)\geq 0$ almost surely, so truncation can only weakly lower the payoff from this deviation:
\[
\bar U(\theta,\tilde x)\leq U(\theta,\tilde x).
\]
Since $x^{u}(\theta)$ is the untruncated best response and truncation is inactive at that action,
\[
\bar U(\theta,\tilde x)
\leq U(\theta,\tilde x)
\leq U\bigl(\theta,x^{u}(\theta)\bigr)
=\bar U\bigl(\theta,x^{u}(\theta)\bigr).
\]
So $x^{u}(\theta)$ is optimal in the truncated contest for every type $\theta$, and the affine profile is a symmetric Bayesian Nash equilibrium.

For the explicit formula, note that the left-edge function
\[
y\longmapsto P(0,y)=\frac{1}{2}-cy+by^2
\]
is convex, with derivative $-c+2by$. Hence its minimizer is $y=c/(2b)$. If $c\leq 2b\underline x$, this minimizer lies weakly to the left of the interval and the minimum is attained at $\underline x$. If $2b\underline x\leq c\leq 2b\overline x$, the minimizer lies inside the interval and the minimum value is $\nicefrac{1}{2}-c^2/(4b)$. If $c\geq 2b\overline x$, the minimizer lies weakly to the right of the interval and the minimum is attained at $\overline x$.
\end{proof}

\begin{proof}{Detailed proof of Proposition}{\ref{prop:bn-truncation}}
This is immediate from Proposition \ref{prop:bn-truncation-detailed}. 

The additional condition $c^2\leq 2b$ makes the left-edge criterion there automatic, because
\[
P(0,y)=\frac{1}{2}-cy+by^2
\]
has non-positive discriminant and is therefore nonnegative for every $y\in\mathbb{R}$.
\end{proof}

\begin{proof}{Detailed proof of Proposition}{\ref{prop:bn-disclosure-sufficient}}
Fix any feasible signal $\pi\in\mathcal S$. By condition 1, the nonnegative-effort equilibrium for the induced posterior-mean contest coincides with the affine benchmark. By condition 2, truncation is inactive on the induced equilibrium support, so the literal contest and the affine benchmark deliver the same equilibrium effort for that signal. Since this is true for every feasible $\pi$, the expected-effort ranking from Theorem \ref{thm:bn-disclosure} carries over unchanged to the literal contest. The final sentence records which conditions are automatic: full activity under empowerment, full activity at $a=0$, and neither condition under suppression. In all cases where truncation is not automatic, it must still be checked signal by signal.
\end{proof}

\end{document}